\documentclass[aps,pra,twocolumn,groupedaddress,nofootinbib,notitlepage,showpacs,floatfix]{revtex4-1}

\let\oldsqrt\sqrt
\def\sqrt{\mathpalette\DHLhksqrt}
\def\DHLhksqrt#1#2{%
\setbox0=\hbox{$#1\oldsqrt{#2\,}$}\dimen0=\ht0
\advance\dimen0-0.2\ht0
\setbox2=\hbox{\vrule height\ht0 depth -\dimen0}%
{\box0\lower0.4pt\box2}}

\usepackage{bm,bbm,times,amssymb,amsmath,amsthm,natbib,mathtools,multirow,epsf,latexsym,setspace,array}

\usepackage[normalem]{ulem}
\usepackage{graphicx,graphics}
\usepackage{mathrsfs}
\usepackage{wrapfig}
\usepackage{boxedminipage}
\usepackage{epsfig}
\usepackage{setspace}
\usepackage{subfigure}
\usepackage{hyperref}
\usepackage{dsfont}
\usepackage[all]{xy}
\usepackage[pdftex]{color}
\usepackage{dsfont}

\newtheorem{mytheorem}{Theorem}
\newtheorem{mylemma}{Lemma}

\newcommand{\ignore}[1]{}
\newcommand{\jmdstack}[2]{\genfrac{}{}{0pt}{}{#1}{#2}}
\newcommand{\rmd}{\mathrm{d}}
\newcommand{\identity}{\mathds{1}}
\DeclareMathOperator{\Tr}{Tr}

\DeclareMathOperator{\Ad}{Ad}
\DeclareMathOperator{\sech}{sech}

\DeclareMathOperator*{\esssup}{ess\;sup}
\newcommand{\linOps}{\mathcal{B}(\mathcal{H})}
\newcommand{\HermOps}{\mathcal{M}(\mathcal{H})}

\usepackage{ulem}

\newcommand{\bea} {\begin{eqnarray}}
\newcommand{\eea} {\end{eqnarray}}
\newcommand{\bes} {\begin{subequations}}
\newcommand{\ees} {\end{subequations}}
\newcommand{\bal} {\begin{align}}
\newcommand{\eal} {\end{align}}
\newcommand{\beq}{\begin{equation}}
\newcommand{\eneq}{\end{equation}}
\newcommand{\beqnn}{\begin{equation*}}
\newcommand{\eneqnn}{\end{equation*}}
\newcommand{\beqy}{\begin{eqnarray}}
\newcommand{\eneqy}{\end{eqnarray}}
\newcommand{\beqynn}{\begin{eqnarray*}}
\newcommand{\eneqynn}{\end{eqnarray*}}

\newcommand{\mc}[1]{\mathcal{#1}}

\begin{document}

\title{Analysis of the quantum Zeno effect for quantum control and computation}
\author{Jason M. Dominy$^{(1,4)}$, Gerardo~A. Paz-Silva$^{(1,4)}$, A. T. Rezakhani$^{(1,4,5)}$, and D.
A. Lidar$^{(1,2,3,4)}$}
\affiliation{Departments of $^{(1)}$Chemistry, $^{(2)} $Physics, and $^{(3)}$Electrical
Engineering, and $^{(4)}$Center for Quantum Information Science \&
Technology, University of Southern California, Los Angeles, California
90089, USA\\
$^{(5)}$Department of Physics, Sharif University of Technology, Tehran, Iran }

\begin{abstract}
Within quantum information, many methods have been proposed to avoid or correct the deleterious effects of the environment on a system of interest.  In this work, {expanding on our earlier paper \cite{PRDL:12}}, we evaluate the applicability of the quantum Zeno effect as one such method.  Using the algebraic structure of stabilizer quantum error correction codes as a unifying framework, two open-loop protocols are described which involve frequent non-projective (i.e., weak) measurement of either the full stabilizer group or a minimal generating set thereof.  The effectiveness of the protocols is measured by the distance between the final state under the protocol and the final state of an idealized evolution in which system and environment do not interact.  Rigorous bounds on this metric are derived which demonstrate that, under certain assumptions, a Zeno effect may be realized with arbitrarily weak measurements, and that this effect can protect an arbitrary, unknown encoded state against the environment arbitrarily well.
\end{abstract}

\pacs{03.67.-a, 03.65.Xp, 03.67.Pp, 03.65.Yz}
\maketitle
\affiliation{$^{(1)}$Departments of Chemistry, $^{(2)} $Physics, and $^{(3)}$Electrical
Engineering, and $^{(4)}$Center for Quantum Information Science \&
Technology, University of Southern California, Los Angeles, California
90089, USA\\
$^{(5)}$Department of Physics, Sharif University of Technology, Tehran, Iran }


\section{Introduction}
\label{sec:introduction}

Decoherence of a quantum system of interest through interaction with an uncontrolled environment is a key obstacle to realizing practical quantum information processors.  A number of methods have been proposed to help deal with this problem, including error avoidance methods such as decoherence free subspaces \cite{Palma:96, Zanardi:97c,Lidar:PRL98}, closed-loop {suppression} methods such as quantum error correction (QEC) \cite{Shor:1995:R2493, Knill:97b,Gottesman:96}, and open-loop suppression methods such as dynamical decoupling (DD) \cite{Viola:98,Zanardi:1999:77,Viola:99} and the quantum Zeno effect (QZE) \cite{Misra:77,Itano:90,Facchi:05, Facchi:08}.  The typical setting for the QZE is a sequence of frequent projective measurements of an observable.  When the frequency of measurements is high enough, this has the effect of forcing the evolution to remain within the eigenspaces of the observable.  With appropriate choices of the observable, the QZE can be exploited to decouple the system from the environment \cite{Facchi:PRL02, Zanardi:1999:77}.  

In spite of the fact that one method uses fast unitary operations while the other uses frequent measurements, there is a conceptual similarity between DD and QZE protection of quantum states, in that both are feedback-free methods. Indeed, in the ``bang-bang" limit of arbitrarily strong and fast pulses or measurements, it has been shown that DD and the QZE are formally
equivalent \cite{Facchi:03,Facchi:05}.
However, strong projective measurements are an idealization of more realistic, generalized measurements, and likewise, real dynamical decoupling pulses are subject to constraints of finite bandwidth. It has been shown that DD can work to suppress decoherence while allowing for universal quantum computation even with pulses constrained by finite width and repetition rate \cite{Khodjasteh:2008:062310,Khodjasteh:2010:090501,PhysRevLett.105.230503}. Until recently, an analogous result was lacking for the QZE. 

In our earlier paper \cite{PRDL:12} we showed how the assumption of projective measurements can be relaxed and replaced with weak, non-projective, measurements.  Weak measurements extract less information from the system than the corresponding projective measurements, and consequently do not fully collapse the state \cite{PhysRevA.41.11,brun:719}.  
In the present work, significantly expanding on our earlier paper \cite{PRDL:12}, we analyze protocols for realizing a QZE using frequent, weak measurements that are also non-selective, meaning that outcomes are not recorded, with the effect that the state after measurement is the ensemble average over all of the possible outcomes. We will show that these protocols can be used to protect arbitrary, unknown states encoded within a stabilizer QEC code, arbitrarily well. This will be referred to as the weak measurement quantum Zeno effect (WMQZE). Since our protocols involve measuring the stabilizers of QEC codes---a capability that is taken for granted in QEC theory \cite{Nielsen:book,Gaitan:2008:CRC}---but we do not assume that we can observe or use the measured syndrome, our assumptions are weaker than those of QEC, and hence the ability to perform QEC implies the ability to perform our protocols. 

Weak measurements are in some sense the analog of finite bandwidth DD and are both more realistic and more general than strong, projective measurements. They capture a large variety of experimental imperfections and uncertainties \cite{Oreshkov:2005:110409}. However, here, as in our earlier analysis, the measurements are treated as instantaneous, and a generalization to measurements of finite duration is still lacking.

The paper is organized as follows.  Section~\ref{sec:back} gives essential background on weak measurements, reviews the WMQZE protocols introduced in Ref.~\cite{PRDL:12}, and states the main result we prove in the paper: Theorem~\ref{Th:1}, a distance bound quantifying the performance of the protocols. Section~\ref{sec:stabStructure} describes some algebraic structures associated with stabilizer quantum error correction codes and the behavior of weak measurements of the stabilizer elements with respect to these structures.  Section~\ref{sec:boundAnalysis} is concerned with proving Theorem~\ref{Th:1}.  In Section~\ref{sec:parameterTradeOffs}, some trade-offs are considered, between the number of measurements $M$ and the final time $\tau$, as well as between $M$ and the measurement strength $\epsilon$.  Conclusions are presented in Section~\ref{sec:conclusion}.  Several appendices offer additional supporting mathematical details.


\section{Background and statement of main result}
\label{sec:back}

\subsection{Weak Measurements}
Consider a system with Hilbert space $\mathcal{H}_{S}$ coupled to a bath with Hilbert space $\mathcal{H}_B$.  The Hilbert space of the system-bath composite is denoted $\mathcal{H} = \mathcal{H}_{S}\otimes\mathcal{H}_{B}$.  Let $\linOps$ denote the space of bounded linear operators on $\mathcal{H}$.  A measurement of the system can generally be expressed as a \textit{positive operator-valued measure} (POVM), comprising a set of ``measurement operators" $\{M_{j}{\in{\mathcal{B}(\mathcal{H}_{S})\otimes\identity\subset\linOps}}\}$ acting on the system and satisfying the sum rule $\sum_{j}M_{j}^{\dag}M_{j} = \identity$, which map a state $\varrho$ to $\varrho_{j} = M_{j}\varrho M_{j}^{\dag}/p_{j}$ with probability $p_{j} = \Tr(M_{j}\varrho M_{j}^{\dag})$.  The observables that are to be measured in the encoded WMQZE protocols will be elements of the stabilizer group of a given quantum error correcting code (QECC).  As such, they are unitary involutions, i.e., unitary operators $S\in\mathrm{U}(\mathcal{H}_{S})\otimes\identity\subset\mathrm{U}(\mathcal{H})$ such that $S^{2} = \identity$, and therefore have only two possible outcomes (eigenvalues): $\pm 1$.  The weak measurement of such an observable on a state $\varrho$ may be parametrized by the measurement strength $\epsilon$ as \cite{Oreshkov:2005:110409} 
\begin{equation}
\mathcal{P}_{S,\epsilon}(\varrho) = P_{S}(\epsilon)\varrho P_{S}(\epsilon) + P_{S}(-\epsilon)\varrho P_{S}(-\epsilon)\label{eqn:genWeakPOVM}
\end{equation}
where 
\begin{subequations}
\begin{align}
\label{eq:P_S(eps)}
	P_{S}(\epsilon) & :=\alpha_{+}(\epsilon)P_{S} + \alpha_{-}(\epsilon)P_{-S},\\
	\alpha_{\pm}(\epsilon) &= \sqrt{(1\pm\tanh(\epsilon))/2},	
\end{align}
\end{subequations}
and 
\bea
P_{\pm S} := \frac{1}{2}(\identity \pm S)
\eea 
are orthogonal projections onto the $\pm 1$ eigenspaces of $S$.  Since $P_{S}(\epsilon)^{2} + P_{S}(-\epsilon)^{2} = \identity$, the sum rule is satisfied, and $\mathcal{P}_{S,\epsilon}$ is a well-defined POVM for every $\epsilon\in(0,\infty)$.  Moreover, $\mathcal{P}_{S,\epsilon}$ describes a parametrized curve through POVM space that interpolates between projective measurement of $S$ at $\epsilon = \infty$, and no measurement at $\epsilon = 0$.  These weak, non-selective measurements will form the building blocks of the WMQZE protocols to be described herein.  It may also be observed that this 2-term POVM is unitarily equivalent \cite{Nielsen:book} to the 3-term POVM with measurement operators 
\begin{equation}
	M_{1} = \frac{\sqrt{1-\zeta}}{2}(\identity + S),\ M_{2} = \frac{\sqrt{1-\zeta}}{2}(\identity - S),\ M_{3} = \sqrt{\zeta}\identity
\end{equation}
where $\zeta := 2\alpha_{+}(\epsilon)\alpha_{-}(\epsilon) = \sech(\epsilon)$.
This three term POVM may be interpreted as a measurement with a particularly simple classical error, in which, with probability $\zeta$, no measurement takes place, and with probability $1-\zeta$ a projective measurement of $S$ is performed.  It is a completely equivalent description of the non-selective weak measurement $\mathcal{P}_{S,\epsilon}$, although the \textit{selective} measurements corresponding to these two POVMs are not equivalent.

\subsection{The WMQZE Protocols}

Previous WMQZE work applied mostly to particular states \cite{PhysRevLett.97.260402, Xiao2006424,PhysRevB.73.085317,gong:9984}, with some exceptions \cite{Koshino:05}.  In order to protect an arbitrary, unknown $k$-qubit state, as well as to facilitate the analysis that is to come and to allow this WMQZE method to dovetail easily with other protection schemes like QEC, we encode the state into an $[[n,k,d]]$ stabilizer quantum error correcting code (QECC) \cite{Gottesman:96,Nielsen:book}, with stabilizer group $\mathbf{S}=\{S_{i}\}_{i=0}^{Q}$, and where $S_{0}\equiv \identity$.  We assume that the code distance $d\geq 2$, i.e., the code is at least error-detecting, with minimal generating set $\mathbf{\bar{S}=\{}\bar{S}_{i}\}_{i=1}^{\bar{Q}}\subset \mathbf{S}$, where $\bar{Q}=n-k$.  Then every stabilizer element can be uniquely decomposed as $S_{i}=\prod_{\nu =1}^{\bar{Q}}\bar{S} _{\nu }^{r_{i\nu }}$, where ${r_{i\nu }}\in \{0,1\}$, i.e., the stabilizer elements are given by all possible products of the generators, whence $Q+1=2^{\bar{Q}}$.  The encoded initial state $\varrho_{0}$ commutes with all stabilizer elements, and so is supported on the simultaneous $+1$ eigenspace of all the elements of $\mathbf{S}$.  For a given measurement strength, a weak measurement operator $\mathcal{P}_{S,\epsilon}$ may be generated for each $S\in\mathbf{S}$ as in Eq.~\eqref{eqn:genWeakPOVM}.  Since the stabilizer group $\mathbf{S}$ is abelian, the measurements can be performed simultaneously, and simultaneous measurement of the full stabilizer group can be described by the POVM 
\begin{equation}
\label{eq:P-allS}
\mathcal{P}_{\epsilon} := \prod_{S\in\mathbf{S}}\mathcal{P}_{S,\epsilon}.
\end{equation} 
Similarly, we could measure just the generators $\mathbf{\bar{S}}$, whence we define 
\begin{equation}\bar{\mathcal{P}}_{\epsilon} := \prod_{\bar{S}\in\mathbf{\bar{S}}}\mathcal{P}_{\bar{S},\epsilon}.
\label{eq:P-genS}
\end{equation}  
Thus we can now define a \textit{weak stabilizer group measurement protocol} in which $M$ evenly-spaced measurements of the full group are performed over a time interval $[0,\tau]$.  The state of the system-environment composite then evolves as $\left({\mathcal{P}}_{\epsilon}{\mathcal{U}}(\tau /M)\right) ^{M}(\varrho_{SB})$,
\begin{align}
&\left({\mathcal{P}}_{\epsilon}{\mathcal{U}}(\tau /M)\right) ^{M} := \notag \\
& \quad \mathcal{P}_{\epsilon}\mathcal{U}\left(\tau,\tau_{M-1}\right)\mathcal{P}_{\epsilon}\mathcal{U}\left(\tau_{j-1},\tau_{j-2}\right)\cdots \mathcal{P}_{\epsilon}\mathcal{U}(\tau_1,0),
\label{eq:7}
\end{align}
where $\varrho_{SB}$ is the initial state of the system-bath composite, $\{\tau_j = j\tau/M\}_{j=1}^{M}$ are the instants at which the measurements are applied, $\mathcal{U}(t,t')$ is the unitary evolution superoperator of free evolution over $[t,t']$, {i.e., 
\bea
\mathcal{U}(t,t')(\cdot) =\mathcal{T}\exp\Big(\int_{t'}^t \mathcal{L}(s)ds\Big)(\cdot)= U(t,t')(\cdot) U^\dagger(t,t'),\notag \\
\label{eq:supU}
\eea
where $U(t,t')$ is the solution of the differential equation $\frac{d}{dt}{U}(t,t') = -i H(t){U}(t,t')$, with the boundary condition $U(t,t)=\identity$, and $\mathcal{T}$ denotes time-ordering. Here $H(t)$ is the Hamiltonian of the system-bath composite, i.e., $H(t)\in\mathcal{B}(\mathcal{H})$, and the superoperator generator is $\mathcal{L}(t) = -i[H(t),\cdot]$. Note that $\mathcal{P}_{\epsilon}$ in Eq.~\eqref{eq:7}, and more generally $\mathbf{S}$, has non-trivial action on the system only.}

In QEC one measures not the full stabilizer group, but rather its generators, in order to extract an error syndrome \cite{Gottesman:96}. It has been recognized that these syndrome measurements implement a QZE \cite{PhysRevA.54.R1745,PhysRevA.72.012306}.  A \textit{weak stabilizer generator measurement protocol} comprises $M$ evenly-spaced measurements of the generating set over a time interval $[0,\tau]$, so that the state evolves as $\left({\bar{\mathcal{P}}}_{\epsilon}{\mathcal{U}}(\tau /M)\right) ^{M}(\varrho_{0})$,
\begin{align}
&\left(\bar{{\mathcal{P}}}_{\epsilon}{\mathcal{U}}(\tau /M)\right) ^{M} := \notag \\
& \quad \bar{\mathcal{P}}_{\epsilon}\mathcal{U}\left(\tau,\tau_{M-1}\right)\bar{\mathcal{P}}_{\epsilon}\mathcal{U}\left(\tau_{j-1},\tau_{j-2}\right)\cdots \bar{\mathcal{P}}_{\epsilon}\mathcal{U}(\tau_1,0).
\end{align}
This can obviously be an important saving over a full stabilizer group measurement (${\mathcal{P}}_{\epsilon} $).  If the measurement is performed, e.g., by attaching an ancilla for each measured Pauli observable (as in a typical fault-tolerant QEC implementation \cite{Nielsen:book}), then this translates into an exponential saving in the number of such ancillas. We shall consider both protocols in our general development below.

\subsection{Distance Bounds}
To quantify the behavior of these protocols, we use a distance metric as the figure of merit:
\begin{equation}
D[\varrho_{S}(\tau),\varrho_{S}^{0}(\tau)] = \frac{1}{2}\lVert\varrho_{S}(\tau)  - \varrho_{S}^{0}(\tau)\rVert_1,
\label{eq:D}
\end{equation} 
where the norm is the trace norm (sum of the singular values), $\varrho_{S}(\tau) = \Tr_B[\varrho_{SB}(\tau)]$ is the reduced density matrix of the system at time $\tau$ after application of one of the two WMQZE protocols, and $\varrho_{S}^{0}(\tau)= \Tr_B[\varrho^0_{SB}(\tau)]$ is the final state of the system under the idealized circumstance that the system retains its internal Hamiltonian evolution, but does not interact with the environment.  We shall also require the {Schatten $\infty$} operator norm $\lVert\cdot\rVert$ (the maximal singular value). 

The principal results in this paper will be the proof and analysis of the following upper bound on $D[\varrho_{S}(\tau),\varrho_{S}^{0}(\tau)]$.
\begin{mytheorem}
\label{Th:1} 
Assume an arbitrary pure state $\varrho_S = |\psi _{S}\rangle \langle \psi_S|$ 
is encoded into an $[[n,k,d]]$ stabilizer QECC with stabilizer group $\mathbf{S}$. Assume further that the (possibly time-dependent) Hamiltonian $H = H_{S} + H_{B}+ H_{SB}$, where $H_{S}$ and $H_{B}$ represent the system and bath Hamiltonians and $H_{SB}$ the system-bath interaction, is such that $H_{\identity} = H_{S}+H_{B}$ commutes with the stabilizer (i.e., is a linear combination of stabilizer elements and logical operators) and that the interaction term $H_{SB}= \sum_{g\neq \identity}H_{g}$ is a linear combination of error terms detected by the code.  Let $J_{0} = 2\lVert H_{\identity}\rVert$ and $J_{1} = 2\lVert H_{SB}\rVert$, where $J_{0}$ and $J_{1}$ are assumed to be finite.  {Finally,} let $Q=2^{n-k}-1$ and $q=(Q+1)/2$.  Then the stabilizer group measurement protocol 
$\left({\mathcal{P}}_{\epsilon}{\mathcal{U}}(\tau /M)\right) ^{M}$ protects $\varrho_S$ 
up to a deviation that converges to $0$ in the large-$M$ limit:
\begin{align}
D&[\varrho_{S}(\tau), \varrho_{S}^{0}(\tau)]  \leq [1+\Gamma_{\identity}(M)\big]^{M}- \nonumber\\
	& \frac{\Gamma_{-}(M)}{\Gamma_{\identity}(M)}[1+\Gamma_{+}(M)\big]^{M} +  \Gamma_{g}(M)A_{+}(M)\gamma_{+}^{M-1}(M) +\notag \\
&\ \Gamma_{g}(M)A_{-}(M)\gamma_{-}^{M-1}(M)- e^{\tau J_{0}} =:\mathrm{B},
\label{eqn:fullBound}
\end{align}
where the bound $\mathrm{B}$ can be expanded in powers of $1/M$ as
\begin{align}
	& \mathrm{B}= \left[e^{\tau J_{0}}\left(\frac{Q\tau^{2} J_{1}^{2}}{4}\right) + e^{\tau J_{m}}\frac{Q\tau J_{1}}{2}\left(1 + \tau J_{m}\right)\frac{\zeta^{q}}{1-\zeta^{q}}\right]\frac{1}{M} \notag \\
&	+ O\left(\frac{1}{M^{2}}\right),
	\label{eqn:asymptoticbound}
\end{align}
where 
\begin{subequations}
\begin{align}
	\zeta & := \sech(\epsilon) \label{eqn:zeta}\\
	\beta(M) & := \begin{cases}\Gamma_{\identity}(M) & J_{0} \geq J_{1}\\
	\Gamma_{g}(M) & J_{0} \leq J_{1} \end{cases}
	\label{eqn:beta}	\\
	\Gamma_{\identity}(M) &:= \frac{1}{Q+1}e^{\frac{\tau J_{0}}{M}}\left(e^{\frac{\tau Q J_{1}}{M}} + Qe^{-\frac{\tau J_{1}}{M}}\right) - 1\\
	\Gamma_{g}(M)&: = \frac{1}{Q+1}e^{\frac{\tau J_{0}}{M}}\left(e^{\frac{\tau Q J_{1}}{M}} - e^{-\frac{\tau J_{1}}{M}}\right)\\
	\gamma_{\pm}(M) & := \frac{1}{2}\big(1+\beta + (1+Q\beta)\zeta^{q}\big) \pm \notag \\
&\quad \frac{1}{2}\sqrt{\big(1+\beta-(1+Q\beta)\zeta^{q}\big)^{2} + 4 Q \beta^{2}\zeta^{q}}
	\label{eqn:rootspm}\\
	A_{\pm}(M) & := \frac{Q\beta\zeta^{q}(\gamma_{\pm}+\beta) + (1+\beta)\big[(1+\beta)-\gamma_{\mp}\big]}{\beta(\gamma_{\pm}-\gamma_{\mp})}\\
	J_{m} & := \max\{J_{0},J_{1}\}\\
	\Gamma_{+} & := \begin{cases}\Gamma_{\identity}(M) & J_{0}\geq J_{1}\\\Gamma_{g}(M) & J_{0}\leq J_{1}\end{cases}\\  \Gamma_{-} &:=\begin{cases}{\Gamma_{g}(M)} & J_{0}\geq J_{1}\\\Gamma_{\identity}(M) & J_{0}\leq J_{1}\end{cases}
	\label{eqn:coeffspm}.
\end{align}
\end{subequations}
For the generator measurement protocol $\left({\overline{\mathcal{P}}}_{\epsilon }
{\mathcal{U}}(\tau /M)\right)^{M}$, replace 
$q$ by 1 {in Eqs.~(\ref{eqn:asymptoticbound}), (\ref{eqn:rootspm}), and (\ref{eqn:coeffspm}).  In the strong-measurement limit ($\epsilon\to\infty$), both protocols yield the distance bound 
\begin{equation}
D[\varrho _{S}(\tau ),\varrho _{S}^{0}(\tau )] \leq e^{J_0 \tau}\!\left[ \left( \frac{Q e^{-\frac{J_1\tau}{ M}}\!+e^{\frac{J_1 \tau Q }{M}}}{Q+1}\right)^M \!\!\!\! -1 \right].
\label{eq:strong}
\end{equation}}
\end{mytheorem}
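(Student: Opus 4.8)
The plan is to reduce the bound to a two-component linear recursion driven by an explicit $2\times2$ transfer matrix whose eigenvalues are the $\gamma_\pm(M)$ in the statement, and then sum that recursion in closed form. Since the partial trace cannot increase trace distance, it suffices to bound $\tfrac12\lVert\varrho_{SB}(\tau)-\varrho_{SB}^0(\tau)\rVert_1$. As $H_\identity$ commutes with $\mathbf S$ it preserves the code space, on which $\mathcal P_\epsilon$ acts as the identity; hence $\varrho_{SB}^0(\tau)$ is just $\mathcal U_\identity(\tau)$ applied to $\varrho_{SB}$, with $\mathcal U_\identity$ generated by $H_\identity$ alone and factoring as $\mathcal U_\identity(\tau/M)^M$, so the task is to compare $(\mathcal P_\epsilon\mathcal U(\tau/M))^M$ with $\mathcal U_\identity(\tau/M)^M$ on $\varrho_{SB}$. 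Using the projectors onto the joint eigenspaces of $\mathbf S$, split every operator into ``syndrome sectors''. Three structural facts (developed in Section~\ref{sec:stabStructure}) drive everything: (i) $\mathcal L_\identity=-i[H_\identity,\cdot]$ leaves each sector invariant; (ii) $\mathcal L_{SB}=-i[H_{SB},\cdot]$ only connects sectors whose labels differ by the nonzero syndrome of a detected error — in particular it takes the code sector strictly out of itself, and $P_0\mathcal L_{SB}$ vanishes on syndrome-diagonal operators, where $P_0$ is the pinching onto syndrome-diagonal blocks; and (iii) $\mathcal P_\epsilon=\zeta^q\,\mathrm{id}+(1-\zeta^q)P_0$, since by an elementary count each nonzero syndrome is flipped by exactly $q=(Q+1)/2$ of the $Q+1$ stabilizer elements (for the generator protocol each nonzero syndrome is flipped by at least one of the $\bar Q$ generators, so the syndrome-off-diagonal blocks are contracted by at most $\zeta$, whence $q\to1$).

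Next, write $\Delta_m$ for the difference of the two trajectories after $m$ steps and split $\Delta_m=\Delta_m^\identity+\Delta_m^g$ into its syndrome-diagonal and syndrome-off-diagonal parts; $\Delta_0=0$. Telescoping one step, the syndrome-off-diagonal part is contracted by $\zeta^q$ at the outset; the syndrome-diagonal part is not contracted, but its only fresh content each step is of second order in $\tau J_1/M$ — by fact (ii), $P_0\mathcal L_{SB}$ kills the first-order leakage return — which one bounds in worst case by a symmetric leakage model over all $Q+1$ sectors, producing the amplification factor $(Q+1)^{-1}(e^{\tau QJ_1/M}+Qe^{-\tau J_1/M})$ that, multiplied by the logical-growth factor $e^{\tau J_0/M}$, is exactly $1+\Gamma_\identity(M)$; the two parts cross-feed one another with coefficients of order $\Gamma_g(M)$, and the still-order-one code content of the actual state generates a source of order $\Gamma_g(M)$. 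Collecting these inequalities gives $\vec v_{m+1}\le T(M)\vec v_m+\vec w_m$ for $\vec v_m=(\lVert\Delta_m^\identity\rVert_1,\lVert\Delta_m^g\rVert_1)$, where $T(M)$ has trace $1+\beta+(1+Q\beta)\zeta^q$ and discriminant $(1+\beta-(1+Q\beta)\zeta^q)^2+4Q\beta^2\zeta^q$ — so its eigenvalues are precisely the $\gamma_\pm(M)$ of Eq.~\eqref{eqn:rootspm} — with $\beta$ chosen to be whichever of $\Gamma_\identity(M),\Gamma_g(M)$ dominates the cross term, which is exactly the case split of Eq.~\eqref{eqn:beta}, and with $\vec w_m$ itself dominated by the pure-growth channel $(1+\Gamma_+(M))^m$.

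Iterating then yields $\vec v_M\le\sum_{m=0}^{M-1}T(M)^{M-1-m}\vec w_m$, which is evaluated by diagonalizing $T(M)$: the eigenvector coefficients are the $A_\pm(M)$ of the statement; the eigendirection aligned with the relentlessly growing code content reassembles into $[1+\Gamma_\identity(M)]^M-\tfrac{\Gamma_-(M)}{\Gamma_\identity(M)}[1+\Gamma_+(M)]^M$, the genuinely damped directions give $\Gamma_g(M)A_\pm(M)\gamma_\pm^{M-1}(M)$, and subtracting $e^{\tau J_0}$ removes the part of that growth matched identically by the ideal reference, so that the bound vanishes as $M\to\infty$ (the leading code-growth is cancelled exactly and the remaining contributions are $O(1/M)$, some of them built as a product of a $\Gamma_g(M)\sim1/M$ prefactor with an $A_+(M)\sim M$ factor). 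Equation~\eqref{eqn:asymptoticbound} follows by Taylor-expanding $\Gamma_\identity,\Gamma_g,\gamma_\pm,A_\pm$ in $1/M$, and the strong-measurement limit Eq.~\eqref{eq:strong} by sending $\zeta\to0$: then $\gamma_-,A_-\to0$, $\Gamma_g(M)A_+(M)\gamma_+^{M-1}(M)\to\tfrac{\Gamma_-(M)}{\Gamma_\identity(M)}[1+\Gamma_+(M)]^M$, and Eq.~\eqref{eqn:fullBound} collapses to $[1+\Gamma_\identity(M)]^M-e^{\tau J_0}$, which is Eq.~\eqref{eq:strong}.

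The hard part is the one-step inequalities with sharp enough constants: correctly routing the sum-over-detected-errors content of $H_{SB}$ through the many syndrome sectors without over-counting (the origin of the asymmetric combination $e^{\tau QJ_1/M}+Qe^{-\tau J_1/M}$ and of the multiplicities $Q$ inside $T(M)$), handling the non-commutativity of $\mathcal U(\tau/M)$ with $\mathcal P_\epsilon$ within a step, and — most delicately — controlling the persistent re-seeding of the off-code sectors by the order-one code content, which is what forces a sum over all $M$ steps rather than a single power and makes it essential to arrange $T(M)$ so that its spectral data close in the compact form of Eqs.~\eqref{eqn:rootspm} and~\eqref{eqn:coeffspm}.
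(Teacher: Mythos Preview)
Your route is genuinely different from the paper's. The paper does not build a step-by-step $2\times 2$ recursion. It Dyson-expands $(\mathcal P_\epsilon\mathcal U(\tau/M))^M\varrho_{SB}$ in full, attaches to every term the binary word $\vec\mu\in\{0,1\}^\eta$ recording at which measurements the running syndrome product was nontrivial, and splits the result into a ``strong'' piece ($\vec\mu=0$, bounded directly to give Eq.~\eqref{eq:strong}) and a ``weak'' piece ($\vec\mu\neq 0$). The weak piece is then majorized by the hypergeometric triple sum $\phi(M)=\sum_{\eta,u,r}\beta^{\eta-1}\xi^u Q^r\binom{\eta}{r}\binom{u-1}{r-1}\binom{M-u}{\eta-r}$ (with $\xi=\zeta^q$), and the second-order recurrence in $M$ whose roots are $\gamma_\pm$ is \emph{discovered algorithmically} via Sister Celine's method (Wegschaider's \textsc{MultiSum} package) and solved from the initial data $\phi(1),\phi(2)$ to produce $A_\pm$. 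So in the paper the $2\times 2$ structure emerges combinatorially after a brute-force expansion and a computer-algebra search; you are proposing to install it from the outset, which is conceptually cleaner and would bypass the machinery entirely.

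Two cautions on the details. First, the one-step sector bookkeeping you describe naturally yields
\[
T_{\mathrm{nat}}(M)=\begin{pmatrix}1+\Gamma_\identity & \Gamma_g\\ Q\Gamma_g\,\zeta^q & \big(1+\Gamma_\identity+(Q-1)\Gamma_g\big)\zeta^q\end{pmatrix},
\]
whose eigenvalues are \emph{not} the $\gamma_\pm$ of Eq.~\eqref{eqn:rootspm}. To land exactly on those $\gamma_\pm$ you must further majorize every $\Gamma_\identity,\Gamma_g$ in $T_{\mathrm{nat}}$ by $\beta=\max\{\Gamma_\identity,\Gamma_g\}$, obtaining $\bigl(\begin{smallmatrix}1+\beta & \beta\\ Q\beta\zeta^q & (1+Q\beta)\zeta^q\end{smallmatrix}\bigr)$ with trace $1+\beta+(1+Q\beta)\zeta^q$ and determinant $(1+\beta+Q\beta)\zeta^q$; this is precisely the lossy step the paper makes (its Eq.~(56)) before ever writing $\phi(M)$. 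You should say explicitly that this majorization is what fixes $T(M)$, and that the surviving factor $\Gamma_g$ in Eq.~\eqref{eqn:fullBound} enters through the source $\vec w_m$ (the first excursion off the code sector), not through $T$. Second, your reading of the five terms in Eq.~\eqref{eqn:fullBound} as ``one eigendirection plus two damped directions minus the reference'' does not match how they arise in the paper: there $[1+\Gamma_\identity]^M-e^{\tau J_0}$ is the independently bounded strong ($\vec\mu=0$) contribution, while $-\tfrac{\Gamma_-}{\Gamma_\identity}[1+\Gamma_+]^M$ is the particular solution $-(\Gamma_g/\beta)(1+\beta)^M$ of the inhomogeneous recurrence for $\phi$, not an eigencomponent of $T$. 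If you keep the strong/weak split as the paper does, the accounting becomes straightforward; if you insist on a single unified recursion, you will need to show carefully that the same five-term combination drops out.
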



\section{Stabilizer QECCs and Induced Structures}

\label{sec:stabStructure}

A large class of QECCs can be described by the stabilizer formalism \cite{Gottesman:96,Nielsen:book}, which we briefly review. A stabilizer $\mathbf{S}$ is an Abelian subgroup of the Pauli group $\mathbf{G} _{n}$ on $n$ qubits that does not contain the element $-\identity$. The Pauli group consists of all possible $n$-fold tensor products of the Pauli matrices $\sigma^{x}\equiv X$, $\sigma ^{y}=Y$, $\sigma ^{z}=Z$ together with the multiplicative factors $\pm 1$, $\pm i$. All elements of $\mathbf{G}_{n}$ are unitary and either Hermitian or skew-Hermitian.  Since $-\identity\notin \mathbf{S}$, all elements of $\mathbf{S}$ must be unitary and Hermitian, and therefore are involutions [$S^{2} = \identity$ for all $S\in\mathbf{S}$].  The stabilizer code $\mathcal{C}$ corresponding to $\mathbf{S}$ is the subspace of all states $|\psi \rangle $ which are invariant under the action of every operator in $\mathbf{S}$ ($S|\psi \rangle =|\psi \rangle $, $\forall S\in \mathbf{S}$). The stabilizer of a code encoding $k$ logical qubits into $n$ physical qubits has $\bar{Q}=n-k$ generators, and $\mathbf{S}$ has $Q=2^{\bar{Q}}$ elements.  A set of errors $\{E_{i}\}$ in $\mathbf{G}_{n}$ is correctable (detectable) by the code if and only if $E_{i}^{\dagger }E_{j}$ ($E_{j}$)\ anticommutes for all $i$ and $j$ (for all $j$) with at least one element of $\mathbf{S}$, or otherwise belongs to $\mathbf{S}$. The normalizer $N(\mathbf{S}):=\{N\ | \ NS = SN\ \forall S\in\mathbf{S}\} \subset \mathbf{G}_{n}$ is the set of logical operations on the code.

We now fix a minimal set $\bar{\mathbf{S}} = \{\bar{S}_{1},\dots, \bar{S}_{\bar{Q}}\}$ of generators of the stabilizer group.  This generating set defines a group isomorphism $B:\mathbb{Z}_{2}^{\bar{Q}}\to\mathbf{S}$ by $B(b_{1},\dots, b_{\bar{Q}}) = \prod_{j=1}^{\bar{Q}}\bar{S}_{j}^{b_{j}}$, where $\mathbb{Z}_{2}=\{0,1\}$ is the additive group of integers mod $2$. The inverse function $B^{-1}(S)$ identifies the subset of generators whose product comprises $S$, namely generator $\bar{S}_j$ participates in the product iff $b_j=1$. Define for each $g\in\mathbf{S}$ a group homomorphism $\sigma_{g}:\mathbf{S}\to\mathbb{Z}_{2}$ by 
\bea
\sigma_{g}(S) := \langle B^{-1}(S), B^{-1}(g)\rangle \pmod 2,
\eea 
where $\langle \cdot, \cdot\rangle$ denotes the dot product of the two binary vectors.  This homomorphism $\sigma_{g}$ then counts (mod $2$) the number of generators shared by $S$ and $g$.  It is symmetric in that 
\bea
\sigma_{S}(g) = \sigma_{g}(S).
\eea

We recall some basic properties of homomorphisms of finite groups \cite{Dummit1999}.  First, a group homomorphism $\phi:G\to H$ is a map satisfying the property $\phi(g_{1}g_{2}) = \phi(g_{1})\phi(g_{2})$.  This implies, in particular, that $\phi(\identity_{G}) = \identity_{H}$.  Both the kernel $K = \phi^{-1}(\identity_{H})\subset G$ and the image $\phi(G)\subset H$ of a homomorphism are subgroups of their respective groups.  If $\phi(g_{1}) = \phi(g_{2})$ then $\phi(g_{1}g_{2}^{-1}) = \identity_{H}$, so if $\phi^{-1}(h)$ is nonempty for some $h\in H$, then $\phi^{-1}(h) = K g$ for any $g\in\phi^{-1}(h)$.  In other words, all non-empty fibers of $\phi$ are cosets of the kernel $K$.  Therefore all non-empty fibers have the same cardinality as the kernel, so either $|\phi^{-1}(h)| = |K|$ or $|\phi^{-1}(h)| = 0$.

\begin{mylemma}
\label{lem:homomorphismProperties}
\bea
\sigma_{S}(g) = 0\ \forall\ g\in\mathbf{S}\ \text{iff}\ S = \identity,
\eea 
and for $S\neq \identity$, 
\bea
|\sigma_{S}^{-1}(0)| = |\sigma_{S}^{-1}(1)| = |\mathbf{S}|/2 = 2^{\bar{Q}-1} = q.
\eea  
Consequently, $\sigma_{S} = \sigma_{S'}$ if and only if $S = S'$.  So $\{(-1)^{\sigma_{S}(\cdot)}\;:\; S\in\mathbf{S}\}$ is the set of all $2^{\bar{Q}}$ complex irreducible representations of $\mathbf{S}$ \cite{Fulton1991}.  
\end{mylemma}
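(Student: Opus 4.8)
The plan is to forget the Pauli-group origin of $\sigma_S$ and treat it purely as a $\mathbb{Z}_2$-valued group homomorphism on $\mathbf{S}$, exploiting only the coset structure of homomorphism fibers recalled just above and the isomorphism $B$. \emph{First}, for the equivalence in the first display: if $S=\identity$ then $B^{-1}(S)$ is the zero vector, so $\sigma_S(g)=\langle 0,B^{-1}(g)\rangle=0$ for every $g\in\mathbf{S}$. Conversely, evaluating $\sigma_S$ on a generator, $\sigma_S(\bar{S}_i)=\langle B^{-1}(S),e_i\rangle$ is exactly the $i$-th coordinate of the binary vector $B^{-1}(S)$; if all of these vanish then $B^{-1}(S)=0$, hence $S=\identity$ because $B$ is an isomorphism.

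\emph{Next}, for $S\neq\identity$: the image $\sigma_S(\mathbf{S})$ is a subgroup of $\mathbb{Z}_2$, and by the first part it is nontrivial, hence equals $\mathbb{Z}_2$. Since the nonempty fibers of a homomorphism are cosets of its kernel and therefore all have cardinality $|\ker\sigma_S|$, we get $|\mathbf{S}|=|\sigma_S^{-1}(0)|+|\sigma_S^{-1}(1)|=2\,|\ker\sigma_S|$, so $|\sigma_S^{-1}(0)|=|\sigma_S^{-1}(1)|=|\mathbf{S}|/2=2^{\bar{Q}-1}=q$, using $|\mathbf{S}|=2^{\bar{Q}}$.

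\emph{Then} comes injectivity of $S\mapsto\sigma_S$. Bilinearity of the binary dot product together with additivity of $B^{-1}$ (as $B$ is a group isomorphism between the abelian groups $\mathbb{Z}_2^{\bar{Q}}$ and $\mathbf{S}$) gives $\sigma_{SS'}=\sigma_S+\sigma_{S'}$ pointwise mod $2$, so $S\mapsto\sigma_S$ is a homomorphism from $\mathbf{S}$ into $\mathrm{Hom}(\mathbf{S},\mathbb{Z}_2)$; by the first part its kernel is trivial, so it is injective. Since $\mathbf{S}\cong\mathbb{Z}_2^{\bar{Q}}$ one has $|\mathrm{Hom}(\mathbf{S},\mathbb{Z}_2)|=2^{\bar{Q}}=|\mathbf{S}|$, so the injection is a bijection and $\{\sigma_S:S\in\mathbf{S}\}$ exhausts $\mathrm{Hom}(\mathbf{S},\mathbb{Z}_2)$; in particular $\sigma_S=\sigma_{S'}$ forces $S=S'$. \emph{Finally}, every complex irreducible representation of the finite abelian group $\mathbf{S}$ is one-dimensional and sends each involution to a square root of unity, hence is of the form $g\mapsto(-1)^{f(g)}$ with $f\in\mathrm{Hom}(\mathbf{S},\mathbb{Z}_2)$, and conversely each such $f$ yields a character; as there are exactly $2^{\bar{Q}}$ of these and that is the order of the character group of $\mathbf{S}$, the collection $\{(-1)^{\sigma_S(\cdot)}:S\in\mathbf{S}\}$ is precisely the full set of irreducible representations of $\mathbf{S}$.

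I expect no real obstacle here: the only mild point requiring care is that evaluating $\sigma_S$ on the fixed generating set $\bar{\mathbf{S}}$ genuinely reads off the coordinates of $B^{-1}(S)$, which drives both the first claim and injectivity; everything else is a direct application of the elementary finite-group facts recalled above and the standard structure theory of characters of finite abelian groups cited via \cite{Fulton1991}.
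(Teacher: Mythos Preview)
Your proof is correct and follows essentially the same route as the paper's: both directions of the first claim are handled via the coordinates of $B^{-1}(S)$ read off on generators, the equal-fiber count comes from the coset structure of homomorphism fibers, and injectivity is obtained from $\sigma_{SS'}=\sigma_S+\sigma_{S'}$ together with the first claim. The only cosmetic differences are that the paper invokes the symmetry $\sigma_S(g)=\sigma_g(S)$ to derive that additivity, and that you spell out the character-theory step (bijection with $\mathrm{Hom}(\mathbf{S},\mathbb{Z}_2)$) a bit more explicitly than the paper's citation to \cite{Fulton1991}.
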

\begin{proof}
Since $B(0,\dots, 0) = \identity$, $B^{-1}(\identity) = (0,\dots, 0)$, which yields zero when dotted with any binary vector $B^{-1}(g)$.  So $\sigma_{\identity}(g) = 0$ for all $g$.  If $S\neq \identity$, then $S = \bar{S}_{j_{1}}\cdots \bar{S}_{j_{k}}$ for some subset of distinct generators $\bar{S}_{j_{1}},\dots, \bar{S}_{j_{k}}$.  Then $\sigma_{S}(\bar{S}_{j_{i}}) = 1$ for all $1\leq i\leq k$.  So $\sigma_{S}(g) = 0$ for all $g\in\mathbf{S}$ if and only if $S = \identity$.  In the case $S\neq \identity$, the image of $\sigma_{S}$ is all of $\mathbb{Z}_{2}$.  Then, the fibers $K = \sigma_{S}^{-1}(0)$ and $\sigma_{S}^{-1}(1)$ are both nonempty and therefore are cosets of the kernel $K$ \cite{Dummit1999} and partition the group $\mathbf{S}$.  Thus, $|K| = |\sigma_{S}^{-1}(0)| = |\sigma_{S}^{-1}(1)|$ and $|\mathbf{S}| = |\sigma_{S}^{-1}(0)\cup\sigma_{S}^{-1}(1)| = 2|K|$.  Finally, if $\sigma_{S}(g) = \sigma_{S'}(g)$ for all $g\in\mathbf{S}$, then \begin{equation}\sigma_{SS'}(g) = \sigma_{g}(SS') = \sigma_{g}(S) + \sigma_{g}(S') = \sigma_{S}(g) + \sigma_{S'}(g) = 0\end{equation}
for all $g\in\mathbf{S}$, which, by the arguments above, holds if and only if $SS' = \identity$, i.e., if and only if $S = S'$.
\end{proof}

\subsection{Isotypical Decompositions}

The fact that the homomorphisms $\{(-1)^{\sigma_{g}(\cdot)}\;:\; g\in\mathbf{S}\}$ are the irreducible representations of $\mathbf{S}\simeq \mathbb{Z}_{2}^{\bar{Q}}$ leads to a natural and well known orthogonal decomposition of the state space $\mathcal{H}$ into code subspaces \cite{Gottesman:96}.

\begin{mylemma}
\label{lem:HilbSpaceIsotypical}
The isomorphism $B:\mathbb{Z}_{2}^{\bar{Q}}\to \mathbf{S}$ is a faithful unitary representation of $\mathbb{Z}_{2}^{\bar{Q}}$ in terms of operators on $\mathcal{H} = \mathcal{H}_{S}\otimes\mathcal{H}_{B}$, where $\mathcal{H}_{S}\simeq \mathbb{C}^{2^{n}}$.  There is then a unique isotypical decomposition \cite{Fulton1991} of $\mathcal{H}$ into subspaces
\begin{equation}
\mathcal{H} = \bigoplus_{g\in\mathbf{S}}V_{g},\quad
V_{g} = \hat{V}_{g}^{\oplus a_{g}},\quad a_g=2^k\dim(\mathcal{H}_{B}),
\end{equation}
 where each $V_{g}$ is an invariant subspace of the representation $B$ and the projection of $B$ onto any one-dimensional subspace $\hat{V}_{g}$ thereof is the irreducible representation $(-1)^{\sigma_{g}(\cdot)}$, i.e., for any $|\psi\rangle \in V_{g}$, $S|\psi\rangle = (-1)^{\sigma_{g}(S)}|\psi\rangle$.  Since each $\hat{V}_{g}$ is one-dimensional (because $\mathbf{S}\simeq \mathbb{Z}_{2}^{\bar{Q}}$ is abelian), the exponent $a_{g}$ is the dimension of the subspace $V_{g}$.  With $\Tr(\identity) = 2^{n}$ and all other elements of $\mathbf{S}$ traceless, $a_{g} = 2^{n-\bar{Q}}\dim(\mathcal{H}_{B}) = 2^{k}\dim(\mathcal{H}_{B})$ for all $g$.  The $2^{k}\dim(\mathcal{H}_{B})$-dimensional subspaces $V_{g}$ are all orthogonal, and $V_{\identity}$ is the subspace stabilized by $\mathbf{S}$.
\end{mylemma}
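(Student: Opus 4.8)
The plan is to read off the decomposition from the classical representation theory of finite abelian groups, supplemented by the tracelessness of nontrivial Pauli operators. First I would record that $B$ is faithful: it was already shown to be a group isomorphism of $\mathbb{Z}_{2}^{\bar Q}$ onto $\mathbf{S}$, and since each $S\in\mathbf{S}$ acts as a distinct unitary $S\otimes\identity_{B}$ on $\mathcal{H}=\mathcal{H}_{S}\otimes\mathcal{H}_{B}$ (distinct because $S\neq S'$ already differ on $\mathcal{H}_{S}$), injectivity survives. Because $\mathbb{Z}_{2}^{\bar Q}$ is finite, the unitary representation $B$ is completely reducible (the orthogonal complement of any invariant subspace is invariant), and every irreducible constituent is one-dimensional since the group is abelian; by Lemma~\ref{lem:homomorphismProperties} the complete list of pairwise-inequivalent irreducibles is exactly $\{(-1)^{\sigma_{g}(\cdot)}:g\in\mathbf{S}\}$, of the correct cardinality $|\mathbf{S}|=2^{\bar Q}$. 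Grouping the one-dimensional summands according to which character they carry produces the isotypical decomposition $\mathcal{H}=\bigoplus_{g\in\mathbf{S}}V_{g}$ with $V_{g}=\{\,|\psi\rangle\in\mathcal{H}\;:\;S|\psi\rangle=(-1)^{\sigma_{g}(S)}|\psi\rangle\ \ \forall S\in\mathbf{S}\,\}$; uniqueness is the standard fact that the $V_{g}$ are canonically determined (they are the joint eigenspaces of $\mathbf{S}$, equivalently the images of the primitive central idempotents of the group algebra) and hence independent of the chosen refinement of each $V_{g}$ into lines $\hat V_{g}$.

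I would then identify the $V_{g}$ explicitly and note orthogonality. Every element of $\mathbf{S}$ is a unitary Hermitian involution and the elements mutually commute, so $\mathbf{S}$ is a commuting family of Hermitian operators on $\mathcal{H}$; they are therefore simultaneously diagonalizable with eigenvalues $\pm1$, and any joint eigenvector realizes a sign pattern $S\mapsto(-1)^{\epsilon(S)}$ which, by the group law, must be a homomorphism $\mathbf{S}\to\mathbb{Z}_{2}$, i.e.\ $\epsilon=\sigma_{g}$ for a unique $g$ (Lemma~\ref{lem:homomorphismProperties}). Thus the $V_{g}$ are precisely the joint eigenspaces of $\mathbf{S}$, and distinct joint eigenspaces of a commuting family of Hermitian operators are mutually orthogonal. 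In particular $V_{\identity}$ is the joint $(+1)$-eigenspace, i.e.\ exactly the subspace stabilized by $\mathbf{S}$ (the codespace $\mathcal{C}$ tensored with $\mathcal{H}_{B}$).

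Finally I would compute the multiplicities by a character argument. The orthogonal projector onto $V_{g}$ is $P_{g}=\frac{1}{|\mathbf{S}|}\sum_{S\in\mathbf{S}}(-1)^{\sigma_{g}(S)}S$, the standard projection onto the isotypical component carrying $(-1)^{\sigma_{g}(\cdot)}$. Since each $S$ acts as $S_{S}\otimes\identity_{B}$ with $S_{S}$ a Pauli operator on $\mathcal{H}_{S}$, we have $\Tr(S)=\dim(\mathcal{H}_{B})\,\Tr_{S}(S_{S})$, where $\Tr_{S}(\identity)=2^{n}$ and $\Tr_{S}(S_{S})=0$ for every nontrivial $S\in\mathbf{S}$ (nonidentity elements of $\mathbf{G}_{n}$ are traceless). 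Hence only the $S=\identity$ term survives and
\begin{equation}
a_{g}=\dim V_{g}=\Tr P_{g}=\frac{1}{2^{\bar Q}}\,2^{n}\dim(\mathcal{H}_{B})=2^{n-\bar Q}\dim(\mathcal{H}_{B})=2^{k}\dim(\mathcal{H}_{B}),
\end{equation}
using $\bar Q=n-k$, which is independent of $g$. Since $\mathbf{S}$ is abelian each $\hat V_{g}$ is one-dimensional, so $a_{g}$ is literally $\dim V_{g}$ and $V_{g}=\hat V_{g}^{\oplus a_{g}}$. I do not expect a genuine obstacle here; the only step demanding care is the trace computation, where one must keep the $\mathcal{H}_{B}$ tensor factor explicit and invoke the tracelessness of the nonidentity stabilizer elements, a fact the Lemma statement already flags.
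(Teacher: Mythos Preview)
Your proposal is correct and follows essentially the same approach as the paper: invoke the isotypical decomposition for a finite abelian group acting unitarily, identify the irreducibles with the characters $(-1)^{\sigma_{g}(\cdot)}$ via Lemma~\ref{lem:homomorphismProperties}, compute the multiplicity $a_{g}$ by the character formula $a_{g}=\tfrac{1}{|\mathbf{S}|}\sum_{S}(-1)^{\sigma_{g}(S)}\Tr(S)$ using tracelessness of the nonidentity stabilizer elements, and verify orthogonality of the $V_{g}$. The only cosmetic differences are that you spell out faithfulness and the joint-eigenspace description more explicitly, and you deduce orthogonality from Hermiticity of the stabilizer elements rather than from unitarity as the paper does; both arguments are standard and equivalent here since the $S\in\mathbf{S}$ are simultaneously Hermitian and unitary.
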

\begin{proof}
The isotypical decomposition is a standard result in representation theory, following from Schur's lemma \cite{Fulton1991}.  If $|\psi\rangle\in V_{g},$ then $S|\psi\rangle = (-1)^{\sigma_{g}(S)}|\psi\rangle$ because $(-1)^{\sigma_{g}(\cdot)}$ is the irreducible representation associated to $V_{g}$.  By \cite[Corollary 2.16]{Fulton1991}, 
\begin{align}
	a_{g} & = \frac{1}{2^{\bar{Q}}}\sum_{S\in\mathbf{S}}\Tr(S)(-1)^{\sigma_{g}(S)} = 2^{n-\bar{Q}}\dim(\mathcal{H}_{B}) \notag\\
	& = 2^{k}\dim(\mathcal{H}_{B})
\end{align}
using the fact that $S = \identity$ has $\Tr(\identity) = 2^{n}$, and all other $S\in\mathbf{S}$ have zero trace.  If $|\psi_{g}\rangle \in V_{g}$ and $|\psi_{h}\rangle\in V_{h}$, then since $\mathbf{S}\subset\mathrm{U}(\mathcal{H})$, 
\bea
\langle \psi_{g}|\psi_{h}\rangle &=& \langle S\psi_{g}|S\psi_{h}\rangle = (-1)^{\sigma_{g}(S)+\sigma_{h}(S)}\langle \psi_{g}|\psi_{h}\rangle \notag \\
&=& (-1)^{\sigma_{gh}(S)}\langle \psi_{g}|\psi_{h}\rangle
\eea
for all $S\in\mathbf{S}$.  If $gh\neq \identity$, i.e., if $g\neq h$, then by Lemma \ref{lem:homomorphismProperties}, $|\sigma_{gh}^{-1}(1)| = |\mathbf{S}|/2$, so there exists $S\in\mathbf{S}$ such that $\sigma_{gh}(S) = 1$.  Therefore $\langle \psi_{g}|\psi_{h}\rangle$ must be zero and the subspaces $V_{g}$ form an orthogonal decomposition of $\mathcal{H}$.
\end{proof}

In the language of QEC \cite{Gottesman:96}, each of the subspaces $V_g$ can be thought of as encoding $k$ qubits, but only $V_{\identity}$ is stabilized by $\mathbf{S}$ [i.e., $S|\psi\rangle = (-1)^{\sigma_{g}(S)}|\psi\rangle = |\psi\rangle\ \forall S\in\mathbf{S}\ $ iff $\sigma_{g}(S) = 0 \ \forall S\in\mathbf{S}\ \Leftrightarrow\ g=\identity$ by Lemma~\ref{lem:homomorphismProperties}].  Hence $V_{\identity}$ is typically chosen as the stabilizer QECC. With this choice, the remaining isotypical subspaces are interpreted as ``syndrome'' subspaces, where $g$ labels the syndrome. Namely, after an error that is detectable by the code takes place, it maps $V_{\identity}$ to one of the other subspaces $V_g$. A measurement of all the stabilizer generators reveals the label $g$, in that $g = B(b_1,\dots,b_{\bar{Q}})$, where $b_j\in \mathbb{Z}_2$ is $0$ ($1$) if the measurement of generator $\bar{S}_j$ yielded eigenvalue $+1$ ($-1$), with $j\in\{1,\dots,\bar{Q}\}$.

Using the group action of conjugation by $\mathbf{S}$, the space $\linOps$ of all linear operators (complex matrices) on $\mathcal{H} = \mathcal{H}_{S}\otimes\mathcal{H}_{B}$ and the space $\HermOps$ of Hermitian operators may be similarly decomposed into isotypical subspaces indexed by $\mathbf{S}$ which are orthogonal under any inner product invariant under conjugation by $\mathbf{S}$ (such as the Hilbert-Schmidt inner product).  Then 
\begin{equation}
\linOps = \bigoplus_{g\in\mathbf{S}}W_{g}^{\mathbb{C}} \quad \HermOps = \bigoplus_{g\in\mathbf{S}}W_{g},\label{eqn:isotypicalMatrixDecompositions}
\end{equation}
where, for any $g\in\mathbf{S}$, the subspace $W_{g}^{\mathbb{C}}$ (respectively $W_{g}$) is the space of all operators (resp. Hermitian operators) $A_{g}$ with the defining property that they satisfy $SA_{g}S = (-1)^{\sigma_{g}(S)}A_{g}$ for all $S\in\mathbf{S}$.  From this description, it is clear that $W_{g}\subset W_{g}^{\mathbb{C}}$ for all $g\in\mathbf{S}$.  For more on these decompositions, see Appendix \ref{app:isotypicalDecompositions}.
\begin{mylemma}
The isotypical decompositions in Eq.~\eqref{eqn:isotypicalMatrixDecompositions} impart to $\linOps$ the structure of an $\mathbf{S}$-graded associative algebra \cite{Nijenhuis1966} and to $\HermOps$ the structure of an $\mathbf{S}$-graded Lie algebra (under the Lie bracket $A,B\mapsto i[A,B]$).
\label{lem:gradedAlgebras}
\end{mylemma}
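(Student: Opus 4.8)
The plan is to take the two decompositions in Eq.~\eqref{eqn:isotypicalMatrixDecompositions} as given---their existence is the standard isotypical decomposition of a representation, spelled out in Appendix~\ref{app:isotypicalDecompositions}---and to verify only that ordinary matrix multiplication respects the grading of $\linOps$ and that the bracket $A,B\mapsto i[A,B]$ respects the grading of $\HermOps$. The single structural input beyond the definitions is the additivity of the exponents, $\sigma_{g}(S)+\sigma_{h}(S) = \sigma_{gh}(S)$ for all $g,h,S\in\mathbf{S}$, which follows from Lemma~\ref{lem:homomorphismProperties}: using the symmetry $\sigma_{g}(S)=\sigma_{S}(g)$ and the fact that $\sigma_{S}:\mathbf{S}\to\mathbb{Z}_{2}$ is a homomorphism gives $\sigma_{S}(gh)=\sigma_{S}(g)+\sigma_{S}(h)$.

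First I would establish $W_{g}^{\mathbb{C}}\cdot W_{h}^{\mathbb{C}}\subseteq W_{gh}^{\mathbb{C}}$. For $A_{g}\in W_{g}^{\mathbb{C}}$ and $B_{h}\in W_{h}^{\mathbb{C}}$, and any $S\in\mathbf{S}$, the involution property $S^{2}=\identity$ lets one insert $SS$ between the factors:
\begin{equation}
S(A_{g}B_{h})S = (SA_{g}S)(SB_{h}S) = (-1)^{\sigma_{g}(S)+\sigma_{h}(S)}A_{g}B_{h} = (-1)^{\sigma_{gh}(S)}A_{g}B_{h},
\end{equation}
which is exactly the defining relation of $W_{gh}^{\mathbb{C}}$. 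Associativity and $\mathbb{C}$-bilinearity of matrix multiplication are automatic, and $\identity\in W_{\identity}^{\mathbb{C}}$ is a unit; since $\mathbf{S}$ is abelian the grading group is abelian and $gh$ is unambiguous. This is precisely the assertion that $\linOps=\bigoplus_{g\in\mathbf{S}}W_{g}^{\mathbb{C}}$ is an $\mathbf{S}$-graded associative algebra.

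Second, for $\HermOps$ I would first note that $i[\cdot,\cdot]$ preserves Hermiticity: for Hermitian $A,B$ one has $[A,B]^{\dagger}=B^{\dagger}A^{\dagger}-A^{\dagger}B^{\dagger}=BA-AB=-[A,B]$, so $i[A,B]$ is Hermitian, while antisymmetry of $i[\cdot,\cdot]$ and the Jacobi identity are the usual commutator identities. It then remains to check $i[W_{g},W_{h}]\subseteq W_{gh}$. Because $W_{g}\subseteq W_{g}^{\mathbb{C}}$, the previous step gives $A_{g}B_{h}\in W_{gh}^{\mathbb{C}}$ and $B_{h}A_{g}\in W_{hg}^{\mathbb{C}}=W_{gh}^{\mathbb{C}}$ (again using abelianness), hence $i[A_{g},B_{h}]\in W_{gh}^{\mathbb{C}}$; combined with the Hermiticity just checked, $i[A_{g},B_{h}]\in W_{gh}^{\mathbb{C}}\cap\HermOps = W_{gh}$. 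Together with the direct-sum decomposition of $\HermOps$, this says exactly that $\HermOps$ is an $\mathbf{S}$-graded Lie algebra.

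I do not anticipate a genuine obstacle; the whole content sits in the bookkeeping identity $\sigma_{g}(S)+\sigma_{h}(S)=\sigma_{gh}(S)$ together with $S^{2}=\identity$. The two points deserving a line of care are: (i) that intersecting the complex isotypical pieces $W_{g}^{\mathbb{C}}$ with $\HermOps$ yields a genuine grading of $\HermOps$ and not a coarser decomposition---this is where the orthogonality of the $W_{g}^{\mathbb{C}}$ under the Hilbert--Schmidt inner product is used, since it restricts to a direct-sum decomposition of the real subspace $\HermOps$ (Eq.~\eqref{eqn:isotypicalMatrixDecompositions}, Appendix~\ref{app:isotypicalDecompositions}); and (ii) that consistency of an antisymmetric bracket with $[L_{g},L_{h}]\subseteq L_{gh}$ requires the grading group to be abelian, which holds since $\mathbf{S}$ is abelian.
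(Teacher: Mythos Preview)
Your argument is correct and follows the same route as the paper: insert $S^{2}=\identity$ to split $S(A_{g}B_{h})S=(SA_{g}S)(SB_{h}S)$, use the additivity $\sigma_{g}(S)+\sigma_{h}(S)=\sigma_{gh}(S)$, and then do the analogous computation for $i[\cdot,\cdot]$. The paper handles the Lie case by conjugating the bracket directly rather than passing through $W_{gh}^{\mathbb{C}}\cap\HermOps$, but the content is identical.
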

\begin{proof}
For any $g,h\in\mathbf{S}$ and any $A_{g}\in W_{g}^{\mathbb{C}}$ and $A_{h}\in W_{h}^{\mathbb{C}}$, 
\bea
SA_{g}A_{h}S &=& S A_{g}SSA_{h}S = (-1)^{\sigma_{g}(S) + \sigma_{h}(S)}A_{g}A_{h} \notag \\
&=& (-1)^{\sigma_{gh}(S)}A_{g}A_{h},
\eea
so that $A_{g}A_{h}\in W_{gh}^{\mathbb{C}}$, and therefore $W_{g}^{\mathbb{C}}W_{h}^{\mathbb{C}}\subset W_{gh}^{\mathbb{C}}$.  Likewise for any $A_{g}\in W_{g}$ and $A_{h}\in W_{h}$,
\bea
S\big(i[A_{g},A_{h}])S &=& i[S A_{g}S, SA_{h}S] \notag \\
&=& (-1)^{\sigma_{g}(S) + \sigma_{h}(S)}i[A_{g},A_{h}] \notag \\
&=& (-1)^{\sigma_{gh}(S)}i[A_{g},A_{h}],
\eea
so that $i[A_{g},A_{h}]\in W_{gh}$, and therefore $i[W_{g},W_{h}]\subset W_{gh}$.
\end{proof}

Finally, we can define the orthogonal projections into these isotypical subspaces as follows.
\begin{mylemma}
For any $g\in\mathbf{S}$, the operator $\hat{\mathcal{P}}_{g}:\linOps\to W_{g}^{\mathbb{C}}$ defined by 
\begin{equation}
\hat{\mathcal{P}}_{g}(A) := \frac{1}{|\mathbf{S}|}\sum_{S\in\mathbf{S}}(-1)^{\sigma_{g}(S)}SAS
\end{equation}
is the orthogonal projection into the subspace $W_{g}^{\mathbb{C}}$.  Restricted to $\HermOps$, this same operator defines the orthogonal projection into $W_{g}$.
\end{mylemma}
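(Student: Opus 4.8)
The plan is to verify directly the three characterizing properties of the orthogonal projection onto $W_{g}^{\mathbb{C}}$: (i) that $\hat{\mathcal{P}}_{g}$ maps $\linOps$ into $W_{g}^{\mathbb{C}}$; (ii) that $\hat{\mathcal{P}}_{g}$ acts as the identity on $W_{g}^{\mathbb{C}}$, so that together with (i) it is idempotent with range exactly $W_{g}^{\mathbb{C}}$; and (iii) that $\hat{\mathcal{P}}_{g}$ is self-adjoint with respect to the Hilbert--Schmidt inner product $\langle A,B\rangle := \Tr(A^{\dagger}B)$. A self-adjoint idempotent is, by definition, an orthogonal projection, so (i)--(iii) give the first claim; the statement for $\HermOps$ then follows by checking that $\hat{\mathcal{P}}_{g}$ preserves Hermiticity.

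For (i), I would conjugate $\hat{\mathcal{P}}_{g}(A)$ by an arbitrary $T\in\mathbf{S}$ and reindex the sum. Since every element of $\mathbf{S}$ is a Hermitian unitary involution ($T^{\dagger}=T=T^{-1}$) and $\mathbf{S}$ is abelian, $T(SAS)T = (TS)A(ST) = (TS)A(TS)$, and the map $S\mapsto S' := TS$ is a bijection of $\mathbf{S}$. By the homomorphism property of $\sigma_{g}$ one has $(-1)^{\sigma_{g}(S)} = (-1)^{\sigma_{g}(TS')} = (-1)^{\sigma_{g}(T)}(-1)^{\sigma_{g}(S')}$, so $T\hat{\mathcal{P}}_{g}(A)T = (-1)^{\sigma_{g}(T)}\hat{\mathcal{P}}_{g}(A)$, which is precisely the relation defining $W_{g}^{\mathbb{C}}$. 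For (ii), if $A\in W_{g}^{\mathbb{C}}$ then $SAS = (-1)^{\sigma_{g}(S)}A$, so each summand $(-1)^{\sigma_{g}(S)}SAS$ equals $(-1)^{2\sigma_{g}(S)}A = A$; averaging over the $|\mathbf{S}|$ group elements gives $\hat{\mathcal{P}}_{g}(A)=A$. Hence $\hat{\mathcal{P}}_{g}^{2}=\hat{\mathcal{P}}_{g}$ with range exactly $W_{g}^{\mathbb{C}}$.

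For (iii), using $S^{\dagger}=S$ and cyclicity of the trace, $\langle\hat{\mathcal{P}}_{g}(A),B\rangle = \frac{1}{|\mathbf{S}|}\sum_{S\in\mathbf{S}}(-1)^{\sigma_{g}(S)}\Tr(SA^{\dagger}SB)$, which by cyclicity equals $\frac{1}{|\mathbf{S}|}\sum_{S\in\mathbf{S}}(-1)^{\sigma_{g}(S)}\Tr(A^{\dagger}SBS) = \langle A,\hat{\mathcal{P}}_{g}(B)\rangle$. Finally, since each $S$ is Hermitian, $(SAS)^{\dagger}=SA^{\dagger}S$, so $\hat{\mathcal{P}}_{g}(A)^{\dagger}=\hat{\mathcal{P}}_{g}(A^{\dagger})$; thus $\hat{\mathcal{P}}_{g}$ carries $\HermOps$ into $W_{g}=W_{g}^{\mathbb{C}}\cap\HermOps$, and since the Hilbert--Schmidt form restricts to a genuine real inner product on $\HermOps$ on which $\hat{\mathcal{P}}_{g}$ is still idempotent and self-adjoint, it is the orthogonal projection onto $W_{g}$.

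The argument is entirely routine; the only step requiring any care --- and what I would flag as the main (minor) obstacle --- is the bookkeeping in (i): one must use both that $\mathbf{S}$ is abelian and that its elements are involutions to execute the substitution $S'=TS$, and must invoke the homomorphism property of $\sigma_{g}$ to split $(-1)^{\sigma_{g}(TS')}$. As a consistency check one may additionally note that $\sum_{g\in\mathbf{S}}\hat{\mathcal{P}}_{g}$ equals the identity map on $\linOps$, by orthogonality of the characters $(-1)^{\sigma_{g}(\cdot)}$ (Lemma~\ref{lem:homomorphismProperties}), which recovers the isotypical decompositions of Eq.~\eqref{eqn:isotypicalMatrixDecompositions}, though this is not needed for the statement.
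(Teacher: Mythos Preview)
Your proof is correct, but it is organized differently from the paper's. You verify directly the three defining properties of an orthogonal projection onto $W_{g}^{\mathbb{C}}$: the range condition, idempotence via the identity-on-range check, and self-adjointness with respect to the Hilbert--Schmidt inner product. The paper instead proves only the range condition and the partition-of-unity identity $\sum_{g\in\mathbf{S}}\hat{\mathcal{P}}_{g}=\identity$ (your ``consistency check''), and then infers orthogonality of the projections from the previously established mutual orthogonality of the subspaces $W_{g}^{\mathbb{C}}$: since each $\hat{\mathcal{P}}_{g}$ lands in $W_{g}^{\mathbb{C}}$ and the pieces sum to the identity, uniqueness of the orthogonal isotypical decomposition forces $\hat{\mathcal{P}}_{g}$ to be the orthogonal projector. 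Your argument is a bit more self-contained --- it does not need to invoke the orthogonal decomposition of $\linOps$ established earlier --- while the paper's is shorter once that decomposition is in hand. Both approaches handle the Hermitian case identically.
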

\begin{proof}
For any $A\in \linOps$, and any $g, S'\in\mathbf{S}$ 
\begin{align}
	S'\big(\hat{\mathcal{P}}_{g}(A)\big)S' & = \frac{1}{|\mathbf{S}|}\sum_{S\in\mathbf{S}}(-1)^{\sigma_{g}(S)}S'SAS'S \notag \\
	& = \frac{1}{|\mathbf{S}|}\sum_{S\in\mathbf{S}}(-1)^{\sigma_{g}(S'S)}SAS \notag \\
	& = (-1)^{\sigma_{g}(S')}\frac{1}{|\mathbf{S}|}\sum_{S\in\mathbf{S}}(-1)^{\sigma_{g}(S)}SAS \notag \\
	& = (-1)^{\sigma_{g}(S')}\hat{\mathcal{P}}_{g}(A),
\label{eq:S'PS'}
\end{align}
so that $\hat{\mathcal{P}}_{g}(A)\in W_{g}^{\mathbb{C}}$.  Moreover, we find
\begin{equation}
\sum_{g\in\mathbf{S}}\hat{\mathcal{P}}_{g}(A) = \frac{1}{|\mathbf{S}|}\sum_{S\in\mathbf{S}}\left(\sum_{g\in\mathbf{S}}(-1)^{\sigma_{g}(S)}\right)SAS = A,
\label{eq:projection}
\end{equation}
since by Lemma \ref{lem:homomorphismProperties}, $\sum_{g\in\mathbf{S}}(-1)^{\sigma_{g}(S)} = 0$ for $S\neq \identity$ and equals $|\mathbf{S}|$ when $S = \identity$.  Therefore, since the subspaces $W_{g}^{\mathbb{C}}$ are mutually orthogonal, the operators $\hat{\mathcal{P}}_{g}$ are orthogonal projections.  Finally observe that if $A$ is Hermitian, then $\hat{\mathcal{P}}_{g}(A)$ is Hermitian as well, so $\hat{\mathcal{P}}_{g}$ also describes the orthogonal projection $\HermOps\mapsto W_{g}$.
\end{proof}
Note that Eq.~\eqref{eq:S'PS'} shows that $\hat{\mathcal{P}}_{g}(A)$ coincides with the defining property of the operators $A_g$ (Hermitian or not), so that we can equivalently define $A_{g} := \hat{\mathcal{P}}_{g}(A)$.

\subsection{Measurement Operators}

With respect to these isotypical decompositions, the actions of the measurement operators defined in Section \ref{sec:introduction} take particularly simple forms.  Recalling Eqs.~\eqref{eqn:genWeakPOVM} and \eqref{eq:P_S(eps)}, for any $S\in\mathbf{S}$ and $\epsilon>0$, the effect of the weak measurement of the stabilizer $S$ is given by 
\bea
\mathcal{P}_{S,\epsilon}(A) &:=& P_{S}(\epsilon)AP_{S}(\epsilon) + P_{S}(-\epsilon)A P_{S}(-\epsilon) \\
&=& \sum_{b=\pm 1}\sum_{s_{1}, s_{2} = \pm}\alpha_{s_{1}}(b\epsilon)\alpha_{s_{2}}(b\epsilon)P_{s_{1}S}A P_{s_{2}S}\notag
\eea

for any $A\in\linOps$.  Then
\begin{mylemma}
For any $g,S\in\mathbf{S}$, any $A_{g}\in W_{g}^{\mathbb{C}}$, and any $\epsilon >0$, 
\begin{equation}
\mathcal{P}_{S,\epsilon}(A_{g}) = \zeta^{\sigma_{S}(g)}A_{g},
\label{eq:P_Seps}
\end{equation}
where $\zeta = \sech(\epsilon)$, and therefore, using Eq.~\eqref{eq:projection},
\bea
\mathcal{P}_{S,\epsilon} &{=}& \sum_{g\in\mathbf{S}}\zeta^{\sigma_{S}(g)}\hat{\mathcal{P}}_{g} = \sum_{g\in \sigma_{S}^{-1}(0)}\hat{\mathcal{P}}_{g} + \zeta\sum_{g\in\sigma_{S}^{-1}(1)}\hat{\mathcal{P}}_{g} \notag \\
&{=}& \mathcal{P}_{S,\infty} + \zeta(\identity - \mathcal{P}_{S,\infty}) = (1-\zeta)\mathcal{P}_{S,\infty} + \zeta\identity.\notag \\
\label{eq:P_Seps-1}
\eea
\end{mylemma}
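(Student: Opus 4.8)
The plan is to reduce everything to the action of $\mathcal{P}_{S,\epsilon}$ on a single isotypical component: since $\linOps=\bigoplus_{g}W_g^{\mathbb{C}}$ and $\mathcal{P}_{S,\epsilon}$ is linear, it suffices to evaluate $\mathcal{P}_{S,\epsilon}(A_g)$ for $A_g\in W_g^{\mathbb{C}}$, and for this I would split on the value $\sigma_S(g)\in\{0,1\}$. Using the defining relation $S A_g S=(-1)^{\sigma_g(S)}A_g$ together with the symmetry $\sigma_g(S)=\sigma_S(g)$, the case $\sigma_S(g)=0$ says $A_g$ commutes with $S$ (hence with each $P_{\pm S}$ and each $P_S(\pm\epsilon)$), while the case $\sigma_S(g)=1$ says $A_g$ anticommutes with $S$, equivalently $P_{\pm S}A_g=A_g P_{\mp S}$.

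In the commuting case there is nothing to compute: $A_g$ passes through both Kraus operators, so $\mathcal{P}_{S,\epsilon}(A_g)=\big(P_S(\epsilon)^2+P_S(-\epsilon)^2\big)A_g=A_g=\zeta^0 A_g$ by the POVM sum rule noted after Eq.~\eqref{eqn:genWeakPOVM}. The anticommuting case is the only place an actual computation is needed. I would first record the elementary identity $\alpha_\pm(-\epsilon)=\alpha_\mp(\epsilon)$ (immediate since $\tanh$ is odd), which together with $P_{\pm S}A_g=A_g P_{\mp S}$ gives $P_S(\epsilon)A_g=A_g P_S(-\epsilon)$ and $P_S(-\epsilon)A_g=A_g P_S(\epsilon)$. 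Substituting these into $\mathcal{P}_{S,\epsilon}(A_g)=P_S(\epsilon)A_g P_S(\epsilon)+P_S(-\epsilon)A_g P_S(-\epsilon)$ pushes $A_g$ to the left and leaves the two products $P_S(-\epsilon)P_S(\epsilon)$ and $P_S(\epsilon)P_S(-\epsilon)$; since $P_SP_{-S}=0$ and $P_{\pm S}^2=P_{\pm S}$, each product collapses to $\alpha_+(\epsilon)\alpha_-(\epsilon)(P_S+P_{-S})=\alpha_+(\epsilon)\alpha_-(\epsilon)\identity$. Hence $\mathcal{P}_{S,\epsilon}(A_g)=2\alpha_+(\epsilon)\alpha_-(\epsilon)A_g=\zeta A_g=\zeta^{\sigma_S(g)}A_g$, using $\zeta=2\alpha_+(\epsilon)\alpha_-(\epsilon)=\sech(\epsilon)$. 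This establishes Eq.~\eqref{eq:P_Seps}.

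The remaining identities in Eq.~\eqref{eq:P_Seps-1} then follow formally. Applying $\mathcal{P}_{S,\epsilon}$ to the resolution $A=\sum_{g}\hat{\mathcal{P}}_g(A)$ from Eq.~\eqref{eq:projection}, with $\hat{\mathcal{P}}_g(A)\in W_g^{\mathbb{C}}$, gives $\mathcal{P}_{S,\epsilon}=\sum_{g}\zeta^{\sigma_S(g)}\hat{\mathcal{P}}_g$; grouping the $2^{\bar{Q}}$ terms according to whether $g\in\sigma_S^{-1}(0)$ or $g\in\sigma_S^{-1}(1)$ (sets of equal size by Lemma~\ref{lem:homomorphismProperties} when $S\neq\identity$) yields the second form. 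Taking $\zeta\to0$ (the $\epsilon\to\infty$ limit), or equivalently repeating the case split for the projective measurement $\mathcal{P}_{S,\infty}(A)=P_SAP_S+P_{-S}AP_{-S}$, identifies $\mathcal{P}_{S,\infty}=\sum_{g\in\sigma_S^{-1}(0)}\hat{\mathcal{P}}_g$, and since $\sum_g\hat{\mathcal{P}}_g=\identity$ we get $\sum_{g\in\sigma_S^{-1}(1)}\hat{\mathcal{P}}_g=\identity-\mathcal{P}_{S,\infty}$, hence $\mathcal{P}_{S,\epsilon}=\mathcal{P}_{S,\infty}+\zeta(\identity-\mathcal{P}_{S,\infty})=(1-\zeta)\mathcal{P}_{S,\infty}+\zeta\identity$. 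There is no genuine obstacle here; the one point demanding care is the sign bookkeeping in the anticommuting case, where the relations $\alpha_\pm(-\epsilon)=\alpha_\mp(\epsilon)$ and $P_{\pm S}A_g=A_gP_{\mp S}$ make the two Kraus terms interchange $P_S(\epsilon)\leftrightarrow P_S(-\epsilon)$, so that the off-diagonal blocks $P_{\pm S}A_gP_{\mp S}$ survive with weight $\zeta$ instead of cancelling — which is precisely what separates weak from projective measurement.
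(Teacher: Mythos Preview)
Your proof is correct and follows essentially the same approach as the paper: both exploit the relation $SA_gS=(-1)^{\sigma_S(g)}A_g$ (equivalently $P_{\pm S}A_g=A_gP_{\pm(-1)^{\sigma_S(g)}S}$) to push $A_g$ through the Kraus operators, then invoke $P_SP_{-S}=0$, $P_S+P_{-S}=\identity$, and the $\alpha_\pm$ identities to collapse the result. The only cosmetic difference is that the paper carries out a single unified computation with the sign $(-1)^{\sigma_S(g)}$ tracked through the indices, whereas you split explicitly into the commuting and anticommuting cases; your identification of $\mathcal{P}_{S,\infty}$ via the limit $\zeta\to 0$ is equivalent to the paper's remark that $\sum_{g\in\sigma_S^{-1}(0)}\hat{\mathcal{P}}_g$ is the projection onto the commutant of $S$.
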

\begin{proof}
Observe first that, from the definition of $W_{g}^{\mathbb{C}}$, 
\bea
A_{g}P_{s_{2}S} &=& \frac{1}{2}A_{g}\big(\identity + s_{2}S\big) = \frac{1}{2}\Big(\identity + s_{2}(-1)^{\sigma_{S}(g)}S\Big)A_{g} \notag \\
&=& P_{s_{2}(-1)^{\sigma_{S}(g)}S}A_{g},
\eea
so that, using the facts that $P_{\pm S}^{2} = P_{\pm S}$, $P_{S}P_{-S} = P_{-S}P_{S} = 0$ [in Eq.~\eqref{eq:29c}], $P_{S} + P_{-S} = \identity$, $\alpha_{\pm}^{2}(\epsilon) + \alpha_{\pm}^{2}(-\epsilon) = 1$, and $\alpha_{+}(\epsilon)\alpha_{-}(\epsilon) + \alpha_{+}(-\epsilon)\alpha_{-}(-\epsilon) = \sech(\epsilon) = \zeta$ [in Eq.~\eqref{eq:29e}],
\begin{subequations}
\begin{align}
	\mathcal{P}_{S,\epsilon}(A_{g}) &= \sum_{b=\pm 1}\sum_{s_{1}, s_{2} = \pm}\alpha_{s_{1}}(b\epsilon)\alpha_{s_{2}}(b\epsilon)\times  \notag \\
	&\qquad\qquad P_{s_{1}S} P_{s_{2}(-1)^{\sigma_{S}(g)}S}A_{g}\\
	&= \sum_{b=\pm 1}\sum_{s_{1}, s_{2} = \pm}\alpha_{s_{1}}(b\epsilon)\alpha_{s_{2}(-1)^{\sigma_{S}(g)}}(b\epsilon)\times\notag \\
	&\qquad\qquad  P_{s_{1}S} P_{s_{2}S}A_{g}\\
	\label{eq:29c}
	&= \sum_{b=\pm 1}\sum_{s = \pm}\alpha_{s}(b\epsilon)\alpha_{s(-1)^{\sigma_{S}(g)}}(b\epsilon)P_{sS}A_{g}\\
	&= \begin{cases}\sum_{b=\pm 1}\sum_{s = \pm}\alpha_{s}^{2}(b\epsilon)P_{sS}A_{g} \\
	\textrm{if}\ \ \sigma_{S}(g) = 0\\ \sum_{b=\pm 1}\alpha_{+}(b\epsilon)\alpha_{-}(b\epsilon)\sum_{s = \pm}P_{sS}A_{g} \\
	 \textrm{if}\ \ \sigma_{S}(g) = 1\end{cases}\\
	\label{eq:29e}
	&= \begin{cases}A_{g} & \sigma_{S}(g) = 0\\ \zeta A_{g} & \sigma_{S}(g) = 1\end{cases},
\end{align}
\end{subequations}
which proves Eq.~\eqref{eq:P_Seps}. {Equation~\eqref{eq:P_Seps-1} follows from the observation that $\sum_{g\in \sigma_{S}^{-1}(0)}\hat{\mathcal{P}}_{g}$ is the projection into the subspace of operators that commute with $S$ {(the commutant or centralizer of $S$)}, which is precisely the strong (von Neumann) {non-selective} measurement of $S$, $\mathcal{P}_{S,\infty}$.}
\end{proof}
We are now ready to see the effect of the complete POVM defined in Eqs.~\eqref{eq:P-allS} and \eqref{eq:P-genS}.
\begin{mylemma}
\label{lem:measurementFullGroup}
For any $g\in\mathbf{S}$,  $A_{g}\in W_{g}^{\mathbb{C}}$, and $\epsilon>0$, the weak measurement $\mathcal{P}_{\epsilon}$ of the full stabilizer group has the effect
\begin{equation}
\mathcal{P}_{\epsilon}(A_{g}) = \begin{cases}A_{g} & g = \identity\\\zeta^{q}A_{g} & \text{else}.\end{cases}
\label{eq:P_eps}
\end{equation}
so that the weak measurement may be written as
\bea
\mathcal{P}_{\epsilon} &{=}& \hat{\mathcal{P}}_{\identity} + \zeta^{q}\sum_{g\neq\identity}\hat{\mathcal{P}}_{g} = \hat{\mathcal{P}}_{\identity} + \zeta^{q}(\identity-\hat{\mathcal{P}}_{\identity}) \notag \\
&{=}& \big(1-\zeta^{q}\big)\hat{\mathcal{P}}_{\identity} + \zeta^{q}\identity = \big(1-\zeta^{q}\big)\mathcal{P}_{\infty} + \zeta^{q}\identity.
\label{eq:calPeps}
\eea
\end{mylemma}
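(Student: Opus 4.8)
The plan is to use the fact that $\mathcal{P}_{\epsilon}$ is, by definition~\eqref{eq:P-allS}, the composition of the single-stabilizer measurements $\mathcal{P}_{S,\epsilon}$ over all $S\in\mathbf{S}$, together with the eigen-relation already recorded in Eq.~\eqref{eq:P_Seps}. First I would observe that, by the preceding lemma, each $\mathcal{P}_{S,\epsilon}$ acts on a homogeneous component $A_{g}\in W_{g}^{\mathbb{C}}$ merely as multiplication by the scalar $\zeta^{\sigma_{S}(g)}$; since scalars commute, the composition over $\mathbf{S}$ is manifestly order-independent and yields $\mathcal{P}_{\epsilon}(A_{g}) = \zeta^{\sum_{S\in\mathbf{S}}\sigma_{S}(g)}A_{g}$.

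The next step is to evaluate the exponent. Using the symmetry $\sigma_{S}(g) = \sigma_{g}(S)$, it equals $\sum_{S\in\mathbf{S}}\sigma_{g}(S) = |\sigma_{g}^{-1}(1)|$, the number of stabilizer elements on which the homomorphism $\sigma_{g}$ takes the value $1$. Lemma~\ref{lem:homomorphismProperties} now settles both cases: if $g = \identity$ then $\sigma_{g}\equiv 0$, so the exponent vanishes and $\mathcal{P}_{\epsilon}(A_{\identity}) = A_{\identity}$; if $g\neq\identity$ then $|\sigma_{g}^{-1}(1)| = |\mathbf{S}|/2 = 2^{\bar{Q}-1} = q$, so $\mathcal{P}_{\epsilon}(A_{g}) = \zeta^{q}A_{g}$. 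This is precisely Eq.~\eqref{eq:P_eps}.

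Finally, to obtain the operator identity~\eqref{eq:calPeps} I would decompose an arbitrary $A\in\linOps$ as $A = \sum_{g\in\mathbf{S}}\hat{\mathcal{P}}_{g}(A)$ via Eq.~\eqref{eq:projection}; linearity and the eigen-relation just proved give $\mathcal{P}_{\epsilon} = \hat{\mathcal{P}}_{\identity} + \zeta^{q}\sum_{g\neq\identity}\hat{\mathcal{P}}_{g}$, and rewriting $\sum_{g\neq\identity}\hat{\mathcal{P}}_{g} = \identity - \hat{\mathcal{P}}_{\identity}$ collapses the right-hand side to $(1-\zeta^{q})\hat{\mathcal{P}}_{\identity} + \zeta^{q}\identity$. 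Identifying $\hat{\mathcal{P}}_{\identity}$ with the strong non-selective measurement $\mathcal{P}_{\infty}$ of the group---the projection onto the joint commutant of $\mathbf{S}$, exactly as in the remark after Eq.~\eqref{eq:P_Seps-1}---completes the proof. The only point requiring any care is invoking the counting identity $\sum_{S\in\mathbf{S}}\sigma_{S}(g) = q$ for $g\neq\identity$ correctly, which is immediate from Lemma~\ref{lem:homomorphismProperties}; no subtlety about composing superoperators arises, since each factor $\mathcal{P}_{S,\epsilon}$ acts by a scalar on each isotypical component.
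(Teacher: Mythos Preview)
Your proof is correct and follows essentially the same approach as the paper: you compose the single-stabilizer eigen-relation~\eqref{eq:P_Seps} over $\mathbf{S}$, reduce the exponent to $|\sigma_{g}^{-1}(1)|$, and invoke Lemma~\ref{lem:homomorphismProperties}. The only cosmetic difference is that the paper writes each $\mathcal{P}_{S,\epsilon}$ as $\sum_{g}\zeta^{\sigma_{S}(g)}\hat{\mathcal{P}}_{g}$ and collapses the product of sums using $\hat{\mathcal{P}}_{g}\hat{\mathcal{P}}_{h}=0$ for $g\neq h$, whereas you work elementwise on $A_{g}$; these are the same computation.
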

\begin{proof}
Since $\hat{\mathcal{P}}_{g}\hat{\mathcal{P}}_{h} = 0$ for $g\neq h$,
\bea
\mathcal{P}_{\epsilon} &=& \prod_{S\in\mathbf{S}}\mathcal{P}_{S,\epsilon}  = \prod_{S\in\mathbf{S}}\sum_{g_{S}\in\mathbf{S}}\zeta^{\sigma_{S}(g_{S})}\hat{\mathcal{P}}_{g_{S}} = \sum_{g\in\mathbf{S}}\prod_{S\in\mathbf{S}}\zeta^{\sigma_{g}(S)}\hat{\mathcal{P}}_{g} \notag \\
&=& \sum_{g\in\mathbf{S}}\zeta^{|\sigma_{g}^{-1}(1)|}\hat{\mathcal{P}}_{g},
\eea
and from Lemma \ref{lem:homomorphismProperties}, \begin{equation}|\sigma_{g}^{-1}(1)| = \begin{cases}0 & g = \identity\\ q & \text{else}.\end{cases} ,
\end{equation}
which proves Eq.~\eqref{eq:P_eps}. Equation~\eqref{eq:calPeps} follows from the same reasoning as used for Eq.~\eqref{eq:P_Seps-1} in the previous lemma.
\end{proof}

\begin{mylemma}
\label{lem:measurementGenerators}
For any $g\in\mathbf{S}$,  $A_{g}\in W_{g}^{\mathbb{C}}$, and $\epsilon>0$, the weak measurement $\overline{\mathcal{P}}_{\epsilon}$ of the generators of the stabilizer group has the effect
 \begin{equation}\overline{\mathcal{P}}_{\epsilon}(A_{g}) = \zeta^{|\bar{\mathbf{S}}\cap \sigma_{g}^{-1}(1)|}A_{g}\end{equation}
so that the generators-only weak measurement may be written as 
\bea
\overline{\mathcal{P}}_{\epsilon} &=& \hat{\mathcal{P}}_{\identity} + \sum_{g\neq\identity}\zeta^{|\bar{\mathbf{S}}\cap \sigma_{g}^{-1}(1)|}\hat{\mathcal{P}}_{g} \notag \\
&=& \hat{\mathcal{P}}_{\identity} + \sum_{c=1}^{\bar{Q}}\zeta^{c}\sum_{g\in\{h\in\mathbf{S}\;:\;|\bar{\mathbf{S}}\cap \sigma_{h}^{-1}(1)| = c\}}\hat{\mathcal{P}}_{g} \notag \\
&=& \hat{\mathcal{P}}_{\identity} + \sum_{c=1}^{\bar{Q}}\zeta^{c}\sum_{1\leq j_{1}<\dots< j_{c}\leq \bar{Q}}\hat{\mathcal{P}}_{\bar{S}_{j_{1}}\cdots \bar{S}_{j_{c}}}.
\eea
\end{mylemma}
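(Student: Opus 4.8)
The plan is to reduce everything to the single-stabilizer result of the previous lemma and then repackage the resulting scalar as a combinatorial count over the generating set, exactly as was done for the full group in Lemma~\ref{lem:measurementFullGroup}. First I would recall that, by Eq.~\eqref{eq:P_Seps}, each factor $\mathcal{P}_{\bar{S},\epsilon}$ acts on $A_{g}\in W_{g}^{\mathbb{C}}$ simply as multiplication by the scalar $\zeta^{\sigma_{\bar{S}}(g)}$; in particular the factors $\{\mathcal{P}_{\bar{S},\epsilon}\}_{\bar{S}\in\bar{\mathbf{S}}}$ act diagonally, hence commute, on every isotypical component $W_{g}^{\mathbb{C}}$, so the product $\overline{\mathcal{P}}_{\epsilon}=\prod_{\bar{S}\in\bar{\mathbf{S}}}\mathcal{P}_{\bar{S},\epsilon}$ is well defined and
\begin{equation*}
\overline{\mathcal{P}}_{\epsilon}(A_{g})=\Big(\prod_{\bar{S}\in\bar{\mathbf{S}}}\zeta^{\sigma_{\bar{S}}(g)}\Big)A_{g}=\zeta^{\sum_{\bar{S}\in\bar{\mathbf{S}}}\sigma_{\bar{S}}(g)}A_{g}.
\end{equation*}
The one point requiring care is that the exponent here is a sum of integers $0$ and $1$, one per generator, and is \emph{not} reduced modulo $2$; this is what distinguishes the computation from the behaviour of the homomorphism $\sigma_{g}$ itself.

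Next, using the symmetry $\sigma_{\bar{S}}(g)=\sigma_{g}(\bar{S})$, the exponent is precisely the number of generators $\bar{S}$ with $\sigma_{g}(\bar{S})=1$, i.e. $\sum_{\bar{S}\in\bar{\mathbf{S}}}\sigma_{\bar{S}}(g)=|\bar{\mathbf{S}}\cap\sigma_{g}^{-1}(1)|$, which yields the first displayed formula of the lemma. For the operator identity I would invoke the resolution $A=\sum_{g\in\mathbf{S}}\hat{\mathcal{P}}_{g}(A)$ from Eq.~\eqref{eq:projection}, with $\hat{\mathcal{P}}_{g}(A)\in W_{g}^{\mathbb{C}}$, to obtain $\overline{\mathcal{P}}_{\epsilon}=\sum_{g\in\mathbf{S}}\zeta^{|\bar{\mathbf{S}}\cap\sigma_{g}^{-1}(1)|}\hat{\mathcal{P}}_{g}$; the $g=\identity$ term contributes $\hat{\mathcal{P}}_{\identity}$ because $\sigma_{\identity}\equiv 0$ by Lemma~\ref{lem:homomorphismProperties}, and for $g\neq\identity$ the count $c:=|\bar{\mathbf{S}}\cap\sigma_{g}^{-1}(1)|$ takes values in $\{1,\dots,\bar{Q}\}$.

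Finally I would make the labeling explicit through the isomorphism $B$: if $g=\bar{S}_{j_{1}}\cdots\bar{S}_{j_{c}}$ with $1\le j_{1}<\dots<j_{c}\le\bar{Q}$, then $B^{-1}(g)$ is the participation-bit vector of $\{j_{1},\dots,j_{c}\}$ while $B^{-1}(\bar{S}_{j})$ is the $j$-th standard basis vector, so $\sigma_{g}(\bar{S}_{j})$ equals $1$ exactly for $j\in\{j_{1},\dots,j_{c}\}$; hence $|\bar{\mathbf{S}}\cap\sigma_{g}^{-1}(1)|=c$, and $\{j_{1}<\dots<j_{c}\}\mapsto\bar{S}_{j_{1}}\cdots\bar{S}_{j_{c}}$ is a bijection onto $\{g\in\mathbf{S}\;:\;|\bar{\mathbf{S}}\cap\sigma_{g}^{-1}(1)|=c\}$. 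Substituting this parametrization into the sum over $g\neq\identity$, grouped by the value of $c$, gives the last displayed expression. The main (mild) obstacle is purely bookkeeping: keeping the exponent as an honest count rather than a mod-$2$ residue, and correctly matching each $g\neq\identity$ with the unique subset of generators whose product it is.
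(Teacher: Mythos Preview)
Your proof is correct and follows essentially the same route as the paper: apply the single-stabilizer action $\mathcal{P}_{\bar{S},\epsilon}(A_g)=\zeta^{\sigma_{\bar{S}}(g)}A_g$ from Eq.~\eqref{eq:P_Seps} to each generator, collect the exponent as the integer count $|\bar{\mathbf{S}}\cap\sigma_g^{-1}(1)|$, and then pass to the projector decomposition via Eq.~\eqref{eq:projection}. The paper's version is only slightly terser---it works directly with the expansion $\mathcal{P}_{\bar{S},\epsilon}=\sum_g\zeta^{\sigma_{\bar{S}}(g)}\hat{\mathcal{P}}_g$ and uses $\hat{\mathcal{P}}_g\hat{\mathcal{P}}_h=0$ for $g\neq h$ to collapse the product of sums---but the content and logic are identical, and your explicit bijection for the final combinatorial rewriting is a detail the paper leaves implicit.
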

\begin{proof}
Since $\hat{\mathcal{P}}_{g}\hat{\mathcal{P}}_{h} = 0$ for $g\neq h$,
\bea
\overline{\mathcal{P}}_{\epsilon} &=& \prod_{\bar{S}\in\bar{\mathbf{S}}}\mathcal{P}_{\bar{S},\epsilon}  = \prod_{\bar{S}\in\bar{\mathbf{S}}}\sum_{g_{\bar{S}}\in\mathbf{S}}\zeta^{\sigma_{\bar{S}}(g_{\bar{S}})}\hat{\mathcal{P}}_{g_{\bar{S}}} {=} \sum_{g\in\mathbf{S}}\prod_{\bar{S}\in\bar{\mathbf{S}}}\zeta^{\sigma_{g}(\bar{S})}\hat{\mathcal{P}}_{g} \notag \\
&{=}& \sum_{g\in\mathbf{S}}\zeta^{|\bar{\mathbf{S}}\cap\sigma_{g}^{-1}(1)|}\hat{\mathcal{P}}_{g},
\eea
and from Lemma \ref{lem:homomorphismProperties} and the definition of $\sigma_{g}$, 
\begin{equation}
\label{eq:gen-power}
|\bar{\mathbf{S}}\cap\sigma_{g}^{-1}(1)|
\begin{cases} 
= 0 & g = \identity\\ \in\{1,\dots,\bar{Q}\} & \text{else}.
\end{cases}
\end{equation}
\end{proof}

We note that the result of Lemma~\ref{lem:measurementGenerators} is stronger than that reported in our earlier work \cite{PRDL:12}, where we used the lower bound $1$ for $g\neq\identity$ in place of Eq.~\eqref{eq:gen-power}.


\section{Analysis of the Distance Upper Bound}
\label{sec:boundAnalysis}

In this section, we analyze the behavior of the distance $D[\varrho_{S}(\tau), \varrho_{S}^{0}(\tau)]$ [Eq.~\eqref{eq:D}] and show that it converges to $0$ in the limit of large numbers of measurements.  The Hamiltonian is orthogonally decomposed as $H(t) = \sum_{g\in\mathbf{S}}H_{g}(t)$, where $H_{g}(t)$ is the component of $H(t)$ lying in the isotypical (error syndrome) subspace $W_{g}$.  This yields the superoperators $\mathcal{L}_{g} = i[H_{g}, \cdot]$.  We also write $H_{SB} = \sum_{g\neq\identity} H_{g}$, and define $J_{0} = 2\|H_{\identity}\|_{\infty}$ and $J_{1} = 2\|H_{SB}\|_{\infty}$, where $\|\cdot\|_{\infty}$ denotes the $L_{\infty}$ norm, i.e., $\|H_{\identity}\|_{\infty} = \esssup_{t\in[0,\tau]} \|H_{\identity}(t)\|$ in which $\esssup$ denotes the essential supremum and $\|\cdot\|$ denotes the Schatten $\infty$ norm, i.e. the maximum singular value.  This guarantees that $\|\mathcal{L}_{\identity}\|_{\infty} \leq J_{0}$ and $\|\mathcal{L}_{g}\|_{\infty}\leq J_{1}$ for all $g\neq \identity$ in $\mathbf{S}$.  These finite bound conditions may also be shown to imply rapid decay of the noise spectrum at high frequencies, guaranteeing an effective spectral cutoff; conversely, an insufficiently rapidly decaying noise spectrum implies that our finite bound conditions are not satisfied (see Appendix \ref{app:correlationFunction}).  

Let $\mathbb{N} = \{1,2,3,\dots\}$ and $\mathbb{N}_{0} = \{0,1,2,\dots\}$ and observe that the unitary superoperator \eqref{eq:supU} describing the joint system-bath evolution between successive measurements can be written as
\begin{subequations}
\begin{align}
\mathcal{U}\left(\frac{j}{M}\tau, \frac{j+1}{M}\tau\right) & = \identity + \int_{\frac{j}{M}\tau}^{\frac{j+1}{M}\tau}\mathcal{L}(t_{1})\,\rmd t_{1}\notag \\
& + \int_{\frac{j}{M}\tau}^{\frac{j+1}{M}\tau}\int_{\frac{j}{M}\tau}^{t_{1}}\mathcal{L}(t_{1})\mathcal{L}(t_{2})\,\rmd t_{2}\,\rmd t_{1} + \dots\\
& = \sum_{k=0}^{\infty}\sum_{\vec{\alpha}\in\mathbf{S}^{k}}\mathfrak{L}_{j}^{k}(\vec{\alpha})
\end{align}
\end{subequations}
by the Dyson expansion and by the isotypical decomposition $\mc{L} = \sum_{g\in\mathbf{S}} \mc{L}_g$, where $\mathbf{S}^{k}$ is the set of all $k$-tuples of stabilizer group elements, $\mathfrak{L}_{j}^{0} = \identity$, and for any $k>0$ and any $\vec{\alpha}\in\mathbf{S}^{k}$, 
\bea
\mathfrak{L}_{j}^{k}(\vec{\alpha}) &:= \int_{\frac{j-1}{M}\tau}^{\frac{j}{M}\tau}\int_{\frac{j-1}{M}\tau}^{t_{1}}\cdots\int_{\frac{j-1}{M}\tau}^{t_{k-1}}\mathcal{L}_{\alpha_{1}}(t_{1})\mathcal{L}_{\alpha_{2}}(t_{2})\cdots\notag\\
&\cdots\mathcal{L}_{\alpha_{k}}(t_{k})\,\rmd t_{k} \cdots\rmd t_{2}\,\rmd t_{1}.
\eea
Lemmas \ref{lem:gradedAlgebras} and \ref{lem:measurementFullGroup} imply that 
\bea
&\mathcal{P}_{\epsilon}^{i_{j}}\mathfrak{L}_{M+1-i_{1}-\dots -i_{j}}^{l_{j}}(\vec{\alpha}^{j})\cdots\mathfrak{L}_{M+1-i_{1}-\dots -i_{\eta}}^{l_{\eta}}(\vec{\alpha}^{\eta})(\varrho_{SB}) =\notag \\
&\  \zeta^{i_{j}\mu_{j}}\mathfrak{L}_{M+1-i_{1}-\dots -i_{j}}^{l_{j}}(\vec{\alpha}^{j})\cdots\mathfrak{L}_{M+1-i_{1}-\dots -i_{\eta}}^{l_{\eta}}(\vec{\alpha}^{\eta})(\varrho_{SB}),\notag \\
\eea
where $\mu_{j} = \upsilon[\big(\alpha_{1}^{j}\cdots\alpha_{l_{j}}^{j}\big)\cdots \big(\alpha_{1}^{\eta}\cdots\alpha_{l_{\eta}}^{\eta}\big)]$ and $\upsilon:\mathbf{S}\to\{0,1\}$ is defined by \begin{equation}
	\upsilon(g) = 0 \text{ for } g\neq \identity \text{ and }\upsilon(\identity) = 1.
\end{equation}
This follows from the fact, implied by Lemma \ref{lem:gradedAlgebras}, that a composition of Hamiltonian superoperators $\mathcal{L}_{\alpha_{1}}\cdots\mathcal{L}_{\alpha_{k}}$ will map the initial density matrix (an element of the isotypical space $W_{\identity}$) to the isotypical space $W_{\alpha_{1}\cdots\alpha_{k}}$, and that this space, by Lemma \ref{lem:measurementFullGroup}, determines the action of the measurement $\mathcal{P}_{\epsilon}$.  Then it is found that
\begin{subequations}
\begin{align}
	\varrho_{SB}(\tau) & = \mathcal{P}_{\epsilon }\mathcal{U}\left(\frac{M-1}{M}\tau, \tau\right)\mathcal{P}_{\epsilon}\mathcal{U}\left(\frac{M-2}{M}\tau, \frac{M-1}{M}\tau\right)\notag \\
	&\quad \cdots\mathcal{P}_{\epsilon}\mathcal{U}\left(0, \frac{1}{M}\tau\right)\varrho _{SB}= \varrho_{SB} +\notag \\
	& \sum_{\eta=1}^{M}\sum_{l_{1},\dots,l_{\eta} = 1}^{\infty}\sum_{\jmdstack{\vec{i}\in\mathbb{N}^{\eta}}{\|\vec{i}\|_{1}\leq M}}\sum_{\jmdstack{\vec{\alpha}^{j}\in\mathbf{S}^{l_{j}}}{j=1,\dots,\eta}}\mathcal{P}_{\epsilon}^{i_{1}}\mathfrak{L}_{M-i_{1}+1}^{l_{1}}(\vec{\alpha}^{1})\cdots\notag \\
	&\quad \cdots \mathcal{P}_{\epsilon}^{i_{\eta}}\mathfrak{L}_{M+1-i_{1}-\dots -i_{\eta}}^{l_{\eta}}(\vec{\alpha}^{\eta})(\varrho_{SB}) \label{eqn:dysonWithProjectors}\\
	& = \varrho_{SB} + \sum_{\eta=1}^{M}\sum_{l_{1},\dots,l_{\eta} = 1}^{\infty}\sum_{\jmdstack{\vec{i}\in\mathbb{N}^{\eta}}{\|\vec{i}\|_{1}\leq M}}\sum_{\jmdstack{\vec{\alpha}^{j}\in\mathbf{S}^{l_{j}}}{j=1,\dots,\eta}}\zeta^{\frac{Q+1}{2}\big(\vec{i}\cdot\vec{\mu}\big)}\times\notag \\
	&\quad \mathfrak{L}_{M-i_{1}+1}^{l_{1}}(\vec{\alpha}^{1})\cdots\mathfrak{L}_{M+1-i_{1}-\dots -i_{\eta}}^{l_{\eta}}(\vec{\alpha}^{\eta})(\varrho_{SB}),
\end{align}
\end{subequations}
where $\vec{\mu}\in\{0,1\}^{\eta}$, $\mu_{j} = \upsilon(g_{\eta} g_{\eta-1}\cdots g_{j})$, and $g_{j} = \alpha_{1}^{j}\cdots \alpha_{l_{j}}^{j}$.  Then
\begin{align}
	D&[\varrho_{S}(\tau), \varrho_{S}^{0}(\tau)] = \frac{1}{2}\left\Vert \text{Tr}_{B}\left[ \varrho_{SB}(\tau) - \varrho_{SB}^{0}(\tau)\right]\right \Vert_{1}\notag \\
	& = \frac{1}{2}\Vert \mc{W} + \mc{S}\Vert_{1} ,
\end{align}
where
\begin{align}
	\mc{W}&:= 
	\Tr_{B}\left[\sum_{\eta=1}^{M }\sum_{l_1,...,l_\eta =1}^{\infty}\sum_{\jmdstack{\vec{i}\in\mathbb{N}^{\eta}}{\|\vec{i}\|_{1}\leq M}}\sum_{\jmdstack{\vec{\alpha}^{j}\in\mathbf{S}^{l_{j}}}{j=1,\dots,\eta \text{ where } \vec{\mu}\neq 0}}\!\!\!\!\!\zeta^{\frac{Q+1}{2}\big(\vec{i}\cdot\vec{\mu}\big)}\times\right.\notag \\
	&\left.\mathfrak{L}_{M-i_{1}+1}^{l_{1}}(\vec{\alpha}^{1})\cdots\mathfrak{L}_{M+1-i_{1}-\dots -i_{\eta}}^{l_{\eta}}(\vec{\alpha}^{\eta})(\varrho _{SB})\right]
\end{align}
	and
\begin{align}
	& \mc{S}:= 
	\Tr_{B}\left[\varrho_{SB} + \sum_{\eta=1}^{M }\sum_{l_1,...,l_\eta =1}^{\infty}\sum_{\jmdstack{\vec{i}\in\mathbb{N}^{\eta}}{\|\vec{i}\|_{1}\leq M}}\sum_{\jmdstack{\vec{\alpha}^{j}\in\mathbf{S}^{l_{j}}}{j=1,\dots,\eta \text{ where }\vec{\mu}= 0}}\right. \notag \\
&\left.	\mathfrak{L}_{M-i_{1}+1}^{l_{1}}(\vec{\alpha}^{1})\cdots\mathfrak{L}_{M+1-i_{1}-\dots -i_{\eta}}^{l_{\eta}}(\vec{\alpha}^{\eta})(\varrho _{SB}) \right. \notag \\
&\left. - \mathcal{U}^{0}(\tau)(\varrho_{SB})\right] ,
\end{align}
with $\mathcal{U}^{0}(\tau)$ the system-bath unitary evolution superoperator generated solely by $H_{\identity}$.

\subsection{The ``Weak'' Term}
The ``strong'' term $\mc{S}$ represents the strong-measurement limit of the norm argument, i.e., the limit as $\epsilon\to\infty$.  The behavior of $\|\mc{S}\|_1$ will be analyzed in Section \ref{sec:badTerm} and shown to vanish as $M\to\infty$.  In the remainder of this section, we study the behavior of the ``weak'' term $\|\mc{W}\|_1$.  To that end, observe that for any $1\leq j\leq M$ and any $\vec{\alpha}\in\mathbf{S}^{k}$,
\bea
&\Vert \mathfrak{L}_{j}^{k}(\vec{\alpha}) \Vert \leq \Vert \mathcal{L}_{\alpha_{1}}\Vert_{\infty}\cdots \Vert \mathcal{L}_{\alpha_{k}}\Vert_{\infty} \int_{\frac{j-1}{M}\tau}^{\frac{j}{M}\tau}\int_{\frac{j-1}{M}\tau}^{t_{1}}\cdots  \\
&\qquad\int_{\frac{j-1}{M}\tau}^{t_{k-1}}\rmd t_{k} \cdots\rmd t_{2}\,\rmd t_{1} = \Vert \mathcal{L}_{\alpha_{1}}\Vert_{\infty}\cdots \Vert \mathcal{L}_{\alpha_{k}}\Vert_{\infty} \frac{(\tau/M)^{k}}{k!},\notag
\eea
where $\Vert \mathcal{L}_{\alpha_{i}}\Vert_{\infty} := \mathrm{ess\,sup}_{0\leq t\leq \tau}\|\mathcal{L}_{\alpha_{i}}(t)\|$, so that
\begin{subequations}
\begin{align}
\|\mc{W}\|_{1} & \leq \sum_{\eta=1}^{M }\sum_{l_1,...,l_\eta =1}^{\infty}\sum_{\jmdstack{\vec{i}\in\mathbb{N}^{\eta}}{\|\vec{i}\|_{1}\leq M}}\sum_{\jmdstack{\vec{\alpha}^{j}\in\mathbf{S}^{l_{j}}}{j=1,\dots,\eta \text{ where }\vec{\mu}\neq 0}}\zeta^{\frac{Q+1}{2}\big(\vec{i}\cdot\vec{\mu}\big)}\notag\\
&\Vert\mathfrak{L}_{M-i_{1}+1}^{l_{1}}(\vec{\alpha}^{1})\Vert\cdots\Vert\mathfrak{L}_{M+1-i_{1}-\dots -i_{\eta}}^{l_{\eta}}(\vec{\alpha}^{\eta})\Vert\\
& \leq \sum_{\eta=1}^{M }\sum_{\jmdstack{\vec{i}\in\mathbb{N}^{\eta}}{\|\vec{i}\|_{1}\leq M}}\sum_{\jmdstack{\vec{\mu}\in\{0,1\}^{\eta}}{\vec{\mu}\neq 0}}\zeta^{\frac{Q+1}{2}\big(\vec{i}\cdot\vec{\mu}\big)}\times\notag\\
&\sum_{l_1,...,l_\eta =1}^{\infty}\frac{(\tau/M)^{l_{1}+\dots+l_{\eta}}}{l_{1}!\cdots l_{\eta}!}\sum_{\jmdstack{g_{1},\dots,g_{\eta}\in\mathbf{S}}{\upsilon(g_{\eta}\cdots g_{j}) = \mu_{j}}}\prod_{j=1}^{\eta}\gamma_{l_{j}}(g_{j}),
\end{align}
\label{eqn:GoodTermDecomposition}
\end{subequations}
where $\gamma_{l}(g)$ is defined by 
\bea
&\sum_{\jmdstack{\vec{\alpha}\in(0,\dots,Q)^{l}}{S_{\alpha_{1}}\cdots S_{\alpha_{l}} = g}}\Vert\mathcal{L}_{\alpha_{1}}\Vert_{\infty}\cdots\Vert\mathcal{L}_{\alpha_{l}}\Vert_{\infty} \leq \gamma_{l}(g) := \notag \\
&\qquad \sum_{s=0}^{l}\binom{l}{s}J_{0}^{s}J_{1}^{l-s}f_{l-s}(g) ,
\eea
in which $f_{l}(g)$ denotes the number of ways of generating $g\in \mathbf{S}$ from $l$ non-identity group elements (i.e., $g = S_{\alpha_{1}}\cdots S_{\alpha_{l}}$).  We have used in Eq.~\eqref{eqn:GoodTermDecomposition} the triangle inequality and submultiplicativity 
\begin{align}
	\Vert AB\Vert  & \leq \Vert A\Vert \Vert B\Vert ,
\end{align}
(for any pair of operators $A$ and $B$) and the fact that if $\varrho $ is a
normalized state then $\Vert \varrho \Vert _{1}=1$.  We have also used the norm inequalities 
\begin{equation}
	\Vert AB\Vert _{1}\leq \Vert A\Vert \Vert B\Vert _{1} \; \text{ and } \;
	\Vert \mathrm{Tr}_{B}\left[ A\right] \Vert _{1} \leq \Vert A\Vert_{1},  \label{TrB}
\end{equation}
valid for any operators $A$ and $B$ acting on $\mathcal{H}_{S}\otimes \mathcal{H}_{B}$ \cite{Neumann:37, Watrous:04}.

To find a closed expression for $f_{l}(g)$, first note that $f_{l}(g)$ is constant over all $g\neq \identity$.  Then for any $g= S_{\alpha_{1}}\cdots S_{\alpha_{l}}\neq\identity$ it may be observed that $g' = S_{\alpha_{1}}\cdots S_{\alpha_{l-1}}$ can be anything other than $g$ (otherwise $S_{\alpha_{l}} = \identity$, which is forbidden) and then $g'$ and $g$ uniquely define $S_{\alpha_{l}}$.  One such choice of $g'$ is $g'=\identity$, the other $Q-1$ choices are non-identity.  The same logic applied to the case $g=\identity$ shows that all $Q$ possible choices for $g'$ are non-identity.  It follows that the quantity $f_{l}(g)$ satisfies 
\begin{subequations}
\begin{align}
	f_{l}(g) & = \sum_{g'\neq g}f_{l-1}(g') = (Q-1)f_{l-1}(g) + f_{l-1}(\identity)\\
	f_{l}(\identity) & = \sum_{g'\neq\identity}f_{l-1}(g') = Q f_{l-1}(g).
\end{align}
\end{subequations}
Therefore $f_{l}(g)$ satisfies the linear recurrence (see Appendix \ref{app:solvingLinearRecurrences})
\begin{equation}
	f_{l}(g) = (Q-1)f_{l-1}(g) + Q f_{l-2}(g).
\end{equation}
The characteristic polynomial of this recurrence, $x^{2} - (Q-1)x - Q = 0$, has roots $-1$ and $Q$, so $f_{l}(g)$ has the general form $f_{l}(g) = AQ^{l} + B(-1)^{l}$.  It is easily seen that $f_{0}(g) = 0$ and $f_{1}(g) = 1$, so $A$ and $B$ may be found by solving the linear system
\begin{subequations}
\begin{align}
	A + B & = 0\\
	AQ - B & = 1,
\end{align}
\end{subequations}
which yields
\begin{equation}A = \frac{1}{Q+1} \qquad \qquad B = -\frac{1}{Q+1}\end{equation}
so that 
\begin{equation}
	f_{l}(g) = \frac{Q^{l} - (-1)^{l}}{Q+1} \text{ for } g\neq \identity  \qquad f_{l}(\identity) = \frac{Q^{l} + Q(-1)^{l}}{Q+1}.
\end{equation}
Now, since \begin{equation}
	\sum_{s=0}^{l}\binom{l}{s}J_{0}^{s}J_{1}^{l-s}a^{l-s} = \left(J_{0} + a J_{1}\right)^{l},
\end{equation}
for any fixed $a\in\mathbb{R}$, it is readily seen that
\bes
\bea
	\gamma_{l}(g) &=& \frac{(J_{0} + Q J_{1})^{l} - (J_{0} - J_{1})^{l}}{Q+1} \text{ for } g\neq \identity \\
	 \gamma_{l}(\identity) &=& \frac{(J_{0} + Q J_{1})^{l} + Q(J_{0} - J_{1})^{l}}{Q+1} .
\eea
\ees
Defining 
\bea
\Gamma_{g}(\tau/M) := \sum_{l=1}^{\infty}\frac{(\tau/M)^{l}\gamma_{l}(g)}{l!},
\eea 
we have
\bes
\bea
\Gamma_{g}(\tau/M) &= e^{\frac{\tau J_{0}}{M}}\left[\frac{e^{\frac{\tau  Q J_{1}}{M}} - e^{-\frac{\tau J_{1}}{M}}}{Q+1}\right] \text{ for } g\neq \identity \\
 \Gamma_{\identity}(\tau/M) &= e^{\frac{\tau J_{0}}{M}}\left[\frac{e^{\frac{\tau Q J_{1}}{M}} + Qe^{-\frac{\tau J_{1}}{M}}}{Q+1}\right] - 1. 
\eea
\ees

Observe that for $g\neq \identity$,
\bea
	&\Gamma_{\identity}(\tau/M) - \Gamma_{g}(\tau/M) = e^{\frac{\tau (J_{0}-J_{1})}{M}} - 1 \notag \\
	&\qquad \begin{cases} \geq 0 \text{ for all }\tau\geq 0 & \text{when }J_{0}\geq J_{1}\\ \leq 0 \text{ for all }\tau \geq 0 & \text{when }J_{0}\leq J_{1}.\end{cases}
\eea
In addition, for a given binary vector $\vec{\mu}\in\{0,1\}^{\eta}$, the cardinality of the set of $\eta$-tuples $(g_{1},\dots,g_{\eta})$ such that $\upsilon(g_{\eta}\cdots g_{j}) = \mu_{j}$ for all $j$ is $Q^{\|\vec{\mu}\|_{1}}$, because $g_{j}$ must equal $g_{\eta}\cdots g_{j+1}$ when $\mu_{j}=0$, and $g_{j}$ may be any of the $Q$ other elements of $\mathbf{S}$ when $\mu_{j} = 1$.
Therefore, 
\begin{subequations}
\begin{align}
&\sum_{l_1,...,l_\eta =1}^{\infty}\frac{(\tau/M)^{l_{1}+\dots+l_{\eta}}}{l_{1}!\cdots l_{\eta}!}\sum_{\jmdstack{g_{1},\dots,g_{\eta}\in\mathbf{S}}{\upsilon(g_{\eta}\cdots g_{j}) = \mu_{j}}}\prod_{j=1}^{\eta}\gamma_{l_{j}}(g_{j})  =\notag\\
&\quad  \sum_{\jmdstack{g_{1},\dots,g_{\eta}\in\mathbf{S}}{\upsilon(g_{\eta}\cdots g_{j}) = \mu_{j}}}\prod_{j=1}^{\eta}\Gamma_{g_{j}}(\tau/M)\\
& \leq \begin{cases}Q^{\|\vec{\mu}\|_{1}}\Gamma_{g}(\tau/M)\big[\Gamma_{\identity}(\tau/M)\big]^{\eta-1} & J_{0}\geq J_{1}\\ Q^{\|\vec{\mu}\|_{1}}\big[\Gamma_{g}(\tau/M)\big]^{\eta} & J_{0}\leq J_{1},\end{cases}
\end{align}
\end{subequations}
since each product in the sum contains at least one $\Gamma_{g}$ ($g\neq \identity$).  Let 
\begin{align}
\label{eq:phi}
&\phi(M) := \sum_{\eta=1}^{M} \beta(\tau/M)^{\eta-1} \sum_{\jmdstack{\vec{\mu}\in\{0,1\}^{\eta}}{\vec{\mu}\neq 0}}\sum_{\jmdstack{\vec{i}\in\mathbb{N}^{\eta}}{\|\vec{i}\|_{1}\leq M}}\xi^{\vec{\mu}\cdot\vec{i}}Q^{\|\vec{\mu}\|_{1}} \\
&\qquad \xi := \zeta^{\frac{Q+1}{2}} ,\\
&\beta(\tau/M) := \\
&\begin{cases}\Gamma_{\identity}(\tau/M) = \frac{1}{Q+1}e^{\frac{\tau J_{0}}{M}}\left(e^{\frac{\tau Q J_{1}}{M}} + Qe^{-\frac{\tau J_{1}}{M}}\right) - 1 & J_{0} \geq J_{1}\notag\\
\Gamma_{g}(\tau/M) = \frac{1}{Q+1}e^{\frac{\tau J_{0}}{M}}\left(e^{\frac{\tau Q J_{1}}{M}} - e^{-\frac{\tau J_{1}}{M}}\right) & J_{0} \leq J_{1}.\end{cases}
\end{align}
Therefore we have
\begin{align}
&\|\mc{W}\|_1\leq \Gamma_{g}(\tau/M)\phi(M) .
\label{eqn:startingPoint}
\end{align}

A closed form will now be derived for $\Gamma_{g}\phi$ and shown to converge to zero as $M^{-1}$ for arbitrary fixed $\tau>0$ and $\xi\in(0,1)$ as $M\to\infty$.  To that end, let $r$ represent the number of non-zero elements in the $\vec{\mu}$ vector, and $u$ represent $\vec{\mu}\cdot \vec{i}$.  Then
\begin{align}
&\sum_{\jmdstack{\vec{\mu}\in\{0,1\}^{\eta}}{\vec{\mu}\neq 0}}\sum_{\jmdstack{\vec{i}\in\mathbb{N}^{\eta}}{\|\vec{i}\|_{1}\leq M}}\xi^{\vec{\mu}\cdot\vec{i}}Q^{\|\vec{\mu}\|_{1}}= \notag \\
&\quad  \sum_{r=1}^{\eta}Q^{r}\binom{\eta}{r}\sum_{u=r}^{M-(\eta-r)}\xi^{u}\#\{\textrm{ordered } r\textrm{-partitions of } u\}\times\notag\\
&\quad\#\{\textrm{orderered } (\eta-r)\textrm{-partitions of } M-u \textrm{ or less}\},
\end{align}
where an ordered $k$-partition of $n$ is an ordered set of $k$ positive integers that sum to $n$: $j_{1}+\dots + j_{k} = n$.  The number of such partitions is $\binom{n-1}{k-1}$, as may be seen by casting the question as an occupancy problem \cite{Feller:68} by considering the placing of $k-1$ physical separators between a linear arrangement of $n$ physical objects (see Fig.~\ref{fig:linearPartition}).
\begin{figure}[ht]
\centering
\leavevmode
\xymatrix@=5pt@M=5pt{& & & \ar@{-}[dd] & & \ar@{-}[dd] & &  & &  & \\\bullet & & \bullet & &\bullet & &\bullet & &\bullet & & \bullet\\ & & & & & & & & & & }
\caption{Example partition of six elements into three sets by choosing the positions of two separators from among the five possible gaps between adjacent elements.}
\label{fig:linearPartition}
\end{figure}
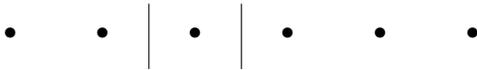
Moreover, by the same reasoning, the number of ordered $k$-partitions of $n$ or less is $\binom{n}{k}$, seen by considering the placing of $k$ separators between $n+1$ objects, yielding $j_{1}+\dots+j_{k+1} = n+1$, and then discarding $j_{k+1}$.  So 
\begin{align}
	&\sum_{\jmdstack{\vec{\mu}\in\{0,1\}^{\eta}}{\vec{\mu}\neq 0}}\sum_{\jmdstack{\vec{i}\in\mathbb{N}^{\eta}}{\|\vec{i}\|_{1}\leq M}}\xi^{\vec{\mu}\cdot\vec{i}} Q^{\|\vec{\mu}\|_{1}}= \notag \\
&\qquad \sum_{r=1}^{\eta}Q^{r}\binom{\eta}{r}\sum_{u=r}^{M-(\eta-r)}\xi^{u}\binom{u-1}{r-1}\binom{M-u}{\eta-r},
\end{align}
and we find that $\phi(M)$ in Eq.~\eqref{eq:phi} is equal to 
\begin{subequations}
\label{eqn:rawmultisums}
\begin{align}
	\phi(M) & = \sum_{\eta=1}^{M} \beta(\tau/M)^{\eta-1}\sum_{r=1}^{\eta}Q^{r}\binom{\eta}{r}\times\notag \\
	&\qquad \sum_{u=r}^{M-(\eta-r)}\xi^{u}\binom{u-1}{r-1}\binom{M-u}{\eta-r}\label{eqn:etarusum}\\
	& = \sum_{\eta,u,r=1}^{M} \beta(\tau/M)^{\eta-1}\xi^{u}Q^{r}\binom{\eta}{r}\binom{u-1}{r-1}\binom{M-u}{\eta-r},\label{eqn:uniformEtaruSum}
\end{align}
\end{subequations}
where the limits have been extended to $1,\dots, M$ since the additional terms are all zeros.

\subsection{Linear Recurrences}
Let $\Phi$ be the summand of Eq.~\eqref{eqn:rawmultisums}, i.e., \begin{equation}
\Phi(M,u,\eta,r):=\beta^{\eta-1} \xi^{u} Q^{r} \binom{\eta}{r}\binom{u-1}{r-1}\binom{M-u}{\eta-r},
\end{equation} where we regard $\beta$ as a fixed quantity for the moment (its $M$ dependence will be reintroduced later).  It may be observed that $\Phi$ is hypergeometric in all four variables, i.e., $\Phi(M+1,u,\eta,r)/\Phi(M,u,\eta,r)$ is a rational function of $M$, $u$, $\eta$, and $r$; likewise for the corresponding shift ratios on $u$, $\eta$, and $r$.  As a result, $\Phi$ admits a linear recurrence relation such that the coefficients are independent of $u$, $\eta$, and $r$, and the recurrence may be found algorithmically \cite{Petkovsek1996}.  Wegschaider's MultiSum package \cite{Wegschaider1997, Lyons2002} for Mathematica,  implementing a version of Sister Celine's method \cite{Petkovsek1996}, yielded the following telescoping linear recurrence for this summand:
\begin{align}
&\xi (1 + \beta + Q \beta)  \Phi(M-2,u,\eta,r)-\notag \\
& \big(1 + \beta + (1 + Q\beta)\xi\big) \Phi(M-1,u,\eta,r)+\Phi(M,u,\eta,r)\nonumber\\
    & = \Delta_{u}\Big[\beta \Phi(M-1,u,\eta,r+1)+\Phi(M-1,u,\eta+1,r+1) \notag \\
    &\quad -\Phi(M,u,\eta+1,r+1)\Big]\nonumber\\
    & \quad + \Delta_{\eta}\Big[-\xi \Phi(M-2,u,\eta,r+1) + (1 + \xi)\times \notag \\
    & \qquad \Phi(M-1,u,\eta,r+1)-\Phi(M,u,\eta,r+1)\Big]\nonumber\\
    & \quad + \Delta_{r}\Big[-(1+\beta) \xi \Phi(M-2,u,\eta,r)+(1 + \beta+\xi) \times \notag \\
    &\qquad \Phi(M-1,u,\eta,r)-\Phi(M,u,\eta,r)\Big], \label{eqn:summandRecurrence}
\end{align}

where $\Delta_{\eta}$ is the \textit{forward shift difference} operator on the variable $\eta$: $\Delta_{\eta}X(M,u,\eta,r) = X(M,u,\eta+1,r) - X(M,u,\eta,r)$ for any expression $X$; likewise for $\Delta_{r}$ and $\Delta_{u}$.  The recurrence \eqref{eqn:summandRecurrence} may be verified by expanding the $\Delta$ expressions and collecting terms, leading to the equivalent equation
\begin{align}
0 & = Q \beta \xi \Phi(M-2, u, \eta, r)  + \beta\xi \Phi(M-2, u, \eta, r+1)  \notag \\
&+\xi \Phi(M-2, u, \eta+1, r+1) - Q \beta\xi \Phi(M-1, u, \eta, r)\nonumber\\
& - \xi \Phi(M-1, u, \eta+1, r+1) \notag \\
&- \beta \Phi(M-1, u+1, \eta, r+1) \notag \\
&- \Phi(M-1, u+1, \eta+1, r+1)\nonumber\\
&  + \Phi(M, u+1, \eta+1, r+1).
\label{eqn:simplifiedSummandRecurrence}
\end{align}
Dividing Eq.~\eqref{eqn:simplifiedSummandRecurrence} by $Q\Phi(M-2,u,\eta,r)$ and multiplying by the common denominator $r(r+1)(M-u-\eta + r - 1)$, Eq.~\eqref{eqn:summandRecurrence} is then found to be equivalent to
\begin{align}
0 & = \beta \xi r(r+1)(M-u-\eta+r-1) + \beta\xi (\eta-r)^{2}(u-r) \notag \\
&+ \beta\xi (\eta+1)(u-r)(M-u-\eta+r-1) \nonumber\\
& - \beta\xi [r(r+1)+ (\eta+1)(u-r)](M-u-1) \notag\\
& - \beta \xi(\eta-r)^{2}u  - \beta\xi (\eta+1)u(M-u-\eta+r-1) \nonumber\\
& + \beta\xi (\eta+1)u(M-u-1) ,
\end{align}
which is easily verified to be true.

Now we wish to turn this recurrence on the summand $\Phi(M,u,\eta,r)$ into a recurrence on the sum $\phi(M)$.  Doing this requires summing both sides of Eq.~\eqref{eqn:summandRecurrence} and using the fact that the right hand side leads to telescoping sums: $\sum_{k=1}^{n}\Delta_{k}X(k) = X(n+1)-X(1)$.  This computation, carried out in Appendix \ref{app:linearRecurrenceInhom}, results in the inhomogeneous linear recurrence relation (with constant coefficients) 
\begin{align}
&(1 + \beta+ Q\beta) \xi \phi(M-2) - \big(1 + \beta +(1+Q\beta)\xi\big) \phi(M-1) \notag\\
&\qquad + \phi(M) =  Q \beta(1+\beta)^{M-2}\xi.\label{eqn:sumRecurrence}
\end{align}
This recurrence for the sum has the characteristic polynomial \begin{equation}
x^{2} - \big(1+\beta + (1+Q\beta)\xi\big)x + (1+\beta + Q\beta)\xi = 0
\end{equation} with roots 
\begin{subequations}
\begin{align}
\gamma_{\pm} & := \frac{\big(1+\beta +(1+Q\beta)\xi\big)}{2} \notag\\
& \pm \frac{\sqrt{(1+\beta + (1+Q\beta)\xi\big)^{2} - 4(1+\beta+Q\beta)\xi}}{2}\\
& = \frac{\big(1+\beta + (1+Q\beta)\xi\big)}{2} \notag \\
&\pm \frac{\sqrt{\big(1+\beta-(1+Q\beta)\xi\big)^{2} + 4 Q \beta^{2}\xi}}{2}.
\end{align}
\end{subequations}
Furthermore, by plugging a sequence of the form $\phi(M)=a(1+\beta)^{M}$ into Eq.~\eqref{eqn:sumRecurrence}, it may be seen that $-\frac{(1+\beta)^{M}}{\beta}$ is a particular solution to the inhomogeneous recurrence.  Thus Eq.~\eqref{eqn:sumRecurrence} admits the general solution 
\begin{equation}\phi(M) = A_{+}\gamma_{+}^{M-1} + A_{-}\gamma_{-}^{M-1} - \frac{(1+\beta)^{M}}{\beta}\label{eqn:generalSolution}\end{equation}
for some coefficients $A_{+}$ and $A_{-}$.  The initial values of $\phi(M)$ may be worked out to be
\begin{subequations}
\begin{align}
	\phi(1) & = Q \xi\\
	\phi(2) & = Q\xi\big(1+2\beta+(1+Q\beta)\xi\big).
\end{align}
\end{subequations}
Thus, we seek $A_{+}$ and $A_{-}$ such that $A_{+}+A_{-} = Q\xi + (1+\beta)/\beta$ and $A_{+}\gamma_{+} + A_{-}\gamma_{-} = Q\xi\big(1+2\beta + (1+Q\beta)\xi\big) + (1+\beta)^{2}/\beta$.  Solving this linear system for $A_{+}$ and $A_{-}$ yields
\begin{equation}A_{\pm} = \frac{Q\beta\xi(\gamma_{\pm}+\beta) + (1+\beta)\big[(1+\beta)-\gamma_{\mp}\big]}{\beta(\gamma_{\pm}-\gamma_{\mp})}.\end{equation}
With these values for $A_{+}$ and $A_{-}$, and now regarding $\beta$, $\gamma_{\pm}$, and $A_{\pm}$ all as functions of $M$, Eq.~\eqref{eqn:generalSolution} is an exact closed expression for the sum $\phi(M)$.  So we finally conclude that
\begin{align}
	\|\mc{W}\|_{1}&\leq \frac{\Gamma_{g}(M)}{\beta(M)}\Big[\beta(M) A_{+}(M)\gamma_{+}^{M-1}(M) + \notag \\
	&\beta(M) A_{-}(M)\gamma_{-}^{M-1}(M) - (1+\beta(M))^{M}\Big],
	\label{eq:82}
\end{align}
where \begin{align}
\beta(M) = \begin{cases}\Gamma_{\identity}(M) & J_{0}\geq J_{1}\\\Gamma_{g}(M) & J_{0}\leq J_{1}.\end{cases}\end{align}


\subsection{The ``Strong'' Term}
\label{sec:badTerm}
Following arguments similar to those elaborated thus far, we may now readily compute an upper bound for the ``strong'' term $\|\mc{S}\|_{1}$.  Explicitly,	
\begin{subequations}
\begin{align}
\Vert \mc{S}\Vert_{1} & = \Big\Vert \Tr_{B}\sum_{\eta=1}^{M}\sum_{l_1,...,l_\eta =1}^{\infty}\sum_{\jmdstack{\vec{i}\in\mathbb{N}^{\eta}}{\|\vec{i}\|_{1}\leq M}}\sum_{\jmdstack{\vec{\alpha}^{j}\in\mathbf{S}^{l_{j}}}{\vec{\mu}= 0 \text{ but not all } \alpha_{k}^{j} = \identity}}\!\!\!\!\!\!\!\!\!  \notag \\
&  \qquad \mathfrak{L}_{M-i_{1}+1}^{l_{1}}(\vec{\alpha}^{1})\cdots\mathfrak{L}_{M+1-i_{1}-\dots -i_{\eta}}^{l_{\eta}}(\vec{\alpha}^{\eta})(\varrho _{SB})\Big\Vert_{1}\\
& \leq \sum_{\eta=1}^{M}\sum_{l_1,...,l_\eta =1}^{\infty}\sum_{\jmdstack{\vec{i}\in\mathbb{N}^{\eta}}{\|\vec{i}\|_{1}\leq M}}\sum_{\jmdstack{\vec{\alpha}^{j}\in\mathbf{S}^{l_{j}}}{\vec{\mu}= 0 \text{ but not all } \alpha_{k}^{j} = \identity}}\!\!\!\!\!\!\notag \\& \qquad \big\Vert\mathfrak{L}_{M-i_{1}+1}^{l_{1}}(\vec{\alpha}^{1})\big\Vert\cdots\big\Vert\mathfrak{L}_{M+1-i_{1}-\dots -i_{\eta}}^{l_{\eta}}(\vec{\alpha}^{\eta})\big\Vert\\
& \leq \sum_{\eta=1}^{M}\sum_{l_1,...,l_\eta =1}^{\infty}\sum_{\jmdstack{\vec{i}\in\mathbb{N}^{\eta}}{\|\vec{i}\|_{1}\leq M}}\times \notag \\
&\qquad \frac{(\tau/M)^{l_{1}+\dots +l_{\eta}}}{l_{1}!\cdots l_{\eta}!}\left[\prod_{j=1}^{\eta}\gamma_{l_{j}}(\identity) - J_{0}^{l_{1}+\dots +l_{\eta}}\right]\\
& = \sum_{\eta=1}^{M}\binom{M}{\eta}\left[\Gamma_{\identity}^{\eta}(\tau/M) - \left(e^{\tau J_{0}/M} - 1\right)^{\eta}\right]  \\
\label{eq:84e}
& = [1+\Gamma_{\identity}(\tau/M)\big]^{M} - e^{\tau J_{0}}\\
& = e^{\tau J_{0}}\left[\left(\frac{e^{\frac{Q \tau J_{1}}{M}} + Qe^{-\frac{\tau J_{1}}{M}}}{Q+1}\right)^{M} - 1\right].
\end{align}
\end{subequations}


\subsection{Total Distance Upper Bound}
\label{sec:totalUpperBound}
Combining the results for the ``weak'' and ''strong'' terms, and recalling that $A_{\pm}(M)$ and $\gamma_{\pm}(M)$ depend on the form of $\beta(M)$
\begin{equation}\beta(M) = \begin{cases}\Gamma_{\identity}(M) = e^{\frac{\tau J_{0}}{M}}\left[\frac{e^{\frac{\tau Q J_{1}}{M}} + Qe^{-\frac{\tau J_{1}}{M}}}{Q+1}\right] - 1 & J_{0}\geq J_{1}\\ & \\ \Gamma_{g}(M) = e^{\frac{\tau J_{0}}{M}}\left[\frac{e^{\frac{\tau  Q J_{1}}{M}} - e^{-\frac{\tau J_{1}}{M}}}{Q+1}\right] & J_{0}\leq J_{1},\end{cases}
\end{equation}
with $q = |\mathbf{S}|/2 = (Q+1)/2$, $J_{m} = \max\{J_{0},J_{1}\}$, and 
\begin{equation}	
\Gamma_{+} := \begin{cases}\Gamma_{\identity}(M) & J_{0}\geq J_{1}\\\Gamma_{g}(M) & J_{0}\leq J_{1}\end{cases} \quad \Gamma_{-}:=\begin{cases}\Gamma_{\identity}(g) & J_{0}\geq J_{1}\\\Gamma_{\identity}(M) & J_{0}\leq J_{1},
\end{cases}
\end{equation}
we obtain, using Eqs.~\eqref{eq:82} and \eqref{eq:84e}, the upper bound on the distance metric
\begin{subequations}
\begin{align}
D&[\varrho_{S}(\tau), \varrho_{S}^{0}(\tau)] \leq 
\frac{1}{2}\big(\|\mc{S}\|_{1}+\|\mc{W}\|_{1}\big)\nonumber\\
	& \leq [1+\Gamma_{\identity}(M)\big]^{M} - \frac{\Gamma_{-}(M)}{\Gamma_{\identity}(M)}[1+\Gamma_{+}(M)\big]^{M}- e^{\tau J_{0}} \notag \\
	&+ \Gamma_{g}(M)A_{+}(M)\gamma_{+}^{M-1}(M) + \Gamma_{g}(M)A_{-}(M)\gamma_{-}^{M-1}(M)\label{eqn:fullBound2}\\
	& = \sum_{j=1}^\infty \textrm{B}_j \frac{1}{M^j} ,
\label{eqn:totalAsymptoticBound}
\end{align}
\end{subequations}
where
\bea
\textrm{B}_1 = 
\left[e^{\tau J_{0}}\left(\frac{Q\tau^{2} J_{1}^{2}}{4}\right) + e^{\tau J_{m}}\frac{Q\tau J_{1}}{2}\left(1 + \tau J_{m}\right)\frac{\zeta^{q}}{1-\zeta^{q}}\right] \notag \\
\label{eq:B_1}
\eea
The detailed asymptotic analysis leading to Eqs. \eqref{eqn:totalAsymptoticBound} and \eqref{eq:B_1} is carried out in Appendix \ref{app:asymptoticAnalysis}.  This completes the proof of Theorem \ref{Th:1}.
It may be observed that the rate of convergence (i.e., the coefficient of $M^{-1}$) blows up as $\tau \to \infty$ and also as $\zeta \to 1$ (i.e., the weak measurement limit as $\epsilon \to 0$).  

In the next section we analyze the conditions under which the distance bound converges to zero.

\section{Parameter Trade-Offs}
\label{sec:parameterTradeOffs}

\subsection{Trade-off between $\tau$ and $M$}

The asymptotic behavior of the distance upper bound, presented in Section \ref{sec:totalUpperBound} and derived in Appendix \ref{app:asymptoticAnalysis}, shows that the bound converges to zero as $1/M$ when the time $\tau$ is held fixed.  Here we consider how the number of measurements needed for convergence depends upon the final time $\tau$.  In order to address this question, we consider the convergence of the bound when $\tau$ is allowed to increase with $M$.  If $\tau$ increases too quickly, the distance upper bound will no longer converge, so we seek the fastest growing function $f(M)$ such that when $\tau = f(M)$, the bound still converges to zero.  

\begin{mylemma}
Allowing only $\tau$ and $M$ to vary, the distance upper bound~\eqref{eqn:totalAsymptoticBound} converges to zero as $M\rightarrow \infty$ provided 
\begin{align}
\label{eq:conv-suff}
\tau J_{0} = a \log(M)\quad 
	\begin{cases}
	 a<1  \quad &\textrm{if}\ \ J_0\geq J_1 \\
	 a< \frac{J_0}{J_1}  \quad &\textrm{if}\ \ J_0\leq J_1
	\end{cases} .
\end{align}
\end{mylemma}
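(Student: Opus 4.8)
The plan is to substitute $\tau = a\log(M)/J_{0}$ directly into the exact closed-form bound $\mathrm{B}$ of Eq.~\eqref{eqn:fullBound2} and show it tends to $0$. The key structural observation is that although $\tau\to\infty$, the per-interval quantity $\tau/M = a\log(M)/(J_{0}M)$ tends to $0$, so every occurrence of $\tau J_{0}/M$, $\tau J_{1}/M$ and $\tau QJ_{1}/M$ is small for large $M$, and moreover $\tau^{2}/M = a^{2}(\log M)^{2}/(J_{0}^{2}M)\to 0$. First I would Taylor-expand $\Gamma_{\identity}(M)$ and $\Gamma_{g}(M)$ in these small arguments, using that the linear terms cancel (e.g.\ $e^{x}+Qe^{-x/Q} = (Q+1) + O(x^{2})$ for small $x\ge 0$), to establish $\Gamma_{\identity}(M),\Gamma_{g}(M),\beta(M)\to 0$ together with the sharper estimates $M\Gamma_{\identity}(M) = \tau J_{0} + O((\log M)^{2}/M)$, $M\Gamma_{g}(M) = \tau J_{1} + O((\log M)^{2}/M)$, and---since by Eq.~\eqref{eqn:beta} $\beta$ equals $\Gamma_{\identity}$ or $\Gamma_{g}$ according to the sign of $J_{0}-J_{1}$---$M\beta(M) = \tau J_{m} + O((\log M)^{2}/M)$ with $J_{m}=\max\{J_{0},J_{1}\}$.

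Next I would split $\mathrm{B}$ into a ``strong'' and a ``weak'' part. A direct check of the two cases in Eqs.~\eqref{eqn:beta} and \eqref{eqn:coeffspm} shows that $\tfrac{\Gamma_{-}(M)}{\Gamma_{\identity}(M)}[1+\Gamma_{+}(M)]^{M} = \Gamma_{g}(M)(1+\beta(M))^{M}/\beta(M)$ in either regime, so Eq.~\eqref{eqn:fullBound2} rewrites exactly as $\mathrm{B} = \big([1+\Gamma_{\identity}(M)]^{M} - e^{\tau J_{0}}\big) + \Gamma_{g}(M)\phi(M)$ with $\phi(M) = A_{+}\gamma_{+}^{M-1} + A_{-}\gamma_{-}^{M-1} - (1+\beta)^{M}/\beta$ as in Eq.~\eqref{eqn:generalSolution}. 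Expanding the roots \eqref{eqn:rootspm} and the coefficients $A_{\pm}$ about $\beta = 0$, and using that $\zeta^{q}=\zeta^{(Q+1)/2}<1$ for $\epsilon>0$ so that the square root in \eqref{eqn:rootspm} stays bounded away from zero, I would obtain $\gamma_{+}(M) = (1+\beta)(1+O(\beta^{2}))$, $\gamma_{-}(M) = \zeta^{q}(1+Q\beta)+O(\beta^{2})$, $A_{+}(M) = \beta^{-1}+O(1)$, and $A_{-}(M) = O(1)$. Since $\gamma_{-}(M)\to \zeta^{q}<1$, the term $A_{-}\gamma_{-}^{M-1}\to 0$ exponentially. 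The essential step is then the near-cancellation of the remaining two terms of $\phi$: writing $\gamma_{+}^{M-1} = (1+\beta)^{M-1}(1+O(M\beta^{2}))$ and noting $M\beta^{2} = (M\beta)\beta = O((\log M)^{2}/M)\to 0$, one gets $A_{+}\gamma_{+}^{M-1} - (1+\beta)^{M}/\beta = O\big(\tau J_{m}(1+\beta)^{M-1}\big)$, whence $\phi(M) = O\big(\tau J_{m}\,e^{\tau J_{m}}\big) = O\big((\log M)\,M^{aJ_{m}/J_{0}}\big)$, using $(1+\beta)^{M-1} = e^{\tau J_{m}}(1+o(1))$.

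Finally I would collect the estimates. For the strong part, the bound on $M\log(1+\Gamma_{\identity}(M))$ gives $[1+\Gamma_{\identity}(M)]^{M} - e^{\tau J_{0}} = e^{\tau J_{0}}\big(e^{O((\log M)^{2}/M)}-1\big) = O\big((\log M)^{2}M^{a-1}\big)$; for the weak part, $\Gamma_{g}(M)\phi(M) = O((\log M)/M)\cdot O\big((\log M)M^{aJ_{m}/J_{0}}\big) = O\big((\log M)^{2}M^{aJ_{m}/J_{0}-1}\big)$. The first vanishes iff $a<1$ and the second iff $aJ_{m}/J_{0}<1$; since $J_{m}/J_{0}=\max\{1,J_{1}/J_{0}\}$, the condition $aJ_{m}/J_{0}<1$ reads $a<1$ when $J_{0}\ge J_{1}$ and $a<J_{0}/J_{1}$ when $J_{0}\le J_{1}$, and in the latter case $a<J_{0}/J_{1}\le 1$ makes $a<1$ automatic. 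This reproduces Eq.~\eqref{eq:conv-suff}. As an independent check, substituting $\tau = a\log(M)/J_{0}$ into $\mathrm{B}_{1}$ of Eq.~\eqref{eq:B_1} gives $\mathrm{B}_{1}/M = O\big((\log M)^{2}M^{aJ_{m}/J_{0}-1}\big)$, the same leading behavior; the generator protocol is handled identically with $q=1$.

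The step I expect to be the main obstacle is isolating the cancellation $A_{+}\gamma_{+}^{M-1} - (1+\beta)^{M}/\beta$: each of these two quantities grows like $M^{\,1+aJ_{m}/J_{0}}/\log M$, which outpaces $M$, so a naive triangle-inequality bound on $\phi(M)$ (or directly on $\|\mc{W}\|_{1}$ from its multi-sum form) would only yield convergence for $a<J_{0}/(2J_{m})$---half the claimed range. One must retain the exact closed form and show that the $O(\beta^{2})$ discrepancy between $\gamma_{+}$ and $1+\beta$, when raised to the power $M$, contributes merely a factor $1+O(M\beta^{2}) = 1+o(1)$; this hinges entirely on the bookkeeping $M\beta^{2}\sim(\tau J_{m})^{2}/M\to 0$ under the hypothesis $\tau = a\log(M)/J_{0}$.
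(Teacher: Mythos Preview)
Your proposal is correct and follows essentially the same route as the paper: substitute $\tau J_{0}=a\log M$, expand $\Gamma_{\identity}$, $\Gamma_{g}$, and $\beta$ in powers of $\log(M)/M$, and re-run the asymptotic analysis of Appendix~\ref{app:asymptoticAnalysis} to obtain leading terms scaling as $(\log M)^{2}/M^{1-a}$ and $(\log M)^{2}/M^{1-aJ_{m}/J_{0}}$, from which Eq.~\eqref{eq:conv-suff} follows. Your explicit identification of the cancellation $A_{+}\gamma_{+}^{M-1}-(1+\beta)^{M}/\beta$ as the crux (and the observation that a naive triangle inequality would halve the admissible range of $a$) is a point the paper leaves implicit in the phrase ``repeating the analysis of Appendix~\ref{app:asymptoticAnalysis},'' but the underlying computation is the same.
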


\begin{proof}
Suppose that $\tau J_{0} = a \log(M)$ for some $a>0$.  Then defining $\lambda = J_{1}/J_{0}$, 
\begin{subequations}
\begin{align}
	\Gamma_{\identity}(M) & = e^{\frac{a\log(M)}{M}}\left(\frac{e^{\frac{Q \lambda a \log(M)}{M}} + Qe^{-\frac{\lambda a\log (M)}{M}}}{Q+1}\right) - 1  \notag \\
	&= \frac{a\log(M)}{M} + \frac{\big(1 + Q \lambda^{2}\big)a^{2}\log^{2}(M)}{2M^{2}} \notag \\
	&+ O\left(\frac{\log^{3}(M)}{M^{3}}\right)\\
	\Gamma_{g}(M) & = e^{\frac{a\log(M)}{M}}\left(\frac{e^{\frac{Q\lambda a\log(M)}{M}} - e^{-\frac{\lambda a\log(M)}{M}}}{Q+1}\right) \notag \\
	&= \frac{\lambda a\log(M)}{M} + \frac{\big(2\lambda + (Q-1)\lambda^{2}\big)a^{2}\log^{2}(M)}{2M^{2}} \notag \\
	&+ O\left(\frac{\log^{3}(M)}{M^{3}}\right).
\end{align}
\end{subequations}
Repeating the analysis of Appendix \ref{app:asymptoticAnalysis} with the general form \begin{equation}
\beta(M) = X\frac{\log(M)}{M} + Y\frac{\log^{2}(M)}{M^{2}} + O\left(\frac{\log^{3}(M)}{M^{3}}\right)
\end{equation} 
yields
\begin{align}
\label{eq:Dbound}
	&D[\varrho_{S}(\tau), \varrho_{S}^{0}(\tau)] \leq \notag \\
	&\ \ 
	\begin{cases}
	\left[\frac{Q \lambda^{2}a^{2}\log^{2}(M)}{4} + \frac{Q}{2}\big(\lambda a \log(M) + \right. \\
	\left. \quad \lambda a^{2}\log^{2}(M)\big)\frac{\zeta^{\frac{Q+1}{2}}}{1-\zeta^{\frac{Q+1}{2}}}\right]\frac{1}{M^{1-a}} + O\left(\frac{\log^{3}(M)}{M^{2-a}}\right) \\
	\textrm{if}\ \ J_{0} \geq J_{1}\\
	&\\
\left(\frac{\lambda^{2}a^{2}\log^{2}(M)}{4M^{1-a}}\right) + \left[\frac{Q}{2}\left(\lambda a \log(M) + \right.\right. \\
\left.\left. \quad  \lambda^{2}a^{2}\log^{2}(M)\right)\frac{\zeta^{\frac{Q+1}{2}}}{1-\zeta^{\frac{Q+1}{2}}}\right]\frac{1}{M^{1-\lambda a}} + O\left(\frac{\log^{3}(M)}{M^{2-\lambda a}}\right) \\
 \textrm{if}\ \ J_{0} \leq J_{1}.
 \end{cases}
\end{align}
Because of the $M^{1-a}$ in the denominator of the leading order terms, the bound for $J_{0}\geq J_{1}$ converges to zero as $M\to \infty$ if $a<1$ and diverges if $a\geq 1$.  In the case $J_{0}\leq J_{1}$, where $\lambda > 1$, the bound converges if $a<1/\lambda$ and diverges if $a\geq 1/\lambda$.   
\end{proof}

The closer $a$ is to the cutoff value of either $1$ or $1/\lambda$, the slower the convergence of the bound becomes. We can conclude that Eq.~\eqref{eq:conv-suff} is a sufficient condition for convergence of the upper bound. Divergence of the upper bound does not preclude the distance $D$ from converging to zero in the limit of a large number of measurements, and additional analysis is required to settle whether this is possible. The difference between the $J_0\leq J_1$ and $J_0\geq J_1$ cases can be understood intuitively in terms of the relative importance of the ``perturbation" (coupling of the system to the bath, $J_1$) and the ``unperturbed" evolution ($J_0$). It is to be expected that protection via the Zeno effect should be more effective when the perturbation is ``weak", and as we have seen, indeed the cutoff value of $a$ is larger in this case.

\subsection{Trade-off between $\epsilon$ and $M$}

Another important trade-off is between the strength of the measurements $\epsilon$ and the number of measurements required for effective convergence.  

\begin{mylemma}
Allowing only $\epsilon$ and $M$ to vary, the distance upper bound~\eqref{eqn:totalAsymptoticBound} converges to zero as $M\rightarrow \infty$ provided 
\bea
\label{eq:eps-prop}
\epsilon > M^{a-1/2}
\eea
for some $a>0$.
\end{mylemma}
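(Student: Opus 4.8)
The plan is to feed the prescribed scaling of $\epsilon$ into the \emph{exact} distance bound $\mathrm{B}$ of Theorem~\ref{Th:1} [the closed form~\eqref{eqn:fullBound2}, equivalently the series~\eqref{eqn:totalAsymptoticBound}], holding $\tau$, $J_{0}$, $J_{1}$ fixed, and to verify that every term vanishes as $M\to\infty$. First note that $M^{a-1/2}\to 0$ for any $a<1/2$, so any $\epsilon$ bounded away from $0$ (or growing) already satisfies the hypothesis for small $a$ and is covered by Theorem~\ref{Th:1} with $\zeta=\sech(\epsilon)$ fixed in $(0,1)$; the only substantive case is therefore $\epsilon=\epsilon(M)\to 0$ with $\epsilon(M)>M^{a-1/2}$, which I treat.

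The single quantitative input is the small-$\epsilon$ behaviour of $\zeta=\sech(\epsilon)$, entering through $\xi:=\zeta^{q}$ with $q=(Q+1)/2$ (replace $q$ by $1$ throughout for the generator protocol). From $\cosh\epsilon\ge 1+\epsilon^{2}/2$ one has $\zeta\le(1+\epsilon^{2}/2)^{-1}$, so that for all large $M$
\begin{gather}
1-\xi\ \sim\ \tfrac{q}{2}\,\epsilon^{2},\qquad \xi^{M-1}\ \le\ e^{-c\,q\,\epsilon^{2}M}\ \le\ e^{-c\,q\,M^{2a}},\notag\\
\frac{1}{1-\xi}\ \le\ \frac{C_{0}}{q\,\epsilon^{2}}\ <\ \frac{C_{0}}{q}\,M^{1-2a},
\end{gather}
with $c,C_{0}>0$ numerical. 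Hence $\xi/(1-\xi)=O(M^{1-2a})$, while $\xi^{M-1}$ decays faster than any power of $1/M$, and $M^{2a}\to\infty$ precisely because $a>0$. Substituting the second line into the leading coefficient $\mathrm{B}_{1}$ of~\eqref{eq:B_1} already gives $\mathrm{B}_{1}/M=O(M^{-2a})\to 0$; the work is to control the remainder of the bound.

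For that I would proceed directly from~\eqref{eqn:fullBound2}, or equivalently redo the analysis of Appendix~\ref{app:asymptoticAnalysis} keeping $\xi$ symbolic rather than fixing it in $(0,1)$. Since $\tau,J_{0},J_{1}$ are fixed, $\Gamma_{\identity}(M)=\tau J_{0}/M+O(M^{-2})$ and $\Gamma_{g}(M)=\tau J_{1}/M+O(M^{-2})$ are $\epsilon$-independent, and because $\beta(M)=O(1/M)$ while $1-\xi\gtrsim M^{2a-1}$ we are in the regime $\beta\ll 1-\xi$; there the discriminant in~\eqref{eqn:rootspm} equals $(1-\xi)^{2}+2\beta(1-\xi)(1-Q\xi)+\beta^{2}(1+Q\xi)^{2}$ and one finds $\gamma_{+}(M)=1+\beta(M)+O\!\bigl(\beta^{2}/(1-\xi)\bigr)$, $\gamma_{-}(M)=\xi(1+Q\beta)+O\!\bigl(\beta^{2}/(1-\xi)\bigr)$, $A_{+}(M)=\beta^{-1}\bigl(1+O(\beta/(1-\xi))\bigr)$, and $|A_{-}(M)|=O\!\bigl((1-\xi)^{-1}\bigr)$. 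Feeding these into~\eqref{eqn:fullBound2}: the two terms carrying $\gamma_{-}^{M-1}$ are bounded by $O\!\bigl((1-\xi)^{-1}\bigr)\,\xi^{M-1}=O\!\bigl(M^{1-2a}e^{-cqM^{2a}}\bigr)\to 0$; and in the remaining combination $[1+\Gamma_{\identity}]^{M}-\tfrac{\Gamma_{-}}{\Gamma_{\identity}}[1+\Gamma_{+}]^{M}+\Gamma_{g}A_{+}\gamma_{+}^{M-1}-e^{\tau J_{0}}$ the $O(1)$ pieces cancel — this is exactly the cancellation that forces $\mathrm{B}=O(1/M)$ at fixed $\epsilon$ — leaving a remainder $\sum_{j\ge1}\mathrm{B}_{j}M^{-j}$ with $|\mathrm{B}_{j}|\le C_{j}(1-\xi)^{-j}$ and $C_{j}$ depending only on $\tau,J_{0},J_{1}$. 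Since $(1-\xi)^{-1}\le(C_{0}/q)M^{1-2a}$, this series is dominated by $\sum_{j\ge1}C_{j}\bigl((C_{0}/q)M^{-2a}\bigr)^{j}=O(M^{-2a})$ once $M^{2a}$ exceeds the fixed radius of convergence, i.e.\ for all large $M$. Hence $\mathrm{B}\to 0$, and with it $D[\varrho_{S}(\tau),\varrho_{S}^{0}(\tau)]\to 0$, proving the lemma.

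The step I expect to be the main obstacle is this last uniformity argument. The $1/M$ expansion of $\mathrm{B}$ in Theorem~\ref{Th:1} was derived for \emph{fixed} $\epsilon$ and its implied constants grow like powers of $(1-\zeta^{q})^{-1}$ as $\epsilon\to 0$, so one may not simply quote~\eqref{eqn:asymptoticbound}; one must re-run the recurrence analysis of Appendix~\ref{app:asymptoticAnalysis} carrying $(1-\xi)^{-1}$ as a free symbol and substitute its bound only at the end, yielding an estimate valid uniformly on the region $\{(M,\epsilon):\epsilon^{2}M\ge M^{2a}\}$. A secondary point is that the strict inequality $a>0$ is exactly what makes $M^{2a}\to\infty$, hence both keeps $\beta=O(1/M)$ negligible against $1-\zeta^{q}\sim\tfrac{q}{2}\epsilon^{2}$ and sends the surviving contribution $M^{-2a}$ to $0$; at the borderline scaling $\epsilon\sim M^{-1/2}$ the leading term is $\Theta(1)$ and both the argument and, presumably, the statement fail, consistent with the strict inequality in the hypothesis.
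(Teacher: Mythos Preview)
Your leading-order computation---reducing to $\mathrm{B}_{1}/M$ and bounding it by a constant times $1/(\epsilon^{2}M)<M^{-2a}$ via $1-\zeta^{q}\sim \tfrac{q}{2}\epsilon^{2}$---is precisely the paper's entire proof. The paper works in the regime ``large $M$, small $\epsilon$'', asserts the bound is approximately $\mathrm{B}_{1}/M$, drops the $\epsilon$-independent first term in~\eqref{eq:B_1}, solves $\mathrm{B}_{1}\approx e^{\tau J_{m}}\tfrac{Q\tau J_{1}}{2}(1+\tau J_{m})\tfrac{\zeta^{q}}{1-\zeta^{q}}$ for $\zeta$, uses $\zeta\approx 1-\epsilon^{2}/2$, and concludes $\mathrm{B}_{1}/M\lesssim (\epsilon^{2}M)^{-1}<M^{-2a}$. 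It does not touch the higher-order $\mathrm{B}_{j}$ or the closed form~\eqref{eqn:fullBound2} at all.

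Everything after your first paragraph therefore goes beyond the paper. Your uniformity concern is well placed: the asymptotic expansion~\eqref{eqn:totalAsymptoticBound} was derived for fixed $\epsilon$, its coefficients do carry inverse powers of $1-\zeta^{q}$, and the paper's argument really is heuristic on this point. Your direct attack on~\eqref{eqn:fullBound2}---expanding $\gamma_{\pm}$ and $A_{\pm}$ in the regime $\beta\ll 1-\xi$ and handling the $\gamma_{-}^{M-1}$ contribution by the super-polynomial decay $\xi^{M-1}\le e^{-cqM^{2a}}$---is the right strategy and gives genuinely more than the paper. The one step that remains a sketch is the claim $|\mathrm{B}_{j}|\le C_{j}(1-\xi)^{-j}$ with $\sum C_{j}x^{j}$ having positive radius of convergence: you assert this from the structure of Appendix~\ref{app:asymptoticAnalysis} but do not establish it, and without it the domination argument for the tail is incomplete. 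A cleaner route that avoids the series altogether is to stay with the closed form~\eqref{eqn:fullBound2} and bound the ``$\gamma_{+}$'' block directly: since $\gamma_{+}=1+\beta+O(\beta^{2}/(1-\xi))$ with $\beta^{2}/(1-\xi)=O(M^{-1-2a})$, one has $\gamma_{+}^{M-1}=(1+\beta)^{M}\bigl(1+O(M^{-2a})\bigr)$ and $\Gamma_{g}A_{+}=\tfrac{\Gamma_{g}}{\beta}\bigl(1+O(M^{-2a})\bigr)$, after which the $O(1)$ cancellation you describe leaves an explicit $O(M^{-2a})$ remainder without invoking any radius of convergence.
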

\begin{proof}
Consider the large $M$, small $\epsilon$	 regime, where the bound \eqref{eqn:totalAsymptoticBound} is approximately $\mathrm{B}_1 /M$.
Then, since $\zeta^{q}/(1-\zeta^{q})$ is large when $\epsilon$ is small [recall that $\zeta = 1/\cosh(\epsilon)$], we can neglect the constant first term in Eq.~\eqref{eq:B_1},
\begin{align}
e^{\tau J_{m}}\frac{Q \tau J_{1}}{2\mathrm{B}_1}\left(1 + \tau J_{m}\right)&\approx \frac{1-\zeta^{q}}{\zeta^{q}}=\frac{1}{\zeta^{q}} - 1\notag \\
&\approx q(1-\zeta) ,
\end{align}
so that 
\begin{align}
\zeta \approx 1-e^{\tau J_{m}}\frac{Q \tau J_{1}}{2q\mathrm{B}_1}\left(1 + \tau J_{m}\right),
\end{align}
and therefore, since $\zeta\approx 1-\epsilon^{2}/2$ in this regime,
\begin{align}
\mathrm{B}_1 \frac{1}{M} & \approx \frac{1}{\epsilon^{2}M}e^{\tau J_{m}}{Q \tau J_{1}}\left(1 + \tau J_{m}\right)/q \notag \\
& < M^{-2a},
\end{align}
where in the last line we used Eq.~\eqref{eq:eps-prop}.  
\end{proof}
This quantifies the previously mentioned trade-off that the weaker the measurements, the greater $M$ must be to compensate, i.e., the slower the convergence rate as $M\to\infty$. Conversely, we can interpret Eq.~\eqref{eq:eps-prop} as saying that the measurement strength may not decline faster than the inverse square root of the number of measurements.

\subsection{Fixed Nonzero Measurement Interval}

As an alternative analysis of the time-scaling issue, fix some $\Delta \tau>0$ such that the measurements are separated by an interval $\Delta \tau$, so that $\tau = M\Delta\tau$.  Under these conditions, what value of $M$ minimizes the distance upper bound?  Let $\delta = J_{0}\Delta \tau $ and $\lambda = J_{1}/J_{0}$ and consider first the strong-measurement limit:
\begin{equation}\|\mc{S}\|_{1}\leq f(M):=  e^{\delta M}\left[\left(\frac{e^{Q\lambda \delta} + Qe^{-\lambda \delta}}{Q+1}\right)^{M} -1 \right].\end{equation}
Then, taking $M$ to be a continuous variable for a moment, \begin{equation}\frac{\rmd f}{\rmd M} = \delta f(M) + \log\left(\frac{e^{Q\lambda \delta} + Qe^{-\lambda \delta}}{Q+1}\right) \big(f(M)+e^{\delta M}\big).\label{eqn:derivStrongBound}\end{equation}
Observe that since $Q = 2^{\bar{Q}}-1 \geq 1$, 
\bea
\frac{e^{Q\lambda \delta} + Qe^{-\lambda \delta}}{Q+1} &=& 1 + \sum_{k=2}^{\infty}\frac{Q\big((-1)^{k}+Q^{k-1}\big)}{Q+1}\frac{(\tau J_{1}/M)^{k}}{k!} \notag \\
&>& 1,
\eea
the logarithm in Eq.~\eqref{eqn:derivStrongBound} is positive, as are all the other terms in \eqref{eqn:derivStrongBound}, so $\rmd f/\rmd M > 0$ for all $M>0$.  Therefore on the domain of positive integers $M$, $f(M)$ is minimized at $M=1$.  Thus, if laboratory conditions dictate a minimum interval between measurements, the upper bound on the distance indicates that the best strategy for minimizing the distance is \textit{not} to let $M\to \infty$ seeking eventual convergence (which will never come under these assumptions), but rather to endure only one such interval, terminating with a single measurement event.  The resulting bound (for $M=1$) is then
\begin{align}
\|\mc{S}\|_{1}&\leq f(1) =  e^{\delta}\left[\left(\frac{e^{Q\lambda \delta} + Qe^{-\lambda \delta}}{Q+1}\right) -1 \right] \notag \\
&= \frac{Q\lambda^{2}}{2}(\delta^{2} + \delta^{3}) + O(\delta^{4})  \notag \\
& {= \frac{Q}{2} (J_1 \Delta\tau)^2 (1+J_0\Delta\tau) +O[(J_0\Delta\tau)^{4}]}
\label{eq:M=1}
\end{align}
so that minimization of the bound can only be accomplished by minimizing either $\Delta \tau$ or $J_1$.  

This result does not mean that the Zeno effect fails to provide protection beyond $M=1$, but rather that (our bound on) the protection quality gradually declines as the elapsed time grows. The relevant yardstick for comparison is then the distance between the ideal and actual state in the absence of any protection. This distance is easily estimated using first order perturbation theory (the Dyson series) to be $O(J_1\Delta\tau)$. Since Eq.~\eqref{eq:M=1} shows that a single measurement already modifies the distance to $O[(J_1\Delta\tau)^2]$, protection is achieved provided $J_1\Delta\tau<1$. Subsequent measurements, or longer evolution times in the case without measurement, modify these estimates to $O[(J_1M\Delta\tau)^2]$ and $O(J_1 M\Delta\tau)$, respectively, so that the conclusion about the possibility of an advantage from measurements with finite and fixed $\Delta\tau$ are unchanged.

\section{Conclusion}
\label{sec:conclusion}

Two protocols have been presented for the protection from the environment of an arbitrary, unknown state encoded in some stabilizer quantum error correction (or detection) code.  These protocols involved frequent weak non-selective measurement of either all elements of the stabilizer group of the code, or of just a minimal generating set.  Rigorous upper bounds were obtained on the distance between the final state under these protocols and the idealized final state in the absence of any interaction with the environment.  The bounds demonstrate that the protocols exhibit the desired protection in the limit of many measurement cycles.  Moreover, the bounds offer information about the degree of protection attainable with finite resources (e.g., finitely many measurement cycles), as well as trade-offs among the various relevant physical parameters.

Future research in this area may proceed along different lines.  First, while the protocols based on non-selective measurements presented herein are well-suited to protecting ensembles from the environment, they may not be ideal for protecting individual quantum systems.  Further investigation is needed to determine whether (and in what sense) the corresponding protocols based on weak \textit{selective} measurements realize a Zeno effect resulting in protection from the environment.  A result in this direction would more directly relate the Zeno effect to traditional quantum error correction.  Additionally, it may be interesting and fruitful to expand the class of error correcting codes on which the protocols are based, perhaps to include non-abelian codes.  For example, a Zeno protocol based on the Bacon-Shor code \cite{Bacon:05} may have advantages in that the measurements need be only 2-local.


\section{Acknowledgment} 
This research was supported by the ARO MURI grant W911NF-11-1-0268, by the Department of Defense, by the Intelligence Advanced Research Projects Activity (IARPA)
via Department of Interior National Business Center contract number
D11PC20165, and by
NSF grants No. CHE-924318 and CHE-1037992. 
The U.S. Government is authorized to reproduce and distribute
reprints for Governmental purposes notwithstanding any copyright annotation
thereon. The views and conclusions contained herein are those of
the authors and should not be interpreted as necessarily representing the
official policies or endorsements, either expressed or implied, of IARPA,
DoI/NBC, or the U.S. Government.

\bibliographystyle{apsrev4-1}
\bibliography{WMQZE_Refs}

\appendix

\section{Isotypical Decompositions Induced by the Stabilizer Group}
\label{app:isotypicalDecompositions}
\begin{mylemma}
Let $\HermOps$ denote the space of Hermitian operators on $\mathcal{H} = \mathcal{H}_{S}\otimes \mathcal{H}_{B}$.  
  The stabilizer group acts on $\HermOps$ by conjugation [i.e. the adjoint action $\Ad_{S}(A)\mapsto S A S$], which describes a representation of $\mathbf{S}$ ($\simeq \mathbb{Z}_{2}^{\bar{Q}}$).  There is then a unique isotypical decomposition of $\HermOps$ into subspaces $W_{g}$: \begin{equation}\HermOps = \bigoplus_{g\in\mathbf{S}}W_{g} = \bigoplus_{g\in\mathbf{S}}\hat{W}_{g}^{\oplus b_{g}},\end{equation} where each $W_{g}$ is an invariant subspace of the representation, comprising $b_{g}$ copies of the one-dimensional irreducible representation $\hat{W}_{g}$, such that for any Hermitian matrix $A \in W_{g}$, $SAS = (-1)^{\sigma_{g}(S)}A $.  Since each $\hat{W}_{g}$ is one-dimensional (because $\mathbf{S}\simeq \mathbb{Z}_{2}^{\bar{Q}}$ is abelian), the exponent $b_{g}$ is the dimension of the subspace $W_{g}$.  With $\Tr(\identity) = 2^{n}$ and all other elements of $\mathbf{S}$ traceless, $b_{g} = 2^{2n-\bar{Q}}(\dim(\mathcal{H}_{B}))^{2} = 2^{2k+\bar{Q}}(\dim(\mathcal{H}_{B}))^{2}$ for all $g$.  The $2^{2k+\bar{Q}} (\dim(\mathcal{H}_{B}))^{2}$-dimensional subspaces $W_{g}$ are all orthogonal under any $\Ad(\mathbf{S}\otimes \identity_{B})$-invariant inner product on $W$.  
\end{mylemma}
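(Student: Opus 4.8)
The plan is to follow the template of Lemma~\ref{lem:HilbSpaceIsotypical}, replacing the defining representation on $\mathcal{H}$ by the adjoint (conjugation) representation on $\HermOps$. First I would check that $\Ad_{S}(A) := SAS$ defines a real representation of $\mathbf{S}$ on $\HermOps$: it is $\mathbb{R}$-linear, it preserves Hermiticity because $(SAS)^{\dagger} = S^{\dagger}A^{\dagger}S^{\dagger} = SAS$ (stabilizer elements being Hermitian), and $\Ad_{S}\Ad_{S'}(A) = S(S'AS')S = (SS')A(SS')$ with $(SS')^{-1} = SS'$, so it is a group homomorphism. Since $\mathbf{S}\simeq\mathbb{Z}_{2}^{\bar{Q}}$ is finite abelian, Maschke's theorem gives complete reducibility and Schur's lemma forces all irreducible constituents to be one-dimensional; by Lemma~\ref{lem:homomorphismProperties} the full list of irreducible characters is $\{(-1)^{\sigma_{g}(\cdot)}:g\in\mathbf{S}\}$, which are $\{\pm1\}$-valued, so the real and complex representation theories coincide here. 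This yields existence and uniqueness of the isotypical decomposition $\HermOps = \bigoplus_{g}W_{g}$ with $W_{g}=\hat{W}_{g}^{\oplus b_{g}}$, each $\hat{W}_{g}$ one-dimensional carrying $(-1)^{\sigma_{g}(\cdot)}$, and $SAS=(-1)^{\sigma_{g}(S)}A$ for $A\in W_{g}$; equivalently, this is the decomposition already exhibited in Eq.~\eqref{eqn:isotypicalMatrixDecompositions} via the projections $\hat{\mathcal{P}}_{g}$.

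Next I would compute $b_{g}=\dim W_{g}$ (multiplicity equals dimension because $\hat{W}_{g}$ is one-dimensional) from the character projection formula \cite[Corollary~2.16]{Fulton1991}, exactly as in Lemma~\ref{lem:HilbSpaceIsotypical}:
\begin{equation*}
b_{g} = \frac{1}{|\mathbf{S}|}\sum_{S\in\mathbf{S}}\chi_{\Ad}(S)\,(-1)^{\sigma_{g}(S)} .
\end{equation*}
Here $\chi_{\Ad}(S)=\Tr(\Ad_{S})$; identifying $\linOps\cong\mathcal{H}\otimes\mathcal{H}^{*}$, the conjugation action has character $\Tr_{\mathcal{H}}(S)\,\overline{\Tr_{\mathcal{H}}(S)} = \big(\Tr_{\mathcal{H}}(S)\big)^{2}$, the last equality because $S$ is Hermitian. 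As $S=S_{S}\otimes\identity_{B}$ acts trivially on the bath, $\Tr_{\mathcal{H}}(S)=\Tr_{\mathcal{H}_{S}}(S_{S})\dim(\mathcal{H}_{B})$, equal to $2^{n}\dim(\mathcal{H}_{B})$ when $S=\identity$ and zero otherwise (all other stabilizer elements are traceless). Thus only the $S=\identity$ term contributes, giving $b_{g}=2^{-\bar{Q}}\big(2^{n}\dim(\mathcal{H}_{B})\big)^{2}=2^{2n-\bar{Q}}(\dim(\mathcal{H}_{B}))^{2}$ for every $g$, which using $\bar{Q}=n-k$ equals $2^{2k+\bar{Q}}(\dim(\mathcal{H}_{B}))^{2}$. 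As a sanity check, $\sum_{g\in\mathbf{S}}b_{g}=2^{\bar{Q}}\cdot 2^{2n-\bar{Q}}(\dim(\mathcal{H}_{B}))^{2}=(\dim\mathcal{H})^{2}=\dim\HermOps$.

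Finally, for orthogonality let $\langle\cdot,\cdot\rangle$ be any inner product invariant under $\Ad(\mathbf{S}\otimes\identity_{B})$; the Hilbert--Schmidt product $\langle A,B\rangle=\Tr(A^{\dagger}B)$ is one, since $\Tr\big((SAS)^{\dagger}SBS\big)=\Tr(SA^{\dagger}BS)=\Tr(A^{\dagger}B)$. For $A\in W_{g}$, $B\in W_{h}$ and any $S\in\mathbf{S}$,
\begin{equation*}
\langle A,B\rangle = \langle SAS,SBS\rangle = (-1)^{\sigma_{g}(S)+\sigma_{h}(S)}\langle A,B\rangle = (-1)^{\sigma_{gh}(S)}\langle A,B\rangle ,
\end{equation*}
the last step using $\sigma_{g}(S)+\sigma_{h}(S)=\sigma_{S}(g)+\sigma_{S}(h)=\sigma_{S}(gh)=\sigma_{gh}(S)$ (symmetry of $\sigma$ and $\sigma_{S}$ a homomorphism). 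If $g\neq h$ then $gh\neq\identity$, so Lemma~\ref{lem:homomorphismProperties} provides $S$ with $\sigma_{gh}(S)=1$, forcing $\langle A,B\rangle=0$. I do not anticipate a genuine obstacle: the argument is a routine finite-group computation parallel to Lemma~\ref{lem:HilbSpaceIsotypical}. The only step needing care is the character of $\Ad$ together with the bookkeeping of the factor $\dim(\mathcal{H}_{B})$ and the identities $2n-\bar{Q}=n+k=2k+\bar{Q}$, plus the one-line remark that passing from $\mathbb{C}$ to $\mathbb{R}$ is harmless because all characters in play are $\pm1$-valued.
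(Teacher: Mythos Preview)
Your proposal is correct and follows essentially the same approach as the paper: the multiplicity $b_{g}$ is computed via the character projection formula of \cite[Corollary~2.16]{Fulton1991} using $\Tr(\Ad_{S})=\big(\Tr_{\mathcal{H}}(S)\big)^{2}$ and the tracelessness of non-identity stabilizer elements, and orthogonality is obtained from $\Ad(\mathbf{S})$-invariance of the inner product combined with Lemma~\ref{lem:homomorphismProperties}, exactly as in the paper's proof. Your version simply fills in details the paper leaves implicit (verifying that $\Ad$ is a representation on $\HermOps$, the $\mathbb{R}$ versus $\mathbb{C}$ remark, the Hilbert--Schmidt example, and the dimension sanity check).
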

\begin{proof}
By \cite[Corollary 2.16]{Fulton1991}, 
\bea
b_{g} &=& \frac{1}{2^{\bar{Q}}}\sum_{S\in\mathbf{S}}\Tr(\Ad_{S})(-1)^{\sigma_{g}(S)} = 2^{2n-\bar{Q}}(\dim(\mathcal{H}_{B}))^{2}  \notag \\
&=& 2^{2k+\bar{Q}}(\dim(\mathcal{H}_{B}))^{2}
\eea
under the assumption that only $S = \identity$ has non-zero trace and $\Tr(\identity) = \dim(\mathcal{H}) = 2^{n}\dim(\mathcal{H}_{B})$, so that $\Tr(\Ad_{\identity}) = [\Tr(\identity)]^{2} = 2^{2n}\dim(\mathcal{H}_{B})^{2}$.

If $A_{g}\in W_{g}$ and $A_{h}\in W_{h}$, then, using the $\Ad(\mathbf{S})$-invariance of the inner product (invariance with respect to conjugation by $\mathbf{S}$), we get
\bea
\langle A_{g},A_{h}\rangle &=& \langle SA_{g}S, S A_{h}S\rangle = (-1)^{\sigma_{g}(S) + \sigma_{h}(S)}\langle A_{g}, A_{h}\rangle \notag \\
&=& (-1)^{\sigma_{gh}(S)}\langle A_{g}, A_{h}\rangle
\eea
for any $S\in\mathbf{S}$.  If $g\neq h$, Lemma \ref{lem:homomorphismProperties} shows that there exists an $S\in\mathbf{S}$ such that $\sigma_{gh}(S) = 1$.  It must therefore hold that $\langle A_{g}, A_{h}\rangle = 0$, so $W_{g}$ and $W_{h}$ are orthogonal subspaces.
\end{proof}


\section{Solving Linear Recurrences}
\label{app:solvingLinearRecurrences}
In this appendix, we review a simple method for solving a class of second order inhomogeneous linear recurrences with constant coefficients.  Let $f_{k}-a_{1}f_{k-1}-a_{2}f_{k-2} = g_{k}$ be the linear recurrence with inhomogeneity $g_{k} = b c^{k}$ for real numbers $a_{1}$, $a_{2}$, $b$, and $c$, and given initial conditions $f_{0}$ and $f_{1}$.  Define the characteristic polynomial to be $p(x) = x^{2}-a_{1}x-a_{2}$.  The class of problems we will consider are those for which the characteristic polynomial admits two distinct roots, i.e., where $a_{1}^{2} + 4a_{2} > 0$, and where $c$ is different from both roots.  As in the theory of linear differential equations, we solve such problems by finding all solutions to the homogeneous problem $f_{k}-a_{1}f_{k-1}-a_{2}f_{k-2} = 0$, then seeking a particular solution for the inhomogeneous problem, and finally identifying within the resultant affine space of solutions, the unique solution that satisfies the initial conditions.  To that end, let 
\begin{equation}
	\vec{w}^{k} = \begin{bmatrix}f_{k+1}\\f_{k}\end{bmatrix},\quad \text{ and } \quad A = \begin{bmatrix}a_{1} & a_{2}\\1 & 0\end{bmatrix}.
\end{equation}
Then the homogeneous linear recurrence may be written $\vec{w}^{k} = A\vec{w}^{k-1} = A^{k}\vec{w}^{0}$.  It may be noted that the characteristic polynomial of the matrix $A$ is identical to the characteristic polynomial for the recurrence, $p(x)$, defined above.  The assumption that $p$ have distinct roots, then implies that $A$ has distinct eigenvalues $\lambda_{\pm}$, with associated eigenvectors $\vec{v}^{\pm}$.  It follows that $\vec{w}^{0}$ can be decomposed as $\vec{w}^{0} = \alpha \vec{v}^{+} + \beta\vec{v}^{-}$, yielding $\vec{w}^{k} = A^{k}\vec{W}^{0} = \alpha\lambda_{+}^{k}\vec{v}^{+} + \beta\lambda_{-}^{k}\vec{v}^{-}$.  Then $f_{k} = w^{k}_{2} = \alpha v^{+}_{2}\lambda_{+}^{k} + \beta v^{-}_{2}\lambda_{-}^{k}$, so the solutions to the homogeneous recurrence are of the form $f_{k} = \gamma_{+}\lambda_{+}^{k} + \gamma_{-}\lambda_{-}^{k}$ for some coefficients $\gamma_{\pm}$.

For the particular solution to the inhomogeneous problem, let us postulate an ansatz of $f_{k} = b' c^{k}$.  Plugging this into the recurrence and dividing by $c^{k-2}$ yields $b'(c^{2} - a_{1}c - a_{2}) = b c^{2}$.  Since by assumption $c$ is not a root of the characteristic polynomial, $c^{2}-a_{1}c-a_{2}\neq 0$, so $f_{k} = b'c^{k}$ with $b' = b c^{2}/(c^{2} - a_{1}c - a_{2})$ is a particular solution to the inhomogeneous problem.  Then the affine space of all solutions to the inhomogeneous problem is given by 
\begin{equation}
	f_{k} = \gamma_{+}\lambda_{+}^{k} + \gamma_{-}\lambda_{-}^{k} + \frac{b c^{k+2}}{c^{2}-a_{1}c-a_{2}}
\end{equation}
for coefficients $\gamma_{\pm}$.  These two coefficients may now be determined from the given initial conditions $f_{0}$ and $f_{1}$ by solving the system of equations 
\begin{subequations}
\begin{align}
	f_{0} & = \gamma_{+} + \gamma_{-} + \frac{b c^{2}}{c^{2}-a_{1}c-a_{2}}\\
	f_{1} & = \gamma_{+}\lambda_{+} + \gamma_{-}\lambda_{-} + \frac{b c^{3}}{c^{2}-a_{1}c-a_{2}},
\end{align}
\end{subequations}
which may be rewritten
\begin{equation}
	\begin{bmatrix}1 & 1\\\lambda_{+} & \lambda_{-}\end{bmatrix}\begin{bmatrix}\gamma_{+}\\\gamma_{-}\end{bmatrix} = \begin{bmatrix}f_{0} - \frac{bc^{2}}{c^{2}-a_{1}c-a_{2}}\\f_{1} - \frac{bc^{3}}{c^{2}-a_{1}c-a_{2}}\end{bmatrix},
\end{equation}
yielding
\begin{equation}
	\gamma_{\pm} = \pm \frac{1}{\lambda_{+}-\lambda_{-}}\left(f_{1}-\lambda_{\mp}f_{0} - (c-\lambda_{\mp})\frac{bc^{2}}{c^{2}-a_{1}c-a_{2}}\right).
\end{equation}


\section{Inhomogeneity of the Linear Recurrence}
\label{app:linearRecurrenceInhom}

We seek a simple form for the inhomogeneous term in the linear recurrence for the sum $\phi(M)$ \eqref{eqn:rawmultisums}.  This requires summing both sides of \eqref{eqn:summandRecurrence} and using the fact that the right hand side leads to telescoping sums: $\sum_{k=1}^{n}\Delta_{k}X(k) = X(n+1)-X(1)$.  Since the limits of summation are different for $\phi(M-2)$, $\phi(M-1)$, and $\phi(M)$, we will sum \eqref{eqn:summandRecurrence} over $\eta,r=1,\dots, M$ and $u=1,\dots,M-2$, and add in the remaining pieces of $\phi(M-1)$ and $\phi(M)$ separately.
\begin{widetext}
\begin{subequations}
\begin{align}
(1 + & \beta + Q \beta) \xi \phi(M-2) - \big(1 + \beta + (1+Q\beta)\xi\big) \phi(M-1) + \phi(M)\nonumber\\
& = \sum_{u=1}^{M-2}\sum_{\eta=1}^{M}\sum_{r=1}^{M}\Big[\xi (1 + \beta + Q \beta)\Phi(M-2,u,\eta,r)-\big(1 + \beta + (1 + Q\beta)\xi\big) \Phi(M-1,u,\eta,r)+\Phi(M,u,\eta,r)\Big]\nonumber\\
& \qquad + \sum_{\eta,r=1}^{M}\Big[-\big(1 + \beta + (1+Q\beta)\xi\big)\Phi(M-1,M-1,\eta,r) + \Phi(M,M-1,\eta,r) + \Phi(M,M,\eta,r)\Big]\\
& = \sum_{\eta=1}^{M}\sum_{r=1}^{M}\Big[\beta \Phi(M-1, M-1, \eta, r+1) + \Phi(M-1, M-1, \eta+1, r+1) - \Phi(M, M-1, \eta+1, r+1)\Big] \nonumber\\
& \qquad + \sum_{\eta=1}^{M}\sum_{u=1}^{M-2}\Big[(1 + \beta) \xi \Phi(M-2, u, \eta, 1) - (1 + \beta + \xi) \Phi(M-1, u, \eta, 1) + \Phi(M, u, \eta, 1)\Big] \nonumber\\
& \qquad + \sum_{\eta=1}^{M}\sum_{r=1}^{M}\Big[-\big(1 + \beta + (1+Q\beta)\xi\big)\Phi(M-1,M-1,\eta,r) + \Phi(M,M-1,\eta,r) + \Phi(M,M,\eta,r)\Big]\\
& = Q\sum_{u=1}^{M-2}\sum_{\eta=1}^{M}\left[(1 + \beta) \beta^{\eta-1}\xi^{u+1}\eta \binom{M-u-2}{\eta-1} - (1 + \beta + \xi) \beta^{\eta-1}\xi^{u}\eta\binom{M-u-1}{\eta-1} + \beta^{\eta-1}\xi^{u}\eta\binom{M-u}{\eta-1}\right] \nonumber\\
& \qquad + \left[- \sum_{\eta=2}^{M-1}\beta^{\eta}\xi^{M-1}Q^{\eta}\eta\binom{M-2}{\eta-1}\right]+\left[-\big(\beta + (1+Q\beta)\xi\big)\sum_{\eta=1}^{M-1}\beta^{\eta-1}\xi^{M-1}Q^{\eta}\binom{M-2}{\eta-1}\right.\nonumber\\
& \qquad    + \left.\sum_{\eta=2}^{M}\beta^{\eta-1}\xi^{M-1}Q^{\eta-1}\eta\binom{M-2}{\eta-2} + \sum_{\eta = 1}^{M}\beta^{\eta-1}\xi^{M}Q^{\eta}\binom{M-1}{\eta-1}\right]\\
& = Q\sum_{u=1}^{M-2}\beta(1+\beta)^{M-u-2}\xi^{u}(1+\beta-\xi) + Q \beta \xi^{M-1}
 = Q\beta(1+\beta)^{M-2}\xi.
\end{align}
\end{subequations}


\section{Asymptotic Analysis of the Distance Bound}
\label{app:asymptoticAnalysis}
Expand $\beta = \beta(M)$ as
\begin{equation}\beta(M) = \frac{X}{M} + \frac{Y}{M^{2}} + O\left(\frac{1}{M^{3}}\right)\end{equation}
for some constants $X\geq0$ and $Y\geq0$.  Then
\begin{subequations}
\begin{align}
	(1+\beta)^{2} & = 1 + \frac{2X}{M} + \frac{X^{2} + 2Y}{M^{2}} + O\left(\frac{1}{M^{3}}\right)\\
	1+\beta + (1+Q\beta)\xi & = (1+\xi) + (1+Q\xi)\beta = (1+\xi) + \frac{(1+Q\xi)X}{M} + \frac{(1+Q\xi)Y}{M^{2}} + O\left(\frac{1}{M^{3}}\right)\\
	1+\beta - (1+Q\beta)\xi & = (1-\xi) + (1-Q\xi)\beta = (1-\xi) + \frac{(1-Q\xi)X}{M} + \frac{(1-Q\xi)Y}{M^{2}} + O\left(\frac{1}{M^{3}}\right)\\
	\big(1+\beta - (1+Q\beta)\xi\big)^{2} + 4 Q\xi \beta^{2} & = (1-\xi)^{2} + \frac{2(1-\xi)(1-Q\xi)X}{M} + \frac{(1+Q\xi)^{2}X^{2}+2(1-\xi)(1-Q\xi)Y}{M^{2}} +  O\left(\frac{1}{M^{3}}\right).
\end{align}
\end{subequations}
Then using the fact that
\begin{equation}\sqrt{a+x}  = \sqrt{a} + \frac{x}{2\sqrt{a}} - \frac{x^{2}}{8 a^{3/2}} + O(x^{3})\end{equation}
we can compute
\begin{subequations}
\begin{align}	
	\sqrt{\big(1+\beta - (1+Q\beta)\xi\big)^{2} + 4 Q\xi \beta^{2}} & = (1-\xi) + \frac{X(1-Q\xi)}{M} + \left[Y(1-Q\xi)+ 2QX^{2}\frac{\xi}{(1-\xi)}\right]\frac{1}{M^{2}} + O\left(\frac{1}{M^{3}}\right)\\
	\gamma_{+} & = 1 + \frac{X}{M} + \left[Y + QX^{2}\frac{\xi}{(1-\xi)}\right]\frac{1}{M^{2}} + O\left(\frac{1}{M^{3}}\right)\\
	\gamma_{-} & = \xi + \frac{QX\xi}{M} + \left[QY\xi - QX^{2}\frac{\xi}{(1-\xi)}\right]\frac{1}{M^{2}} + O\left(\frac{1}{M^{3}}\right)
\end{align}
\end{subequations}
which yields
\begin{subequations}
\begin{align}
	\gamma_{+}(\gamma_{+}-\gamma_{-}) & = \gamma_{+}\sqrt{\big(1+\beta - (1+Q\beta)\xi\big)^{2} + 4 Q\xi \beta^{2}} = (1-\xi) + \frac{X\big(2-(Q+1)\xi\big)}{M} + O\left(\frac{1}{M^{2}}\right)\\
	\gamma_{-}(\gamma_{-}-\gamma_{+}) & = -\gamma_{-}\sqrt{\big(1+\beta - (1+Q\beta)\xi\big)^{2} + 4 Q\xi \beta^{2}} = -\xi(1-\xi) - \frac{X\xi(Q+1-2Q\xi)}{M} + O\left(\frac{1}{M^{2}}\right).
\end{align}
\end{subequations}
And using the expansion
\begin{equation}	
	\frac{1}{a+x} = \frac{1}{a} - \frac{x}{a^{2}} + \frac{x^{2}}{a^{3}} + O(x^{3})\end{equation}
it may be seen that
\begin{subequations}
\begin{align}
	\frac{1}{\gamma_{+}(\gamma_{+}-\gamma_{-})} & = \frac{1}{1-\xi} - \frac{X\big(2-(Q+1)\xi\big)}{(1-\xi)^{2}M} + O\left(\frac{1}{M^{2}}\right)\\
	\frac{1}{\gamma_{-}(\gamma_{-}-\gamma_{+})} & = -\frac{1}{\xi(1-\xi)} + \frac{X(Q+1-2Q\xi)}{\xi(1-\xi)^{2}M} + O\left(\frac{1}{M^{2}}\right).
\end{align}
\end{subequations}
Moreover, 
\begin{subequations}
\begin{align}
	Q\beta\xi(\gamma_{+}+\beta) & = \frac{QX\xi}{M} + O\left(\frac{1}{M^{2}}\right)\\
	Q\beta\xi(\gamma_{-}+\beta) & = \frac{QX\xi^{2}}{M} + O\left(\frac{1}{M^{2}}\right)\\
	(1+\beta)\big[(1+\beta)-\gamma_{+}\big] & = O\left(\frac{1}{M^{2}}\right)\\
	(1+\beta)\big[(1+\beta)-\gamma_{-}\big] & = (1-\xi) + \frac{X(2-(Q+1)\xi)}{M} + O\left(\frac{1}{M^{2}}\right),
\end{align}
\end{subequations}
so, assembling the pieces, we get
\begin{subequations}
\begin{align}
	\frac{\beta A_{+}}{\gamma_{+}} & = 1 + \frac{QX\xi}{(1-\xi)M} + O\left(\frac{1}{M^{2}}\right)\\
	\frac{\beta A_{-}}{\gamma_{-}} & = -\frac{QX\xi}{(1-\xi)M} + O\left(\frac{1}{M^{2}}\right).
\end{align}
\end{subequations}
Turning now to the expressions $\gamma_{\pm}^{M}$ and $(1+\beta)^{M}$, observe that 
\begin{subequations}
\begin{align}
	e^{-\frac{X}{M}} & = 1 - \frac{X}{M} + \frac{X^{2}}{2M^{2}} + O\left(\frac{1}{M^{3}}\right)\\
	\gamma_{+}e^{-\frac{X}{M}} & = 1 + \left[Y + X^{2}\left(\frac{Q\xi}{1-\xi} - \frac{1}{2}\right)\right]\frac{1}{M^{2}} + O\left(\frac{1}{M^{3}}\right)\\
	\gamma_{+}^{M}e^{-X} & = 1 + \left[Y + X^{2}\left(\frac{Q\xi}{1-\xi} - \frac{1}{2}\right)\right]\frac{1}{M} + O\left(\frac{1}{M^{2}}\right)\\
	\gamma_{-}e^{-\frac{QX}{M}}\xi^{-1} & = 1 + \left[QY - X^{2}\left(\frac{Q}{1-\xi} + \frac{Q^{2}}{2}\right)\right]\frac{1}{M^{2}} + O\left(\frac{1}{M^{3}}\right)\\
	\gamma_{-}^{M}e^{-QX}\xi^{-M} & = 1 + \left[QY - X^{2}\left(\frac{Q}{1-\xi} + \frac{Q^{2}}{2}\right)\right]\frac{1}{M} + O\left(\frac{1}{M^{2}}\right)\\
	(1+\beta)e^{-\frac{X}{M}} & = 1 +\left[Y - \frac{X^{2}}{2}\right]\frac{1}{M^{2}} + O\left(\frac{1}{M^{3}}\right)\\
	(1+\beta)^{M}e^{-X} & = 1 +\left[Y - \frac{X^{2}}{2}\right]\frac{1}{M} + O\left(\frac{1}{M^{2}}\right)
\end{align}
\end{subequations}
Then,
\begin{subequations}
\begin{align}
	\beta A_{+}\gamma_{+}^{M-1} & = \frac{\beta A_{+}}{\gamma_{+}}\gamma_{+}^{M} =  e^{X}\left\{1 + \left[Y + Q(X+X^{2})\frac{\xi}{1-\xi} - \frac{X^{2}}{2}\right]\frac{1}{M}\right\} + O\left(\frac{1}{M^{2}}\right)\\
	\beta A_{-}\gamma_{-}^{M-1} & = \frac{\beta A_{-}}{\gamma_{-}}\gamma_{-}^{M} = O\left(\frac{\xi^{M+1}}{M}\right)\\
	(1+\beta)^{M} & = e^{X}\left\{1 + \left[Y - \frac{X^{2}}{2}\right]\frac{1}{M}\right\} + O\left(\frac{1}{M^{2}}\right)
\end{align}
\end{subequations}
and therefore
\begin{equation}\phi(M) = \frac{1}{\beta(M)}\left\{\left[Q e^{X}\big(X + X^{2}\big)\frac{\xi}{1-\xi}\right]\frac{1}{M} + O\left(\frac{1}{M^{2}}\right)\right\}.\end{equation}
Then since 
\begin{subequations}
\begin{align}
	\frac{1}{\frac{M}{X}\beta} & = \frac{1}{1 + \frac{Y/X}{M} + O\left(\frac{1}{M^{2}}\right)} = 1 - \frac{Y/X}{M} + O\left(\frac{1}{M^{2}}\right)\\
	\frac{1}{\beta} & = \frac{M}{X} - \frac{Y}{X^{2}} + O\left(\frac{1}{M}\right)
\end{align}
\end{subequations}
and
\begin{equation}\beta(M) = \begin{cases}\Gamma_{\identity}(M) = \frac{1}{Q+1}e^{\frac{\tau J_{0}}{M}}\left(e^{\frac{\tau Q J_{1}}{M}} + Qe^{-\frac{\tau J_{1}}{M}}\right) - 1  = \frac{\tau J_{0}}{M} + \frac{\tau^{2}\big(J_{0}^{2} + Q J_{1}^{2}\big)}{2M^{2}} + O\left(\frac{1}{M^{3}}\right) & J_{0}\geq J_{1}\\
\Gamma_{g}(M) = \frac{1}{Q+1}e^{\frac{\tau J_{0}}{M}}\left(e^{\frac{\tau Q J_{1}}{M}} - e^{-\frac{\tau J_{1}}{M}}\right) = \frac{\tau J_{1}}{M} + \frac{\tau^{2}\big(2J_{0}J_{1} + J_{1}^{2}(Q-1)\big)}{2M^{2}} + O\left(\frac{1}{M^{3}}\right) & J_{0}\leq J_{1},\end{cases}
\end{equation}
it follows that 
\begin{equation}
\frac{\Gamma_{g}(M)}{\beta(M)} = \begin{cases}\frac{J_{1}}{J_{0}} + O\left(\frac{1}{M}\right) & J_{0} \geq J_{1} \\ 1 & J_{0}\leq J_{1}\end{cases}.
\end{equation}
Recalling that $\xi = \zeta^{q}$ the upper bound on the ``weak'' term is given by
\begin{subequations}
\begin{align}
	\|\mc{W}\|_{1} & \leq \Gamma_{g}(M)\phi(M) \notag \\
	&= \frac{\Gamma_{g}(M)}{\beta(M)}\Big[\beta(M) A_{+}(M)\gamma_{+}^{M-1}(M) + \beta(M) A_{-}(M)\gamma_{-}^{M-1}(M) - (1+\beta(M))^{M}\Big]\\
	& = \begin{cases}\left[Qe^{\tau J_{0}}\left(\tau J_{1} + \tau^{2}J_{0}J_{1}\right)\frac{\zeta^{q}}{1-\zeta^{q}}\right]\frac{1}{M} + O\left(\frac{1}{M^{2}}\right) & J_{0} \geq J_{1}\\
&\\
\left[Qe^{\tau J_{1}}\left(\tau J_{1} + \tau^{2}J_{1}^{2}\right)\frac{\zeta^{q}}{1-\zeta^{q}}\right]\frac{1}{M} + O\left(\frac{1}{M^{2}}\right) & J_{0} \leq J_{1},\end{cases}
\end{align}
\end{subequations}
\end{widetext}
and the upper bound 
\begin{subequations}
\begin{align}
	\|B\|_{1} & \leq [1+\Gamma_{\identity}(\tau/M)\big]^{M} - e^{\tau J_{0}}\\
	& = e^{\tau J_{0}}\left[\frac{\tau^{2}Q J_{1}^{2}}{2}\right]\frac{1}{M} + O\left(\frac{1}{M^{2}}\right).
\end{align}
\end{subequations}


\section{Correlation Functions, Spectral Densities, and Bath Norms}
\label{app:correlationFunction}

Consider the Hamiltonians $H_{SB}=\sum S_{\alpha }\otimes B_{\alpha }$ and $H_{B}=I\otimes B_{0}$. The pure-bath unitary evolution operator is $U_{B}(t)=\exp (-itB_{0})$. The bath-interaction picture bath operators are 
\begin{equation}
B_{\alpha }(t)=U_{B}(t)B_{\alpha }(0)U_{B}^{\dagger }(t),\ B_{\alpha
}(0)=B_{\alpha }.
\end{equation}
Equivalently
\begin{subequations}
\begin{eqnarray}
	\dot{B}_{\alpha } &=&-i[B_{0},B_{\alpha }(t)],\ B_{\alpha }(0)=B_{\alpha } \\
	\dot{B}_{\alpha }^{\dagger } &=&-i[B_{0},B_{\alpha }^{\dagger }(t)],\ B_{\alpha }^{\dagger }(0)=B_{\alpha }^{\dagger }.
\end{eqnarray}
\end{subequations}
It follows that after differentiating $n$ times wrt $t$:
\begin{equation}
	\frac{\partial ^{n}}{\partial t^{n}}B_{\alpha }^{\dagger}(t)=(-i)^{n}[_{n}B_{0},B_{\alpha }^{\dagger }(t)],
\end{equation}
where $[_{n}B,A]:=[B,[_{n-1}B,A]]$, $[_{0}B,A]:=A$.

Now consider the correlation function
\begin{equation}
	\langle B_{\alpha }^{\dagger }(t)B_{\beta }\rangle :=\text{Tr}[\varrho_{B}B_{\alpha }^{\dagger }(t)B_{\beta }],
\end{equation}
where $\varrho_{B}$ is the initial bath state. The bath spectral density is
the Fourier transform 
\begin{equation}
	S_{\alpha \beta }(\omega )=\frac{1}{2\pi }\int_{-\infty }^{\infty}e^{i\omega t}\langle B_{\alpha}^{\dagger }(t)B_{\beta }\rangle \,\rmd t ,
\end{equation}
so that
\begin{equation}
	\langle B_{\alpha }^{\dagger }(t)B_{\beta }\rangle = \int_{-\infty }^{\infty }e^{-i\omega t}S_{\alpha \beta }(\omega ) \,\rmd\omega .
\end{equation}
Differentiating both sides $n$ times yields
\begin{equation}
	\langle \lbrack _{n}B_{0},B_{\alpha }^{\dagger }(t)]B_{\beta }\rangle=\int_{-\infty }^{\infty }\omega ^{n}e^{-i\omega t}S_{\alpha \beta }(\omega)\,\rmd\omega ,
\end{equation}
and in particular at $t=0$:
\begin{equation}
	\int_{-\infty }^{\infty }\omega ^{n}S_{\alpha \beta }(\omega )\,\rmd\omega
=\langle \lbrack _{n}B_{0},B_{\alpha }^{\dagger }]B_{\beta }\rangle .
\end{equation}

This last result allows us to bound the spectral densities in terms of the bath operator norms. One can show that $|\langle AB\rangle |\leq \left\Vert
A\right\Vert \left\Vert B\right\Vert $, where the norm is the sup-operator
norm \cite[Appendix D]{UL:10}. Thus
\bes
\begin{eqnarray}
	\left\vert \int_{-\infty }^{\infty }\omega ^{n}S_{\alpha \beta }(\omega)\,\rmd\omega \right\vert  &=&\left\vert \langle \lbrack _{n}B_{0},B_{\alpha}^{\dagger }]B_{\beta }\rangle \right\vert \\
	&\leq& \left\Vert \lbrack_{n}B_{0},B_{\alpha }^{\dagger }]\right\Vert \left\Vert B_{\beta}\right\Vert  \\
	&\leq& (2\left\Vert B_{0}\right\Vert )^{n}\left\Vert B_{\alpha }^{\dagger }\right\Vert\left\Vert B_{\beta }\right\Vert .
\end{eqnarray}
\ees
Thus, as long as $\|B_{0}\|$ and $\|B_{\alpha}\|$ are finite for all $\alpha$, the bath spectral densities $S_{\alpha\beta}(\omega)$ must decay faster than any rational function as $|\omega|\to\infty$.  Conversely, divergence of any of the moments of the spectral density implies the divergence of at least one of the bath operators. For example, all the $n\geq 1$ moments of a Lorentzian spectral density $S(\omega) = \frac{\gamma}{(\omega-\omega_0)^2+\gamma^2}$ diverge. This spectral density arises from exponentially decaying correlation functions, i.e., its Fourier transform is $F^{(0)}(t) \propto \exp(-it\omega_0-\gamma |t|)$.

\end{document}